\tikzset{
    pos plot label/.style 2 args={
        postaction={
            decorate,
            decoration={
                markings,
                mark=at position #1 with \node#2;
            }
        }
    }
}
\tikzset{
    scale plot marks/.is choice,
    scale plot marks/false/.code={
        \def\pgfuseplotmark##1{\pgftransformresetnontranslations\csname pgf@plot@mark@##1\endcsname}
    },
    scale plot marks/true/.style={},
    scale plot marks/.default=true
}
\newcommand{\es}{\emptyset}
\newcommand{\dt}{\bullet}
\newcommand{\leer}{\varepsilon}
\newcommand{\nsymbol}{\mathbb{N}}
\newcommand{\zsymbol}{\mathbb{Z}}
\newcommand{\goesto}{\rightarrow}
\renewcommand{\goesto}[1]{\stackrel{#1}{\longrightarrow}}
\newcommand{\impl}{\Rightarrow}
\newcommand{\Parikh}{\Psi}
\newcommand{\EndSy}{\hfill\protect\makebox[1.0em][c]{
\protect\setlength{\unitlength}{0.2em}
\protect\begin{picture}(3,3)(0,0)
        \begin{thinlines}
\protect\put(0,0){\line(1,0){3}}
\protect\put(0,0){\line(0,1){3}}
\protect\put(0,3){\line(1,0){3}}
\protect\put(3,0){\line(0,1){3}}
        \end{thinlines}
\protect\end{picture}
\protect\setlength{\unitlength}{1mm}
}}
\newif\ifcomment
\definecolor{gris}{gray}{0.3}
\newcommand{\BX}[1]{{\unskip\nobreak\hfil\penalty50
                    \hskip2em\hbox{}\hfil
\EndSy\/ {{\rm #1}}
                    \parfillskip=0pt \finalhyphendemerits=0 \par
                   }}
\newcommand{\DEF}[2]{\goodbreak\begin{definition}
                     \label{#1}\begin{rm}{\sc #2}

                    }
\newcommand{\ENDDEF}[1]{\BX{\ref{#1}}
                        \end{rm}\end{definition}
                       }
\newcommand{\ENXDEF}{\end{rm}\end{definition}}
\newcommand{\KRYPT}[2]{\goodbreak\begin{kryptosystem}
                     \label{#1}\begin{rm}{\sc #2}

                    }
\newcommand{\ENDKRYPT}[1]{\BX{\ref{#1}}
                        \end{rm}\end{kryptosystem}
                       }
\newcommand{\KOR}[2]{\goodbreak\begin{corollary}
                     \label{#1}{\sc #2}

                    }
\newcommand{\ENDKOR}[1]{\BX{\ref{#1}}
\end{corollary}
                       }
\newcommand{\ENXKOR}{
\end{corollary}}
\newcommand{\PROP}[2]{\goodbreak\begin{proposition}
                      \label{#1}{\sc #2}

                     }
\newcommand{\ENDPROP}{
\end{proposition}} % Dito
\newcommand{\ENXPROP}[1]{\BX{\ref{#1}}
\end{proposition}} % Dito
\newcommand{\THEO}[2]{\goodbreak\begin{theorem}
                     \label{#1}{\sc #2}

                    }
\newcommand{\ENDTHEO}{
\end{theorem}}
\newcommand{\SATZ}[2]{\goodbreak\begin{theorem}
                     \label{#1}{\sc #2}

                    }
\newcommand{\ENDSATZ}{
\end{theorem}}
\newcommand{\ENXSATZ}[1]{\BX{\ref{#1}}
\end{theorem}}
\newcommand{\LEM}[2]{\goodbreak\begin{lemma}
                     \label{#1}{\sc #2}

                    }
\newcommand{\ENDLEM}{
\end{lemma}}
\newcommand{\ENXLEM}[1]{\BX{\ref{#1}}
\end{lemma}}
\newcommand{\NOT}[2]{\goodbreak\begin{notation}
                     \label{#1}\begin{rm}{\sc #2}

                    }
\newcommand{\ENDNOT}[1]{\BX{\ref{#1}}
                        \end{rm}\end{notation}
                       }
\newcommand{\ENXNOT}{\end{rm}\end{notation}}
\newcommand{\REM}[2]{\goodbreak\begin{remark}
                     \label{#1}\begin{rm}{\sc #2}
                    }
\newcommand{\ENDREM}[1]{\BX{\ref{#1}}
                        \end{rm}\end{remark}
                       }
\newcommand{\ENXREM}{\end{rm}\end{remark}}
\newcommand{\BSP}[2]{\goodbreak\begin{beispiel}
                     \label{#1}\begin{rm}{\sc #2}

                    }
\newcommand{\ENDBSP}[1]{\BX{\ref{#1}}
                        \end{rm}\end{bespiel}
                       }
\newcommand{\is}{\iota}
\newcommand{\drop}[1]{}
\newcounter{exampleno}
\newcommand{\lts}{{\em LTS}}
\newcommand{\TS}{\mathit{TS}}
\newcommand{\PN}{\mathit{PN}}
\newcommand{\adj}{\mathit{adj}}
\newcommand{\art}{\mathit{articul}}
\newcounter{exampleTScounter}
\newcommand{\TSref}[1]{\TS_{\ref{#1}}}
\newcommand{\PNref}[1]{\PN_{\ref{#1}}}
\newcommand{\bk}{\mathbb{B}}\newcommand{\fw}{\mathbb{F}}
\begin{document}

\setcounter{page}{1}
\publyear{2021}
\papernumber{2080}
\volume{183}
\issue{1-2}

 \finalVersionForARXIV
 %%%\finalVersionForIOS

\title{Articulations and Products of Transition Systems\\ and their Applications to Petri Net Synthesis}

\author{Raymond Devillers\thanks{Address  for correspondence: D\'epartement d'Informatique,
                   Universit\'e Libre de Bruxelles, Boulevard du Triomphe, C.P. 212, B-1050 Bruxelles, Belgium}
 \\
D\'epartement d'Informatique \\
Universit\'e Libre de Bruxelles \\
Boulevard du Triomphe, C.P. 212, B-1050 Bruxelles, Belgium\\
rdevil@ulb.ac.be
}

\maketitle

\runninghead{R. Devillers}{Articulations and Products of Transition Systems and their Applications to Petri Net Synthesis}

\begin{abstract}
In order to speed up the synthesis of Petri nets from labelled transition systems, a divide and conquer strategy consists
in defining  decompositions of labelled transition systems, such that each component is synthesisable iff so is the original system.
Then corresponding Petri Net composition operators are searched to combine the solutions of the various components
 into a solution of the original system.
The paper presents two such techniques, which may be combined: products and articulations.
They may also be used to structure transition systems,
and to analyse the performance of synthesis techniques when applied to such structures.
\end{abstract}

\begin{keywords}
labelled transition systems; composition; decomposition; Petri net synthesis.
\end{keywords}

\section{Introduction}\label{intro.sec}

Instead of analysing a given system to check if it satisfies a set of desired properties,
the synthesis approach tries to build a system ``correct by construction'' directly from those properties.
In particular, more or less efficient algorithms have been developed to build a bounded Petri net
(possibly of some subclass, called a PN-solution)
with a reachability graph isomorphic to (or close to) a given finite labelled transition system
\cite{bbd,besdev-lata,EB-US-15,BDS-acta17,US-Lata18}.

The synthesis problem is usually polynomial in terms of the size of the LTS, with a degree between 2 and 7
depending on the subclass of Petri nets one searches for \cite{bbd,BBD95,BDS-acta17,besdev-lata},
but can also be NP-complete~\cite{BBD97}.
Hence the interest to apply a ``divide and conquer'' synthesis strategy when possible.
The general idea is to decompose the given LTS into components, to synthesise each component separately
and then to recombine the results in such a way to obtain a solution to the global problem.
We thus have to find a pair of operators, one acting on transition systems and a corresponding one acting on Petri nets,
such that a composed LTS has a PN-solution if and only if so are its components,
and a possible solution is given by the application of the Petri net operator applied to solutions of the components.
It is also necessary to be able to  rapidly decompose a given LTS, or to state it is not possible.
This is summarised in Figure~\ref{divide.fig}.

\begin{figure}[hbt]
\vspace*{1mm}
\begin{center}
{$\TS=\TS_1 \;\mathbf{op_{TS}}\; \TS_2$ \\[1em]}
{$\TS$ PN-solvable $\iff$ $\TS_1$ and $\TS_2$ PN-solvable \\[1em]}
{$sol(\TS)= sol(\TS_1) \;\mathbf{op_{PN}}\; sol(\TS_2)$ \\[1em]}
{$\TS$ $\impl$ discovering of $\TS_1$ and $\TS_2$ \\[1em]}
\end{center}\vspace*{-4mm}
\caption{Divide and conquer strategy for synthesis}
\label{divide.fig}
\end{figure}

We shall here present two such strategies, which have been introduced recently and may be combined efficiently.
The first one is given by the  disjoint products of LTS,
which correspond to disjoint sums of Petri nets~\cite{devacta17,fact18},
and the second one is given by articulations on a state of a transition system
and on a non-dominated reachable marking of a Petri net~\cite{RD-articul-PN},
which may occur in various forms (choice, sequence, loop).

\medskip
An example of their mixed usage is illustrated in Figure~\ref{seq.fig}, where articulations are instantiated in their sequence form, and one of the components has a product form which was not apparent initially.
Not only this allows to simplify the Petri net synthesis, if needed, but also this allows to exhibit an interesting internal structure
for complex systems which could otherwise be considered as ``spaghetti-like''.

\begin{figure}[hbt]
\vspace*{-2mm}
\begin{center}
\begin{tikzpicture}[scale=0.7]
\node[]at(1,-2){$\TS(x)$};
\node[circle,fill=black!100,inner sep=0.07cm](s0)at(0,0)[label=left:$\is$]{};
\node[circle,fill=black!100,inner sep=0.05cm](s1)at(2,0)[label=right:$f$]{};
\draw[-latex](s0)--node[auto,swap,inner sep=1.5pt,pos=0.5]{$x$}(s1);
\end{tikzpicture}\hspace{0.4cm}
\begin{tikzpicture}[scale=0.7]
\node[]at(4,-2){$\TS=\TS(start);(\TS(a)\otimes\TS(b));\TS(end)$};
\node[circle,fill=black!100,inner sep=0.07cm](s0)at(0,0)[label=left:$\is$]{};
\node[circle,fill=black!100,inner sep=0.05cm](s1)at(2,0)[label=below:$s_1$]{};
\draw[-latex](s0)--node[auto,swap,inner sep=1.5pt,pos=0.5]{$start$}(s1);
\node[circle,fill=black!100,inner sep=0.05cm](s2)at(4,-1)[label=below:$s_2$]{};
\node[circle,fill=black!100,inner sep=0.05cm](s3)at(4,1)[label=above:$s_3$]{};
\node[circle,fill=black!100,inner sep=0.05cm](s4)at(6,0)[label=below:$s_4$]{};
\draw[-latex](s1)--node[auto,swap,inner sep=1.5pt,pos=0.5]{$a$}(s2);
\draw[-latex](s1)--node[auto,swap,inner sep=1.5pt,pos=0.5]{$b$}(s3);
\draw[-latex](s2)--node[auto,swap,inner sep=1.5pt,pos=0.5]{$b$}(s4);
\draw[-latex](s3)--node[auto,swap,inner sep=1.5pt,pos=0.5]{$a$}(s4);
\node[circle,fill=black!100,inner sep=0.05cm](s5)at(8,0)[label=right:$f$]{};
\draw[-latex](s4)--node[auto,swap,inner sep=1.5pt,pos=0.5]{$end$}(s5);
\end{tikzpicture}
\end{center}\vspace*{-7mm}
\caption{Combination of sequence operators with a product.}
\label{seq.fig}
\end{figure}
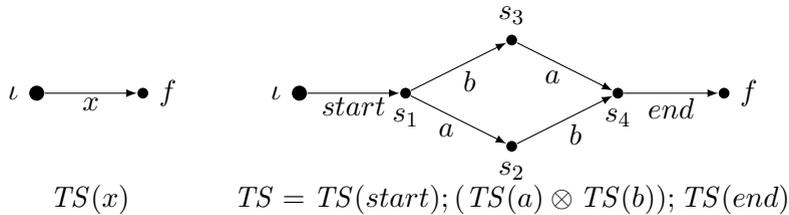

The structure of the paper is as follows.
First, we shall recall the bases of labelled transition systems and Petri nets.
Then, products of transition systems and sums of Petri nets are examined,
followed by articulations around states and markings, as well as the combination of both techniques.
Performance issues are detailed in section~6 and, as usual, the last section concludes.
With respect to previous papers on the subject, sections 5 and 6 are new, as well as some results and proofs (for instance Propositions~\ref{seqdiam.prop}, \ref{adeq.prop}, \ref{constr.prop} and Theorem~\ref{PNsynt.thm});
some small improvements are also scattered all over the rest of the presentation.

\section{Labelled transition systems and Petri nets}
\label{lts.sct}

A classic way for representing the possible (sequential) evolutions of a dynamic system is
through a (labelled) transition system~\cite{arnold94}.

\begin{definition}\label{LTS.def}{\sc Labelled Transition Systems}\\
A {\it labelled transition system} (LTS for short)
with initial state is a tuple $\TS=(S,\to,T,\is)$
with node (or state) set $S$, edge label set $T$, edges $\mathord{\to}\subseteq(S\times T\times S)$,
and an initial state $\is\in S$. We shall denote $s[t\rangle$ for $t\in T$ if there is an arc labelled $t$ from $s$,
$[t\rangle s$ if there is an arc labelled $t$ to $s$,
and $s[\alpha\rangle s'$ if there is a path labelled $\alpha\in T^*$ from $s$ to $s'$.
Such a path will also be called an evolution of the LTS (from $s$ to $s'$); $s'$ is then said reachable from $s$
and the set of states reachable from $s$ is denoted $[s\rangle$.

\medskip
In some proofs, we shall need the following extension of the reachability notion:
for each label $t\in T$ we shall denote by $-{t}$
the corresponding {\it reverse label},
i.e., $s[-{t}\rangle s'$ if $s'[t\rangle s$  (this may also be denoted $s\langle t]s'$).
We shall assume that $-{T}=\{-{t}\mid t\in T\}$ is disjoint from $T$,
and that $-{-{t}}=t$; then $\pm T=T\cup -T$ is the set of all forward and reverse labels.
The general paths $s[\alpha\rangle s'$ for $\alpha\in(\pm T)^*$ are then defined like the forward ones,
and we shall denote by $\langle s\rangle$ the set of states reachable from $s$ through a general path.
If $T'\subseteq T$, we shall denote by $\langle \is\rangle^{T'}$ the set of states reachable from $\is$
with general paths only using labels from $T'$: $\langle\is\rangle^{T'}=\{s'\in S\;|\is[\sigma\rangle s'\mbox{ for some }
\sigma\in(\pm T')^*\}$, as well as the restriction of $\TS$ to this set (the context will indicate which one is used);
similarly, $[\is\rangle^{T'}$ will be the same but with directed paths only.

If $\sigma\in(\pm T)^*$, we shall denote by $\Parikh(\sigma)$ the $T$-indexed vector in $\zsymbol^T$ such that $\forall t\in T:
\Parikh(\sigma)(t)=$ the number of occurrences of $t$ in $\sigma$ minus the  number of occurrences of $-t$ in $\sigma$, i.e.,
the generalised Parikh vector of $\sigma$.
We shall also denote by $s[-\sigma\rangle s'$ the path $s'[\sigma\rangle s$ ran backwards.

In the following we shall only consider finite transition systems, i.e., such that $S$ and $T$ (hence also $\to$) are finite.

Two transition systems $\TS_1=(S_1,\to_1,T,\is_{1})$
and $\TS_2=(S_2,\to_2,T,\is_{2})$  with the same label set $T$
are \text{(state-)isomorphic}, denoted $\TS_1\equiv_T\TS_2$
(or simply $\TS_1\equiv\TS_2$ if $T$ is clear from the context),
if there is a bijection  $\zeta\colon S_1\to S_2$ with $\zeta(\is_{1})=\is_{2}$ and
$(s,t,s')\in\mathord{\to_1}\Leftrightarrow(\zeta(s),t,\zeta(s'))\in\mathord{\to_2}$, for all $s,s'\in S_1$ and $t\in T$.
We shall usually consider LTSs up to isomorphism.

A transition system $\TS$ is said totally reachable if each state is reachable from the initial one: $[\is\rangle=S$.

It is reversible if $\forall s\in [\is\rangle: \is\in[s\rangle$, i.e.,
it is always possible to return to the initial state.\\
It is deterministic if $\forall s,s',s''\in S\;\forall t\in T: s[t\rangle s' \land s[t\rangle s''\impl s'=s''$
and $s'[t\rangle s\land s''[t\rangle s \impl s'=s''$.\\
It is weakly periodic if, for every $\alpha\in T^*$ and
 infinite path $s_1[\alpha\rangle s_2[\alpha\rangle s_3 \cdots$,
either for every $i,j\in\mathbb{N}\colon s_i=s_j$
or for every $i,j\in\mathbb{N}\colon i\neq j \impl s_i\neq s_j$ (the second case is of course excluded for finite transition systems).\QED
\end{definition}

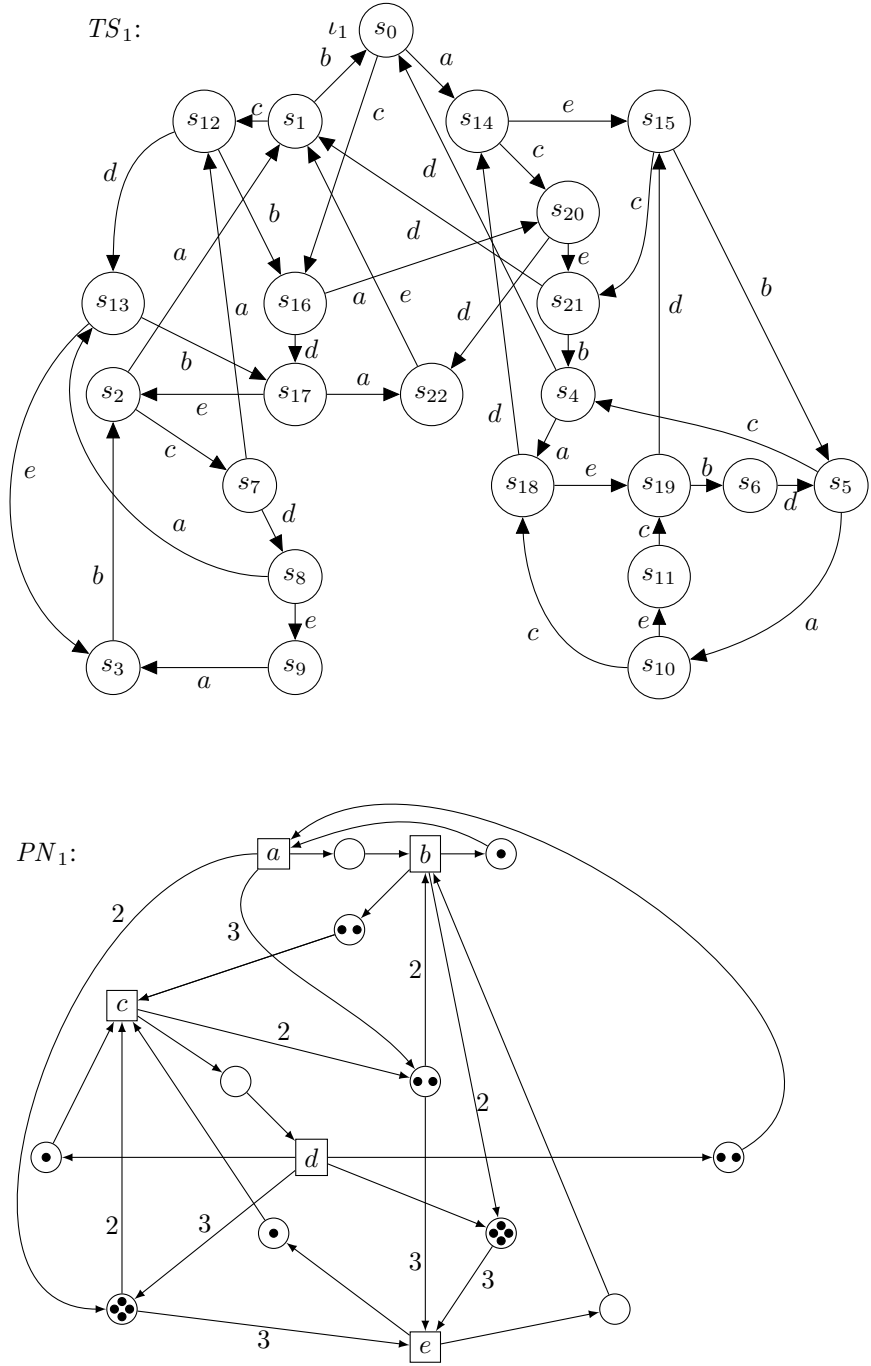
\begin{figure}[htbp]

 A  non-trivial LTS is illustrated on of Figure~\ref{aug21-synet-aut.fig}.

\begin{center}
\vspace*{3mm}
{\small
\begin{tikzpicture}[scale=1.2]
\refstepcounter{exampleTScounter}\label{ts21}
%\draw(-3,10)node{$\TS_{\theexampleTScounter}$:};
\draw(-3,10)node{$\TSref{ts21}$:};
\node[circle,draw,minimum size=0.5cm](s0)at(0,10)[label=left:$\is_{\ref{ts21}}$]{$s_0$};
\node[circle,draw,minimum size=0.5cm](s1)at(-1,9){$s_{1}$};
\node[circle,draw,minimum size=0.5cm](s2)at(-3,6){$s_{2}$};
\node[circle,draw,minimum size=0.5cm](s3)at(-3,3){$s_{3}$};
\node[circle,draw,minimum size=0.5cm](s4)at(2,6){$s_{4}$};
\node[circle,draw,minimum size=0.5cm](s5)at(5,5){$s_{5}$};
\node[circle,draw,minimum size=0.5cm](s6)at(4,5){$s_{6}$};
\node[circle,draw,minimum size=0.5cm](s7)at(-1.5,5){$s_{7}$};
\node[circle,draw,minimum size=0.5cm](s8)at(-1,4){$s_{8}$};
\node[circle,draw,minimum size=0.5cm](s9)at(-1,3){$s_{9}$};
\node[circle,draw,minimum size=0.5cm](s10)at(3,3){$s_{10}$};
\node[circle,draw,minimum size=0.5cm](s11)at(3,4){$s_{11}$};
\node[circle,draw,minimum size=0.5cm](s12)at(-2,9){$s_{12}$};
\node[circle,draw,minimum size=0.5cm](s13)at(-3,7){$s_{13}$};
\node[circle,draw,minimum size=0.5cm](s14)at(1,9){$s_{14}$};
\node[circle,draw,minimum size=0.5cm](s15)at(3,9){$s_{15}$};
\node[circle,draw,minimum size=0.5cm](s16)at(-1,7){$s_{16}$};
\node[circle,draw,minimum size=0.5cm](s17)at(-1,6){$s_{17}$};
\node[circle,draw,minimum size=0.5cm](s18)at(1.5,5){$s_{18}$};
\node[circle,draw,minimum size=0.5cm](s19)at(3,5){$s_{19}$};
\node[circle,draw,minimum size=0.5cm](s20)at(2,8){$s_{20}$};
\node[circle,draw,minimum size=0.5cm](s21)at(2,7){$s_{21}$};
\node[circle,draw,minimum size=0.5cm](s22)at(0.5,6){$s_{22}$};
\draw[-triangle 45](s0)to node[auto]{$a$}(s14);
\draw[-triangle 45](s0)to node[auto,pos=0.2]{$c$}(s16);
\draw[-triangle 45](s1)to node[auto]{$b$}(s0);
\draw[-triangle 45](s1)to node[auto,swap,inner sep=2pt,pos=0.35]{$c$}(s12);
\draw[-triangle 45](s2)to node[auto,pos=0.45]{$a$}(s1);
\draw[-triangle 45](s2)to node[auto,swap,inner sep=2pt,pos=0.5]{$c$}(s7);
\draw[-triangle 45](s3)to node[auto,pos=0.3]{$b$}(s2);
\draw[-triangle 45](s4)to node[auto]{$a$}(s18);
\draw[-triangle 45](s4)to node[auto,pos=0.7]{$d$}(s0);
\draw[-triangle 45](s5)to[out=-90,in=15] node[auto]{$a$}(s10);
\draw[-triangle 45](s5)to [out=150,in=-15]node[auto,swap,inner sep=2pt,pos=0.35]{$c$}(s4);
\draw[-triangle 45](s6)to node[auto,swap,inner sep=2pt,pos=0.35]{$d$}(s5);
\draw[-triangle 45](s7)to node[auto,swap,inner sep=2pt,pos=0.45]{$a$}(s12);
\draw[-triangle 45](s7)to node[auto]{$d$}(s8);
\draw[-triangle 45](s8)to[out=180,in=-130] node[auto,swap,pos=0.3]{$a$}(s13);
\draw[-triangle 45](s8)to node[auto]{$e$}(s9);
\draw[-triangle 45](s9)to node[auto]{$a$}(s3);
\draw[-triangle 45](s10)to[out=180,in=-90] node[auto]{$c$}(s18);
\draw[-triangle 45](s10)to node[auto]{$e$}(s11);
\draw[-triangle 45](s11)to node[auto]{$c$}(s19);
\draw[-triangle 45](s12)to node[auto,pos=0.65]{$b$}(s16);
\draw[-triangle 45](s12)to[out=-160,in=90] node[auto,swap,inner sep=2pt,pos=0.5]{$d$}(s13);
\draw[-triangle 45](s13)to node[auto,swap,inner sep=2pt,pos=0.45]{$b$}(s17);
\draw[-triangle 45](s13)to [out=-140,in=150] node[auto,pos=0.4]{$e$}(s3);
\draw[-triangle 45](s14)to node[auto]{$c$}(s20);
\draw[-triangle 45](s14)to node[auto]{$e$}(s15);
\draw[-triangle 45](s15)to node[auto]{$b$}(s5);
\draw[-triangle 45](s15)to[out=-100,in=15] node[auto,swap,inner sep=2pt,pos=0.35]{$c$}(s21);
\draw[-triangle 45](s16)to node[auto,swap,inner sep=2pt,pos=0.1]{$a$}(s20);
\draw[-triangle 45](s16)to node[auto]{$d$}(s17);
\draw[-triangle 45](s17)to node[auto]{$a$}(s22);
\draw[-triangle 45](s17)to node[auto]{$e$}(s2);
\draw[-triangle 45](s18)to node[auto,pos=0.2]{$d$}(s14);
\draw[-triangle 45](s18)to node[auto]{$e$}(s19);
\draw[-triangle 45](s19)to node[auto]{$b$}(s6);
\draw[-triangle 45](s19)to node[auto,swap]{$d$}(s15);
\draw[-triangle 45](s20)to node[auto,swap,pos=0.7]{$d$}(s22);
\draw[-triangle 45](s20)to node[auto]{$e$}(s21);
\draw[-triangle 45](s21)to node[auto]{$b$}(s4);
\draw[-triangle 45](s21)to node[auto]{$d$}(s1);
\draw[-triangle 45](s22)to node[auto,swap,pos=0.25]{$e$}(s1);
\end{tikzpicture}\\%[1mm]
\begin{tikzpicture}[scale=1.0]
\draw(-3,3)node{$\PNref{ts21}$:};
\node[draw,minimum size=0.4cm](a)at(0,3){$a$};
\node[draw,minimum size=0.4cm](b)at(2,3){$b$};
\node[draw,minimum size=0.4cm](c)at(-2,1){$c$};
\node[draw,minimum size=0.4cm](d)at(0.5,-1){$d$};
\node[draw,minimum size=0.4cm](z)at(2,-3.5){$e$};
\node[circle,draw,minimum size=0.4cm](p0)at(1,3){};
\node[circle,draw,minimum size=0.4cm](p1)at(-0.5,0){};
\node[circle,draw,minimum size=0.4cm](p2)at(4.5,-3){};
\node[circle,draw,minimum size=0.4cm](p3)at(1,2){};
\filldraw[black](0.9,2)circle(1.5pt);\filldraw[black](1.1,2)circle(1.5pt);
\node[circle,draw,minimum size=0.4cm](p4)at(2,0){};
\filldraw[black](1.9,0)circle(1.5pt);\filldraw[black](2.1,0)circle(1.5pt);
\node[circle,draw,minimum size=0.4cm](p5)at(3,3){};
\filldraw[black](3,3)circle(1.5pt);
\node[circle,draw,minimum size=0.4cm](p6)at(-3,-1){};
\filldraw[black](-3,-1)circle(1.5pt);
\node[circle,draw,minimum size=0.4cm](p7)at(-2,-3){};
\filldraw[black](-2.1,-3)circle(1.5pt);\filldraw[black](-1.9,-3)circle(1.5pt);
\filldraw[black](-2,-3.1)circle(1.5pt);\filldraw[black](-2,-2.9)circle(1.5pt);
\node[circle,draw,minimum size=0.4cm](p8)at(3,-2){};
\filldraw[black](2.9,-2)circle(1.5pt);\filldraw[black](3.1,-2)circle(1.5pt);
\filldraw[black](3,-2.1)circle(1.5pt);\filldraw[black](3,-1.9)circle(1.5pt);
\node[circle,draw,minimum size=0.4cm](p9)at(6,-1){};
\filldraw[black](5.9,-1)circle(1.5pt);\filldraw[black](6.1,-1)circle(1.5pt);
\node[circle,draw,minimum size=0.4cm](p10)at(0,-2){};
\filldraw[black](0,-2)circle(1.5pt);
\draw[-latex](p5)to[out=150,in=20](a);
\draw[-latex](p9)to[out=30,in=40](a);
\draw[-latex](a)to(p0);
\draw[-latex](a)to[out=-135,in=130] node[auto,swap,pos=0.25,inner sep=1pt]{$3$}(p4);
\draw[-latex](a)to[out=180,in=180] node[auto,swap,pos=0.25,inner sep=1pt]{$2$}(p7);
\draw[-latex](p0)to(b);
\draw[-latex](p2)to(b);
\draw[-latex](p4)to node[auto,pos=0.5,inner sep=1pt]{$2$}(b);
\draw[-latex](b)to(p3);
\draw[-latex](b)to(p5);
\draw[-latex](b)to node[auto,pos=0.7,inner sep=1pt]{$\!2$}(p8);
\draw[-latex](p3)to(c);
\draw[-latex](p3)to(c);
\draw[-latex](p6)to(c);
\draw[-latex](p10)to(c);
\draw[-latex](p7)to node[auto,pos=0.25,inner sep=1pt]{$2$}(c);
\draw[-latex](c)to node[auto,pos=0.5,inner sep=1pt]{$2$}(p4);
\draw[-latex](c)to(p1);
\draw[-latex](p1)to(d);
\draw[-latex](d)to(p8);
\draw[-latex](d)to(p6);
\draw[-latex](d)to(p9);
\draw[-latex](d)to node[auto,swap,pos=0.5,inner sep=1pt]{$3$}(p7);
\draw[-latex](p4)to node[auto,swap,pos=0.7,inner sep=1pt]{$3$}(z);
\draw[-latex](p7)to node[auto,swap,pos=0.5,inner sep=1pt]{$3$}(z);
\draw[-latex](p8)to node[auto,pos=0.25,inner sep=1pt]{$3$}(z);
\draw[-latex](z)to(p2);
\draw[-latex](z)to(p10);
\end{tikzpicture}
}
\end{center}\vspace*{-2mm}
\caption{A reversible \lts{}, with a possible Petri net solution.}
\label{aug21-synet-aut.fig}
\end{figure}

\begin{definition}\label{PN.def} {\sc Weighted P/T nets}\\
A (finite, place-transition, arc-weighted) Petri net is a triple $\PN\!=\!(P,T,F)$
such that $P$ is a finite set of places,
$T$ is a finite set of transitions, with $P\cap T=\es$,
$F$ is a flow function $F\colon((P\times T)\cup(T\times P))\to\nsymbol$.
The incidence matrix $C$ of $\PN$ is the member of $\zsymbol^{P\times T}$ such that $\forall p\in P,t\in T:C(p,t)=F(t,p)-F(p,t)$.
The predecessors of a node $x$ form the set ${}^\dt x=\{y|F(y,x)>0\}$.
Symmetrically its successor set is $x^\dt=\{y|F(x,y)>0\}$.

\medskip
A marking is a mapping $M\colon P\to\nsymbol$,
indicating the number of (black) tokens in each place.
Let $M_1$ and $M_2$ be two markings, we shall say that
$M_1$ is dominated by $M_2$ if $M_1\lneqq M_2$, i.e., $M_1$ is distinct from $M_2$ and componentwise not greater.
A Petri net system is a net provided with an initial marking $(P,T,F,M_0)$.
A transition $t\in T$ is enabled by a marking $M$,
denoted by $M\goesto{t}$ or $M[t\rangle$, if for all places $p\in P$, $M(p)\geq F(p,t)$.
If $t$ is enabled at $M$, then $t$ can occur (or fire) in $M$,
leading to the marking $M'$ defined by $M'(p)=M(p)-F(p,t)+F(t,p)$
and denoted by $M\goesto{t}M'$ or $M[t\rangle M'$;
as usual, $[M\rangle$ denotes the set of markings reachable from $M$.
A Petri net system is bounded if, for some integer $k$, $\forall M\in[M_0\rangle\forall p\in P: M[p]\leq k$.
It is safe if $\forall M\in[M_0\rangle\forall p\in P: M[p]\leq 1$ (i.e., no place will ever receive more than 1 token).
It is $k$-safe, for some  known bound $k$, if $\forall M\in[M_0\rangle\forall p\in P: M[p]\leq k$.

Two Petri net systems $N_1=(P_1,T,F_1,M^1_0)$ and $N_2=(P_2,T,F_2,M^2_0)$
with the same transition set $T$ are isomorphic, denoted $N_1\equiv_T N_2$
(or simply $N_1\equiv N_2$ if $T$ is clear from the context),
if there is a bijection $\zeta\colon P_1\to P_2$ such that, $\forall p_1\in P_1,t\in T$:
$M^1_0(p_1)=M^2_0(\zeta(p_1))$, $F_1(p_1,t)=F_2(\zeta(p_1),t)$
and $F_1(t,p_1)=F_2(t,\zeta(p_1))$.

The reachability graph of a Petri net system is the labelled transition system whose initial state is $M_0$,
whose vertices are the reachable markings,
and whose edges are $\{(M,t,M')\mid M\goesto{t}M'\}$. It is finite iff the system is bounded.
Two isomorphic Petri net systems have isomorphic reachability graphs, hence we shall usually consider Petri nets up to isomorphism.
In examples, when it is clear that we have an initial marking, we shall often use the shorter terminology
 ``Petri net''  instead of the longer ``Petri net system''.
A labelled transition system is PN-solvable if it is isomorphic to the reachablility graph of  a Petri net system
(called a possible solution).\QED
\end{definition}%%\ENDDEF{PN.def}

A classical property of reachability graphs of Petri net systems is the following:

\begin{proposition}{\bf (State equation)}\label{steq.prop}\\
In the reachability graph of a Petri net system $(P,T,F,M_0)$, if $\sigma\in(\pm T)^*$ and $M[\sigma\rangle M'$, then\\
$M'=M+C\cdot\Parikh(\sigma)$.
\end{proposition}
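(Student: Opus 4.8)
The plan is to reduce the claim to the case of a single step and then lift it to arbitrary $\sigma$ by induction on the length of $\sigma$.

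First I would verify the statement for a single label. If $t\in T$ and $M[t\rangle M'$, then the firing rule gives, for every place $p\in P$, $M'(p)=M(p)-F(p,t)+F(t,p)=M(p)+C(p,t)$; since $\Parikh(t)$ is the unit vector supported on the coordinate $t$, the product $C\cdot\Parikh(t)$ is exactly the $t$-th column of $C$, so $M'=M+C\cdot\Parikh(t)$. For a reverse label, $M[-t\rangle M'$ means $M'[t\rangle M$, whence the forward case yields $M=M'+C\cdot\Parikh(t)$; as $\Parikh(-t)=-\Parikh(t)$, rearranging gives $M'=M+C\cdot\Parikh(-t)$. This settles every $\sigma\in(\pm T)^*$ of length one.

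Next I would run the induction. For $\sigma=\leer$ we have $M=M'$ and $\Parikh(\leer)=0$, so $M'=M+C\cdot 0$ holds trivially. For the inductive step, write $\sigma=\tau x$ with $x\in\pm T$ and $\tau$ strictly shorter; the path $M[\sigma\rangle M'$ factors through some intermediate marking $M''$ as $M[\tau\rangle M''[x\rangle M'$. The induction hypothesis gives $M''=M+C\cdot\Parikh(\tau)$, and the single-step case gives $M'=M''+C\cdot\Parikh(x)$. Combining these, using that the generalised Parikh vector is additive under concatenation, i.e.\ $\Parikh(\sigma)=\Parikh(\tau)+\Parikh(x)$, and that $v\mapsto C\cdot v$ is linear, I obtain $M'=M+C\cdot\Parikh(\tau)+C\cdot\Parikh(x)=M+C\cdot\Parikh(\sigma)$, as desired.

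There is essentially no hard step here: the argument is pure bookkeeping. The two points that need care are the sign convention for reverse labels (handled above via $\Parikh(-t)=-\Parikh(t)$) and the fact that, although the firing rule is stated only for enabled transitions at genuine (nonnegative) markings, the hypothesis that the path $M[\sigma\rangle M'$ exists guarantees that each intermediate $M''$ is a reachable marking, so enablement is never in question and the state equation may be read off step by step.
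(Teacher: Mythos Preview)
Your proof is correct. The paper itself does not give a proof of this proposition: it is introduced as ``a classical property of reachability graphs of Petri net systems'' and stated without argument, so there is nothing to compare against beyond noting that your induction on the length of $\sigma$, with the single-step case split into forward and reverse labels, is exactly the standard justification one would expect for this well-known fact.
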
% \ENXPROP%{steq.prop}

An immediate observation is that the reachability graph of any Petri net system is totally reachable and deterministic;
 it is also weakly periodic (see the state equation), and finite when bounded.
Hence, if a transition system is not totally reachable or not deterministic, it may not be PN-solvable.
The bottom of Figure~\ref{aug21-synet-aut.fig} illustrates a Petri net system,
which is a possible PN-solution of the transition system given on top of the same figure.
It may happen that a transition system has no PN-solution,
but if it has one, it has many ones, sometimes with very different structures.

\medskip
When linking transition systems and Petri nets, it is useful to introduce regions.

\begin{definition}{\sc Regions}\label{reg.def}\\
A region $(\rho,\bk,\fw)$ of a transition system $\TS=(S,\to,T,\is)$ is
a triple of functions $\rho$ (from states to $\nsymbol$), and $\bk,\fw$ (both from labels to $\nsymbol$),
satisfying the property that for any states $s,s'$ and label $a$:
\[(s,a,s')\text{ is an edge of $\TS$ }\;\;\impl\;\;\rho(s)\geq\bk(a)\land\rho(s')-\rho(s)=\fw(a)-\bk(a)
\]
since this is the typical behaviour of a place $p$ with token count $\rho$ during the firing of $a$
with backward and forward connections $\bk,\fw$ to $p$ (anywhere in $\TS$).
To solve an $\text{SSP}(s_1,s_2)$ (for State Separation Problem, where $s_1\neq s_2$),
we need to find an appropriate region $(\rho,\bk,\fw)$ satisfying $\rho(s_1)\neq\rho(s_2)$,
i.e., separating states $s_1$ and $s_2$.
For an $\text{ESSP}(s,a)$ (for Event-State Separation Problem, where $a$ is not enabled at $s$),
we need to find a region $(\rho,\bk,\fw)$ with $\rho(s)<\bk(a)$.
This can be done by solving suitable systems of linear inequalities which
arise from these two requirements, and from the requirement that regions are not  too restrictive.\QED
\end{definition}%%\ENDDEF{reg.def}

A classical result~\cite{DR-synth-96} is that a transition system is PN-solvable if and only if
all its SSP and ESSP problems may be solved, and the corresponding places yield a possible solution.

\section{Products and sums} \label{prod.sct}

A product of two disjoint \lts{} is again an \lts{}. Its states are pairs of states of the two \lts{} and an edge exists
if one of the underlying states can do the transition. An example is shown in Figure~\ref{example-products.fig}.

\begin{definition}{\sc Product of two disjoint \lts}\label{prod.def}\\
Let $\TS_1=(S_1,\to_1,T_1,\is_1)$ and $\TS_2=(S_2,\to_2,T_2,\is_2)$
be two \lts{} with disjoint label sets  ($T_1\cap T_2=\emptyset$).
The (disjoint) product $\TS_1\otimes\TS_2$ is  the \lts{}
$\big(S_1\times S_2,\to, T_1\uplus T_2,(\is_1,\is_2)\big)$, where
$\mathord{\to}
   = \{\big((s_1,s_2),t_1,(s'_1,s_2)\big)  \mid (s_1,t_1,s'_1)\in\mathord{\to_1}\}
\cup \{\big((s_1,s_2),t_2,(s_1,s'_2)\big)  \mid (s_2,t_2,s'_2)\in\mathord{\to_2}\}$.\QED
\end{definition}%%\ENDDEF{prod.def}

\begin{figure}[htb]
\vspace*{-1mm}
\centering
\begin{tikzpicture}[state/.style={circle,fill=black!100, inner sep=0.05cm}, edge/.style={-triangle 45, auto}, baseline=(s1)]
\refstepcounter{exampleTScounter}\label{prod_part1.ts}
\node[]at(-0.75, 0.5){$\TS_{\ref{prod_part1.ts}}$};
\node[state, label=above:$0$] (s1)at(0,0) {};
\node[state, label=above:$1$] (s2)at(1,0) {};
\draw[edge](s1)--node{$a$}(s2);
\end{tikzpicture}
\hfil
\begin{tikzpicture}[state/.style={circle,fill=black!100, inner sep=0.05cm}, edge/.style={-triangle 45, auto}, baseline=(s1)]
\refstepcounter{exampleTScounter}\label{prod_part2.ts}
\node[]at(-0.75, 0.5){$\TS_{\ref{prod_part2.ts}}$};
\node[state, label=left:$\is$] (s1)at(0,0) {};
\node[state, label=left:$s_1$]   (s2)at(0,-1) {};
\node[state, label=left:$s_2$]   (s3)at(0,-2) {};
\draw[edge](s1)--node{$b$}(s2);
\draw[edge](s2)--node{$b$}(s3);
\end{tikzpicture}
\hfil
\begin{tikzpicture}[state/.style={circle,fill=black!100, inner sep=0.05cm}, edge/.style={-triangle 45, auto}, baseline=(s1)]
\refstepcounter{exampleTScounter}\label{prod_result.ts}
\node[]at(-0.75, 0.5){$\TS_{\ref{prod_result.ts}}$};
\node[state, label=left:{$(0,\is)$}] (s1)at(0,0) {};
\node[state, label=left:{$(0,s_1)$}] (s2)at(0,-1) {};
\node[state, label=left:{$(0,s_2)$}] (s3)at(0,-2) {};
\node[state, label=right:{$(1,\is)$}] (s4)at(1,0) {};
\node[state, label=right:{$(1,s_1)$}] (s5)at(1,-1) {};
\node[state, label=right:{$(1,s_2)$}] (s6)at(1,-2) {};
\draw[edge](s1)--node{$b$}(s2);
\draw[edge](s2)--node{$b$}(s3);
\draw[edge](s4)--node{$b$}(s5);
\draw[edge](s5)--node{$b$}(s6);
\draw[edge](s1)--node{$a$}(s4);
\draw[edge](s2)--node{$a$}(s5);
\draw[edge](s3)--node{$a$}(s6);
\end{tikzpicture}\vspace*{-1mm}
\caption{Example for a disjoint product. We have $\TSref{prod_part1.ts}\otimes\TSref{prod_part2.ts}=\TSref{prod_result.ts}$.}
\label{example-products.fig}
\end{figure}

When a product is given and the individual label sets $T_1$ and $T_2$ are known, the factors can be computed by only
following edges with labels in $T_1$, resp.\ $T_2$, from the initial state:

\begin{proposition}{\bf (Factors of a product \cite{devacta17})}\label{dec.prop}\\
When $\TS=(S,\to,T_1\uplus T_2,\is)\equiv_T\TS_1\otimes\TS_2$,
$\TS$ is totally reachable (resp. deterministic) iff so are the factors $\TS_1$ and $\TS_2$.

\medskip
Moreover, $(s_1,s_2)[\sigma\rangle (s'_1,s'_2)$ in $\TS$ iff
$s_1[\sigma_1\rangle s_2$ in $\TS_1$ and $s_2[\sigma_2\rangle s'_2$ in $\TS_2$,
where $\sigma_1$ is the projection of $\sigma$ on $T_1^*$ and similarly for $\sigma_2$.
\eject
Then, $\TS_1\equiv_T (S^1,\to^1,T_1,\is)$ with
$S^1=\{s\in S \mid \exists \alpha_1\in  T_1^*:\is[\alpha\rangle s\}$
and $\mathord{\to^1}=\{(s_1,t_1,s_2)\in\mathord{\to}\mbox{ such that } s_1,s_2\in S^1,t_1\in T_1\}$,
and similarly for $\TS_2$.

Up to isomorphism\footnote{Those properties could be expressed in terms of categories,
but we shall refrain from doing this here.}, the disjoint product of \lts{} is commutative,
associative and has a neutral (the LTS with a single state and no label).
Each \lts{} has itself and the neutral system as (trivial) factors.

An \lts~is prime if it has exactly two factors (up to isomorphism).
If $\TS=(S,\to,T,\is)$ is connected and finite, there is a finite set $I$ of indices and a unique set of connected
prime \lts's $\{\TS_i|i\in I\}$ such that $\TS\equiv_T\bigotimes_{i\in I}\TS_i$.
\end{proposition}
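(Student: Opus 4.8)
The plan is to reduce the whole statement to a question about partitions of the label set $T$, and then to prove a single \emph{refinement lemma} from which both existence and uniqueness follow at once. Denote by $\TS_{T'}$ the restriction of $\TS$ to the states reachable from $\is$ along $T'$-labelled paths, equipped with its $T'$-edges; by the first part of the present proposition this is exactly the factor recovered from a label block $T'$. First I would observe, using this recovery together with the associativity and commutativity already stated, that a product decomposition $\TS\equiv_T\bigotimes_{i\in I}\TS_i$ is entirely encoded by the partition of $T$ into the pairwise disjoint label sets $T_i$: grouping all factors but one gives $\TS\equiv_T\TS_j\otimes\bigotimes_{i\neq j}\TS_i$, and the recovery formula then fixes $\TS_j\equiv_T\TS_{T_j}$ up to isomorphism. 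Thus factorizations of $\TS$ correspond to \emph{valid} partitions of $T$, i.e.\ partitions $\{T_i\}$ with $\TS\equiv_T\bigotimes_i\TS_{T_i}$. Since $\TS_{\es}$ is forced to be the neutral single-state LTS, a nontrivial split always cuts a block into two nonempty pieces, and a block is prime precisely when it admits no such split.

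\textbf{Existence} is then an induction on $|T|$: if $\TS$ is prime or neutral we are done, and otherwise $\TS\equiv_T\TS_1\otimes\TS_2$ with both label blocks nonempty, hence of size strictly below $|T|$, so we recurse on each factor and concatenate. The crux is the \textbf{Refinement Lemma}: if $\{A,B\}$ and $\{C,D\}$ are valid binary partitions of $T$, then $\TS$ factors along the common refinement $\{A\cap C,\,A\cap D,\,B\cap C,\,B\cap D\}$ (empty blocks discarded). I would prove it by using the second product structure to split the factors of the first. Writing $\TS\equiv_T\TS_C\otimes\TS_D$ and invoking the path-decomposition part of the proposition, the $A$-reachable states of $\TS$ project exactly onto the product of the $(A\cap C)$-reachable states of $\TS_C$ and the $(A\cap D)$-reachable states of $\TS_D$, and every $A$-edge is either an $(A\cap C)$-move in the first coordinate or an $(A\cap D)$-move in the second. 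This exhibits $\TS_A\equiv_T(\TS_C)_{A\cap C}\otimes(\TS_D)_{A\cap D}$, and symmetrically $\TS_B\equiv_T(\TS_C)_{B\cap C}\otimes(\TS_D)_{B\cap D}$; combining these with $\TS\equiv_T\TS_A\otimes\TS_B$ and reassociating yields the factorization along the common refinement.

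Granting the lemma, \textbf{uniqueness} is immediate. Let $\mathcal P$ and $\mathcal Q$ be two partitions witnessing prime factorizations. Their common refinement $\mathcal P\wedge\mathcal Q$ is valid and refines $\mathcal P$; but each block of $\mathcal P$ is prime and so cannot be cut into two nonempty valid sub-blocks, whence $\mathcal P\wedge\mathcal Q=\mathcal P$, and symmetrically $\mathcal P\wedge\mathcal Q=\mathcal Q$, giving $\mathcal P=\mathcal Q$. Because distinct blocks carry disjoint nonempty label sets, the corresponding factors are pairwise non-isomorphic under $\equiv_T$, so uniqueness of the partition is exactly uniqueness of the set $\{\TS_i\mid i\in I\}$.

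I expect the one real obstacle to be the Refinement Lemma, and inside it the reachability bookkeeping: one must verify that the $A$-reachable part of the $\{C,D\}$-product is \emph{precisely} the claimed sub-product (both inclusions, the nontrivial one using the diamond-completion implicit in the path-decomposition lemma), and that the recovered pieces $(\TS_C)_{A\cap C}$ etc.\ are genuine product components and not merely sub-LTS. Total reachability and determinism of $\TS$, inherited by all factors through the first part of the proposition, are exactly what make these identifications go through.
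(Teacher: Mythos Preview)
The paper does not prove this proposition here; it is stated with a citation to \cite{devacta17} and no argument is supplied in the present text. So there is no in-paper proof to compare against, and I will assess your proposal on its own merits.

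Your overall architecture---encode factorisations as partitions of $T$, prove a binary Refinement Lemma, obtain existence by induction on $|T|$ and uniqueness from primality---is correct, and your sketch of the Refinement Lemma is sound: the path-decomposition clause of the proposition does identify the $A$-reachable part of $\TS_C\otimes\TS_D$ with the product of the $(A\cap C)$-reachable part of $\TS_C$ and the $(A\cap D)$-reachable part of $\TS_D$, and the edge structure matches by definition of the disjoint product. The identification $(\TS_C)_{A\cap C}=\TS_{A\cap C}$ that you need is immediate since $A\cap C\subseteq C$.

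There is one genuine, if small, gap in the uniqueness step. You assert that the common refinement $\mathcal P\wedge\mathcal Q$ of two \emph{arbitrary} valid partitions is valid, but your Refinement Lemma is stated and proved only for \emph{binary} partitions; the passage from binary to general requires an iteration you do not spell out. A cleaner route bypasses it entirely: for each pair $P_i\in\mathcal P$, $Q_j\in\mathcal Q$, apply your binary lemma to the valid binary partitions $\{P_i,\,T\setminus P_i\}$ and $\{Q_j,\,T\setminus Q_j\}$ to obtain $\TS_{P_i}\equiv\TS_{P_i\cap Q_j}\otimes\TS_{P_i\setminus Q_j}$. Primality of $\TS_{P_i}$ forces $P_i\cap Q_j=\emptyset$ or $P_i\subseteq Q_j$; symmetrically $Q_j\subseteq P_i$ or $P_i\cap Q_j=\emptyset$. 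Hence each pair of blocks is either equal or disjoint, giving $\mathcal P=\mathcal Q$ directly.

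Two minor points. First, the proposition asks for \emph{connected} prime factors, and your induction should note that connectedness passes to factors (any general path in $\TS_1\otimes\TS_2$ projects to general paths in $\TS_1$ and $\TS_2$, so if either factor were disconnected, so would be the product). Second, your final paragraph appeals to total reachability and determinism of $\TS$, but the prime-factorisation clause only assumes $\TS$ connected and finite; fortunately your Refinement Lemma argument, as you actually wrote it, needs neither extra hypothesis, so the appeal can simply be dropped.
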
 %%\ENXPROP{dec.prop}

There is an interesting relation between \lts{} products and Petri nets:
if two nets are disjoint, putting them side by side yields a new net whose reachability graph is (up to isomorphism)
the disjoint product of the reachability graphs of the two original nets. This may be generalised up to Petri net isomorphism:

\begin{definition}{\sc (Disjoint) sum of Petri nets}\label{dsum.def}\\
Let $N_1=(P_1,T_1,F_1,M^1_0)$  and  $N_2=(P_2,T_2,F_2,M^2_0)$  be two Petri net systems
with disjoint  transition sets ($T_1\cap T_2=\emptyset$).
The disjoint sum $N_1\oplus N_2$ is defined (up to isomorphism) as the system $N=(P,T_1\cup T_2,F,M_0)$ where
$P=\zeta_1(P_1)\cup\zeta_2(P_2)$; $F(\zeta_1(p_1),t_1)=F_1(p_1,t_1)$,
$F(t_1,\zeta_1(p_1))=F_1(t_1,p_1)$, $F(\zeta_2(p_2),t_2)=F_2(p_2,t_2)$,
$F(t_2,\zeta_2(p_2))=F_2(t_2,p_2)$, $M_0(\zeta_1(p_1))=M_0^1(p_1)$ and $M_0(\zeta_2(p_2))=M_0^2(p_2)$,
for $t_1\in T_1, t_2\in T_2, p_1\in P_1, p_2\in P_2$.
In these formulas,  $\zeta_1$ is a bijection between $P_1$ and $\zeta_1(P_1)$ and
$\zeta_2$ is a bijection between $P_2$ and $\zeta_2(P_2)$ such that
$\zeta_1(P_1)\cap(\zeta_2(P_2)\cup T_1\cup T_2)=\emptyset$
and $\zeta_2(P_2)\cap(\zeta_1(P_1)\cup T_1\cup T_2)=\emptyset$.\QED
\end{definition}%%\ENDDEF{dsum.def}

It may be observed that the resulting system is not uniquely defined since it depends on the choice of the two bijections $\zeta_1$ and $\zeta_2$ used to separate the place sets from the rest, but this is irrelevant since we want to work up to isomorphism.
Again, up to isomorphism,
the disjoint sum of Petri net systems is commutative, associative and has a neutral (the empty net).

An additional remark that may be precious for applications  is that many subclasses of Petri nets found in the literature
(plain, free-choice, choice-free, join-free, fork-attribution, homogeneous, \ldots, not defined here)
are compatible with the presented (de)composition,
in the sense that a disjoint sum of nets belongs to such a subclass if and only if
each component belongs to the same subclass.
In particular, we have

\begin{corollary}{\bf  (Bound preservation)}\label{safep.cor}\\
$N_1\oplus N_2$  is safe iff so are $N_1$ and $N_2$.
If $N_1\oplus N_2$ is $k$-safe, so are $N_1$ and $N_2$.

\medskip
Finally, if $N_1$ is $k_1$-safe and $N_2$ is $k_2$-safe, then $N_1\oplus N_2$ is $\max(k_1,k_2)$-safe.
\end{corollary}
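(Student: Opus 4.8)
The plan is to reduce all three assertions to one structural fact about $N_1\oplus N_2$: its reachable markings are precisely the componentwise pairs $(M^1,M^2)$ with $M^1\in[M^1_0\rangle$ in $N_1$ and $M^2\in[M^2_0\rangle$ in $N_2$, where the place $\zeta_1(p_1)$ carries $M^1(p_1)$ tokens and the place $\zeta_2(p_2)$ carries $M^2(p_2)$ tokens. Once this correspondence is in hand, each statement follows by comparing, across all reachable markings, the largest number of tokens found in any place.

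First I would establish the correspondence. By the remark preceding Definition~\ref{dsum.def}, the reachability graph of $N_1\oplus N_2$ is isomorphic to the product of the reachability graphs of $N_1$ and $N_2$, so the reachability characterisation of Proposition~\ref{dec.prop} applied from the initial state $(\is_1,\is_2)$ already yields that $(M^1,M^2)$ is reachable in the sum iff $M^1$ is reachable in $N_1$ and $M^2$ is reachable in $N_2$. To pin down the token counts directly, I would also note that the incidence matrix of $N_1\oplus N_2$ is block diagonal: by Definition~\ref{dsum.def} no transition of $T_1$ is connected to a place of $\zeta_2(P_2)$ and no transition of $T_2$ is connected to a place of $\zeta_1(P_1)$. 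Hence, by the state equation (Proposition~\ref{steq.prop}), the marking reached on the $\zeta_1(P_1)$-block after firing a sequence $\sigma$ depends only on the projection $\sigma_1$ of $\sigma$ onto $T_1^*$ and coincides with the marking reached in $N_1$ by $\sigma_1$, and symmetrically for the $\zeta_2(P_2)$-block; since $T_1\cap T_2=\emptyset$, firability of $\sigma$ in the sum is equivalent to firability of $\sigma_1$ in $N_1$ together with $\sigma_2$ in $N_2$. This gives exactly the claimed bijection between reachable markings of $N_1\oplus N_2$ and pairs of reachable markings of the factors.

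Given the correspondence, write $b(N)\in\nsymbol\cup\{\infty\}$ for the supremum of $M(p)$ over all reachable markings $M$ and all places $p$ of $N$. Because the places of the sum split into the two blocks and every reachable $M^1$ (resp.\ $M^2$) occurs paired with $M^2_0$ (resp.\ $M^1_0$), the correspondence gives $b(N_1\oplus N_2)=\max\big(b(N_1),b(N_2)\big)$, the identity that carries the whole corollary. Indeed $N_1\oplus N_2$ is $k$-safe exactly when $b(N_1\oplus N_2)\le k$, i.e.\ when $b(N_1)\le k$ and $b(N_2)\le k$, which is the safe case for $k=1$ and the ``only if'' direction for general $k$; and if $N_1$ is $k_1$-safe and $N_2$ is $k_2$-safe then $b(N_1\oplus N_2)=\max(b(N_1),b(N_2))\le\max(k_1,k_2)$, so the sum is $\max(k_1,k_2)$-safe. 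The one step needing genuine care, the main obstacle, is the block factorisation of markings used above: one must verify that a reachable marking of the sum is never more constrained than the pair of its projections, so that firability and token updates really decouple along the two place blocks. This is precisely where the disjointness $T_1\cap T_2=\emptyset$ and the side-by-side construction of Definition~\ref{dsum.def} enter, and it is the content already captured by Proposition~\ref{dec.prop}.
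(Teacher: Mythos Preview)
Your proposal is correct, and indeed more thorough than the paper itself: the paper states this corollary without proof, treating it as an immediate consequence of the disjoint-sum construction (Definition~\ref{dsum.def}) and the surrounding remark that subclass membership is componentwise. Your reduction to the identity $b(N_1\oplus N_2)=\max(b(N_1),b(N_2))$ via the block-diagonal structure of the incidence matrix is exactly the reasoning the paper leaves implicit.

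One minor caution on presentation: you invoke the fact that the reachability graph of the sum is the product of the reachability graphs by pointing to ``the remark preceding Definition~\ref{dsum.def}'', but in the paper's logical order that fact is only formally stated as Proposition~\ref{sum0.prop}, which appears \emph{after} the corollary. Your second derivation---directly from the block-diagonal incidence matrix and the state equation---avoids this apparent forward reference and is self-contained, so you might prefer to lead with that argument rather than with Proposition~\ref{dec.prop}.
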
 %\ENDKOR{safep.cor}

\begin{proposition}{\bf (Reachability graph of a sum of nets)} \label{sum0.prop}\\
The  reachability graph of a disjoint sum of net systems is isomorphic to
the disjoint product of the reachability graphs of the composing nets.
\end{proposition}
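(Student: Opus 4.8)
The plan is to exhibit an explicit isomorphism between the reachability graph of $N=N_1\oplus N_2$ and the product $\TS_1\otimes\TS_2$ of the reachability graphs $\TS_i$ of the $N_i$. Writing $N=(P,T_1\cup T_2,F,M_0)$ as in Definition~\ref{dsum.def}, the place set $P=\zeta_1(P_1)\uplus\zeta_2(P_2)$ is a disjoint union, so every marking $M$ of $N$ splits uniquely into a pair $(M_1,M_2)$ with $M_1(p_1)=M(\zeta_1(p_1))$ and $M_2(p_2)=M(\zeta_2(p_2))$; conversely every such pair assembles into a marking of $N$. This correspondence is a bijection between markings of $N$ and pairs of markings of $N_1,N_2$, and by the definition of $M_0$ it sends $M_0$ to $(M_0^1,M_0^2)$, the initial state of the product. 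The candidate isomorphism $\zeta$ is the restriction of this map to the reachable markings.

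First I would check that the firing rule decomposes. The defining formulas of Definition~\ref{dsum.def} only connect transitions of $T_1$ to places of $\zeta_1(P_1)$ and transitions of $T_2$ to places of $\zeta_2(P_2)$; all remaining values of $F$ are $0$, so the incidence matrix $C$ is block diagonal with blocks $C_1,C_2$. Hence for $t_1\in T_1$ the enabledness of $t_1$ at $M$ depends only on the $\zeta_1(P_1)$-entries, and firing it changes only those entries, exactly as in $N_1$, while the $\zeta_2(P_2)$-part is untouched. Concretely, $M[t_1\rangle M'$ in $N$ iff $M_1[t_1\rangle M_1'$ in $N_1$ and $M_2=M_2'$, and symmetrically for $t_2\in T_2$. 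Comparing with Definition~\ref{prod.def}, this is precisely the edge relation of $\TS_1\otimes\TS_2$ read through the bijection above.

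It then remains to check that $\zeta$ is onto the product's state set $S_1\times S_2=[M_0^1\rangle\times[M_0^2\rangle$, i.e.\ that the reachable markings of $N$ are exactly the pairs $(M_1,M_2)$ that are componentwise reachable. The inclusion $\subseteq$ is easy: projecting any firing sequence of $N$ onto $T_1^*$ and onto $T_2^*$ yields, by the decomposition of the firing rule, legal firing sequences of $N_1$ and $N_2$. The reverse inclusion $\supseteq$ is the main obstacle, and it is where I would spend the effort: given componentwise reachable $M_1,M_2$, one must drive the two subnets independently inside $N$, firing first a $T_1$-sequence from $M_0$ to reach $M_1$ (which leaves the $\zeta_2(P_2)$-part at $M_0^2$) and then a $T_2$-sequence to reach $M_2$ (which leaves the $\zeta_1(P_1)$-part at $M_1$), so that $(M_1,M_2)$ is reached. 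The legality of this interleaving rests entirely on the disjointness of the transition sets together with the block structure of $C$; alternatively, one can bypass the explicit construction by noting that reachability graphs are totally reachable and invoking Proposition~\ref{dec.prop}, by which the product of totally reachable factors is totally reachable. Either way, combined with the edge correspondence of the previous paragraph, this upgrades $\zeta$ from a bijection of states to an LTS isomorphism, which is the claim.
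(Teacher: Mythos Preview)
Your argument is correct and complete. The paper itself states Proposition~\ref{sum0.prop} without proof, treating it as elementary; your write-up supplies exactly the natural argument one would expect---splitting markings along the disjoint place partition, observing that $F$ (hence enabledness and firing) is block-diagonal because Definition~\ref{dsum.def} only connects $T_i$ to $\zeta_i(P_i)$, and then matching reachable states by interleaving the two firing sequences. The appeal to Proposition~\ref{dec.prop} for the surjectivity step is a nice shortcut, though the direct interleaving you also give is already sufficient and self-contained.
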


There is a kind of reverse of this property \cite{dev-ACSD16,devacta17}.

\begin{proposition}{\bf (Petri net solution of a disjoint product of \lts)}\label{sum.prop}\\
A disjoint product of \lts{} has a  Petri net solution iff each composing \lts{}
has a  Petri net solution, and a possible solution is the disjoint sum of the latter.
\end{proposition}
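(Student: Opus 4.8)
The plan is to prove the two implications separately, observing that the ``if'' part, together with the explicit shape of the solution, falls out immediately of Proposition~\ref{sum0.prop}, while the ``only if'' part is the substantial one and is best handled through the region-based separation characterisation of PN-solvability recalled after Definition~\ref{reg.def}.

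For the ``if'' direction and the claim about the form of the solution, suppose $\TS_1$ and $\TS_2$ are PN-solvable, with solutions $N_1$ and $N_2$. Since the label sets of a product are disjoint ($T_1\cap T_2=\es$), the transition sets of $N_1$ and $N_2$ are disjoint as well, so the disjoint sum $N_1\oplus N_2$ of Definition~\ref{dsum.def} is defined. By Proposition~\ref{sum0.prop} its reachability graph is isomorphic to the disjoint product of the reachability graphs of $N_1$ and $N_2$, i.e.\ to $\TS_1\otimes\TS_2$. Hence $N_1\oplus N_2$ is a solution of the product, which establishes both the implication and the displayed form of the solution.

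For the ``only if'' direction, assume $\TS=\TS_1\otimes\TS_2$ is PN-solvable; then it is totally reachable and deterministic, so by Proposition~\ref{dec.prop} each factor is totally reachable and deterministic as well. It therefore suffices to solve every SSP and ESSP of, say, $\TS_1$, the case of $\TS_2$ being symmetric. The key device is the embedding $s_1\mapsto(s_1,\is_2)$ of $S_1$ into $S$, which by Definition~\ref{prod.def} sends each $T_1$-edge $(s_1,t_1,s_1')$ of $\TS_1$ to the edge $\big((s_1,\is_2),t_1,(s_1',\is_2)\big)$ of $\TS$ and keeps the second coordinate fixed. Given an $\text{SSP}(s_1,s_1')$ in $\TS_1$ I would lift it to $\text{SSP}\big((s_1,\is_2),(s_1',\is_2)\big)$ in $\TS$ (the two states are distinct and reachable since $\TS_1$ is totally reachable), and given an $\text{ESSP}(s_1,a)$ with $a\in T_1$ I would lift it to $\text{ESSP}\big((s_1,\is_2),a\big)$; the product's edge structure guarantees that $a$ is disabled at $(s_1,\is_2)$ precisely because it is disabled at $s_1$. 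Solving the lifted problem in $\TS$ yields a region $(\rho,\bk,\fw)$ of $\TS$, which I then restrict to $\TS_1$ by setting $\rho_1(s_1)=\rho(s_1,\is_2)$, $\bk_1=\bk|_{T_1}$ and $\fw_1=\fw|_{T_1}$.

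The crux of the argument, and the step I expect to require the most care, is checking that $(\rho_1,\bk_1,\fw_1)$ is indeed a region of $\TS_1$ and that it inherits the separating property of $(\rho,\bk,\fw)$. Here one reads the region inequalities of Definition~\ref{reg.def} along the embedded $T_1$-edges: for $(s_1,t_1,s_1')\in\mathord{\to_1}$ the conditions $\rho(s_1,\is_2)\geq\bk(t_1)$ and $\rho(s_1',\is_2)-\rho(s_1,\is_2)=\fw(t_1)-\bk(t_1)$ become exactly the region conditions for $\rho_1,\bk_1,\fw_1$, so no new constraint has to be imposed. Moreover $\rho_1(s_1)\neq\rho_1(s_1')$ whenever $\rho$ separated the lifted states, and $\rho_1(s_1)<\bk_1(a)$ whenever $\rho(s_1,\is_2)<\bk(a)$, so the restricted region solves the original separation problem. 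Collecting one such restricted region per separation problem of $\TS_1$ solves all of them, and by the classical characterisation $\TS_1$ is PN-solvable; the same works for $\TS_2$, which completes the proof.
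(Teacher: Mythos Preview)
Your argument is correct. The ``if'' direction is exactly as in the paper (an immediate consequence of Proposition~\ref{sum0.prop}), and your region-restriction argument for the ``only if'' direction is sound: the embedding $s_1\mapsto(s_1,\is_2)$ carries every $T_1$-edge of $\TS_1$ to an edge of $\TS$, so restricting a region of $\TS$ along this embedding does produce a region of $\TS_1$, and the separation conditions transfer as you describe.

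The paper itself does not spell out a proof here (it cites \cite{dev-ACSD16,devacta17}), but in closely related places---the first paragraph of the proof of Theorem~\ref{PNsynt.thm} and the proof of Proposition~\ref{art1.prop}---it uses a more direct, net-level version of the same idea: given a solution $N=(P,T,F,M_0)$ of $\TS$, simply drop the transitions in $T_2$ to obtain $N_1=(P,T_1,F|_{T_1},M_0)$; by Proposition~\ref{dec.prop} the reachable part of $N_1$ is exactly $[\is\rangle^{T_1}\equiv\TS_1$. This avoids re-proving the region axioms and makes the ``only if'' direction a one-liner. Your approach is equivalent in spirit (each place of $N$ \emph{is} a region of $\TS$, and your $\rho_1,\bk_1,\fw_1$ is precisely that place viewed inside $N_1$), just phrased at the region level rather than the net level; the net-level phrasing is what buys the paper its brevity.
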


Let us now examine when and how an \lts{} may be decomposed into (non-trivial) disjoint factors.
From Proposition~\ref{dec.prop}, it is enough to discover an adequate decomposition of the label set $T=T_1\uplus T_2$.
A general characterisation of such adequate decompositions is presented in \cite{dev-ACSD16},
but it may be simplified in the context of Petri net synthesis.

\begin{definition}{\sc General diamond property} \label{gdiam.def}\\
An \lts{} $\TS=(S,\to,T,\is)$ presents the {\it general diamond property} for two distinct labels $a,b\in T$ if,
whenever there are two adjacent edges in a diamond like in Fig.~\ref{gdiam.fig}, the other two are also present
(in other words, $\forall s,s_1,s_2\in S$, $\forall u\in\{a,-{a}\},\forall v\in\{b,-{b}\}:
s[u\rangle s_1 \land s[v\rangle s_2 \impl (\exists s'\in S: s_1[v\rangle s' \land s_2[u\rangle s')$).
If $T_1,T_2\subseteq T$ with $T_1\cap T_2=\emptyset$, $\TS$ will be said {\it $\{T_1,T_2\}$-gdiam}
if it presents the general diamond property for each pair of labels $a\in T_1, b\in T_2$
(note that any \lts{} $\TS=(S,\to,T,\is)$ is $\{\emptyset,T\}$-gdiam).\QED
\end{definition}%%\ENDDEF{gdiam.def}

\begin{figure*}[hbt]
\begin{center}
\begin{tikzpicture}[scale=1.2]
\node[circle,fill=black!100,inner sep=0.05cm](s1)at(0,0)[label=left:$s_1$]{};
\node[circle,fill=black!100,inner sep=0.05cm](s2)at(1,-1)[label=below:$s_2$]{};
\node[circle,fill=black!100,inner sep=0.05cm](s3)at(1,1)[label=above:$s_3$]{};
\node[circle,fill=black!100,inner sep=0.05cm](s4)at(2,0)[label=right:$s_4$]{};
\draw[-triangle 45](s1)--node[auto,swap,inner sep=1.5pt,pos=0.5]{$a$}(s2);
\draw[-triangle 45](s1)--node[auto,swap,inner sep=1.5pt,pos=0.5]{$b$}(s3);
\draw[-triangle 45](s2)--node[auto,swap,inner sep=1.5pt,pos=0.5]{$b$}(s4);
\draw[-triangle 45](s3)--node[auto,swap,inner sep=1.5pt,pos=0.5]{$a$}(s4);
\node[]at(6.5,1.2){$s_1[a\rangle s_2\;\land\;s_1[b\rangle s_3\Rightarrow \exists s_4:s_3[a\rangle s_4\;\land\;s_2[b\rangle s_4$};
\node[]at(6.5,0.4){$s_3[a\rangle s_4\;\land\;s_2[b\rangle s_4\Rightarrow \exists s_1:s_1[a\rangle s_2\;\land\;s_1[b\rangle s_3$};
\node[]at(6.5,-0.4){$s_1[a\rangle s_2\;\land\;s_2[b\rangle s_4\Rightarrow \exists s_3:s_1[b\rangle s_3\;\land\;s_3[a\rangle s_4$};
\node[]at(6.5,-1.2){$s_1[b\rangle s_3\;\land\;s_3[a\rangle s_4\Rightarrow \exists s_2:s_1[a\rangle s_2\;\land\;s_2[b\rangle s_4$};
\node[]at(8,0.8){(forward persistence)};
\node[]at(8,0){(backward persistence)};
\node[]at(8,-0.8){(permutation)};
\node[]at(8,-1.6){(permutation)};
\end{tikzpicture}
\end{center}\vspace*{-8mm}
\caption{General diamond property.}
\label{gdiam.fig}
\end{figure*}

Note that the four configurations in Figure~\ref{gdiam.fig} are condensed in a single formula in Definition~\ref{gdiam.def},
using both direct and reverse arcs.
Among the many properties implied by general diamonds, we may cite:

\begin{proposition}{\bf  (Projections and permutation \cite{fact18})}\label{perm.prop}\\
Let $\TS=(S,\to,T,\is)$ be a $\{T_1,T_2\}$-gdiam \lts{} with $T_1\cap T_2=\emptyset$.
If $s[\alpha\rangle s'$ for some $s,s'\in S$ and general path $\alpha\in (\pm T_1\cup \pm T_2)^*$,
let $\alpha_1$ be the projection of $\alpha$ on $\pm T_1$ (i.e., $\alpha$ where all the elements
in $\pm T_2$ are dropped)
and $\alpha_2$ be the projection of $\alpha$ on $\pm T_2$ (thus dropping the elements in $\pm T_1$).
Then there are $s_1,s_2\in S$ such that $s[\alpha_1\rangle s_1[\alpha_2\rangle s'$
and $s[\alpha_2\rangle s_2[\alpha_1\rangle s'$.
\end{proposition}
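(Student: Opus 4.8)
My plan is to reduce the whole statement to a single \emph{commutation step}, which is nothing but one instance of the general diamond property of Definition~\ref{gdiam.def}: if $p[u\rangle q[v\rangle r$ with one of $u,v$ in $\pm T_1$ and the other in $\pm T_2$, then there is a state $q'$ with $p[v\rangle q'[u\rangle r$. I would prove this by reversing the first edge to obtain $q[-u\rangle p$, so that $q[-u\rangle p$ and $q[v\rangle r$ are two edges diverging from $q$. Since $\pm T_1$ and $\pm T_2$ are closed under negation (recall $-{-t}=t$) and the formula of Definition~\ref{gdiam.def} quantifies over all four sign combinations, the pair $(-u,v)$ is admissible, and the property furnishes a state $q'$ with $p[v\rangle q'$ and $r[-u\rangle q'$. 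Reading the last relation backwards as $q'[u\rangle r$ gives $p[v\rangle q'[u\rangle r$, as claimed. The only delicate point here is the sign bookkeeping for the reverse labels, but it is exactly what the four-sign formulation of the property is designed to absorb.

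Given this, I would obtain the first assertion by a bubble sort. Regard $\alpha$ as a finite word over $\pm T_1\cup\pm T_2$ and count its \emph{inversions}, i.e.\ pairs of positions carrying a $\pm T_2$-label before a $\pm T_1$-label. If $\alpha$ has no inversion, then all its $\pm T_1$-labels already precede all its $\pm T_2$-labels, so the path itself is $s[\alpha_1\rangle s_1[\alpha_2\rangle s'$. Otherwise there is an \emph{adjacent} inversion, namely consecutive edges $p[v\rangle q[u\rangle r$ with $v\in\pm T_2$ and $u\in\pm T_1$; the commutation step replaces this segment by $p[u\rangle q'[v\rangle r$ while leaving the rest of the path untouched, which strictly decreases the number of inversions by one. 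Each such swap exchanges one $\pm T_1$-label with one $\pm T_2$-label and never reorders two labels of the same block, so the $\pm T_1$-projection of the path remains equal to $\alpha_1$ and the $\pm T_2$-projection remains equal to $\alpha_2$ throughout. Since the inversion count is a nonnegative integer that strictly decreases, the process terminates and delivers the desired factorisation $s[\alpha_1\rangle s_1[\alpha_2\rangle s'$.

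Finally, the second assertion $s[\alpha_2\rangle s_2[\alpha_1\rangle s'$ would follow by symmetry: being $\{T_1,T_2\}$-gdiam is invariant under exchanging $T_1$ and $T_2$, and the projections $\alpha_1,\alpha_2$ do not depend on that choice, so applying the first assertion with the roles of $T_1$ and $T_2$ interchanged gives it at once. I expect no genuine obstacle beyond the reverse-label accounting inside the commutation step and the routine verification that each swap preserves both projections; the termination of the bubble sort is immediate from the strictly decreasing inversion count.
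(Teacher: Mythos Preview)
The paper does not actually supply a proof of this proposition; it is stated with a citation to~\cite{fact18} and used as a black box. Your argument is correct and is the natural one: the single commutation step is exactly one of the four diamond closures displayed in Figure~\ref{gdiam.fig} (the ``permutation'' cases), and you recover it cleanly from the condensed sign-quantified formulation of Definition~\ref{gdiam.def} by reversing the first edge. The induction on the number of adjacent inversions (bubble sort) is the standard way to globalise such a local commutation, and your observation that each swap preserves both projections and strictly decreases the inversion count is what makes it terminate with the required factorisation. The appeal to the $T_1\leftrightarrow T_2$ symmetry for the second factorisation is also fine.
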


This also implies a variant of the well-known Keller's theorem \cite{keller},
meaning that the general diamonds property is local, but implies a global variant.

\begin{proposition}{\bf (General diamonds imply big general diamonds \cite{fact18})} \label{genkel.prop}\\
Let $\TS=(S,\to,T,\is)$ be a $\{T_1,T_2\}$-gdiam \lts{} with $T_1\cap T_2=\emptyset$.
If $s[\alpha_1\rangle s_1$ and $s[\alpha_2\rangle s_2$ for some $s,s_1,s_2\in S$,
$\alpha_1\in (\pm T_1)^*$
and $\alpha_2\in (\pm T_2)^*$,
then for some $s'\in S$, $s_1[\alpha_2\rangle s'$ and $s_2[\alpha_1\rangle s'$.
\end{proposition}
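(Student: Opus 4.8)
The plan is to reduce the statement to the already-established permutation property (Proposition~\ref{perm.prop}) rather than re-running an induction from the basic diamonds. The idea is to splice the two given paths into a single general path and then read the desired big diamond off the two commuting decompositions that Proposition~\ref{perm.prop} provides.

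Concretely, first I would reverse the $T_1$-path. Since $s[\alpha_1\rangle s_1$, traversing the same edges backwards gives $s_1[-\alpha_1\rangle s$, and $-\alpha_1$ is again a word over $\pm T_1$ (reversing the order and negating each letter keeps every letter in $\pm T_1$). Concatenating with $s[\alpha_2\rangle s_2$ yields a single general path $s_1[(-\alpha_1)\alpha_2\rangle s_2$ over $\pm T_1\cup\pm T_2$. Its projection on $\pm T_1$ is exactly $-\alpha_1$ (the $\alpha_2$-part contributes nothing), and its projection on $\pm T_2$ is exactly $\alpha_2$.

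Next I would apply Proposition~\ref{perm.prop} to this path. It furnishes two intermediate states; the relevant one, call it $s'$, sits on the decomposition that performs the $\pm T_2$-projection first, namely $s_1[\alpha_2\rangle s'[-\alpha_1\rangle s_2$. The first half is directly $s_1[\alpha_2\rangle s'$, and reversing the second half turns $s'[-\alpha_1\rangle s_2$ into the forward path $s_2[\alpha_1\rangle s'$. These are precisely the two legs of the claimed big diamond, so $s'$ witnesses the conclusion.

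The whole argument is therefore short; the only point requiring care is the bookkeeping around the reversed path, that is, checking that $-\alpha_1$ is a legitimate $\pm T_1$-word, that the two projections come out as stated, and that reversing $s'[-\alpha_1\rangle s_2$ correctly recovers $s_2[\alpha_1\rangle s'$. If one instead wanted a self-contained proof not appealing to Proposition~\ref{perm.prop}, the natural route is a Keller-style nested induction: an inner induction on $|\alpha_2|$ showing that a single label $u\in\pm T_1$ commutes past the whole $T_2$-path (each step discharged by the basic general diamond property of Definition~\ref{gdiam.def}), wrapped in an outer induction on $|\alpha_1|$ that peels $\alpha_1$ off one letter at a time. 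In that version the main obstacle is marshalling the double induction and the repeated diamond completions; the reduction above sidesteps it entirely.
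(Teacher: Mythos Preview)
Your reduction to Proposition~\ref{perm.prop} is correct: from $s[\alpha_1\rangle s_1$ you legitimately get $s_1[-\alpha_1\rangle s$ with $-\alpha_1\in(\pm T_1)^*$, concatenation gives $s_1[(-\alpha_1)\alpha_2\rangle s_2$, and applying Proposition~\ref{perm.prop} to this path yields $s_1[\alpha_2\rangle s'[-\alpha_1\rangle s_2$, hence $s_1[\alpha_2\rangle s'$ and $s_2[\alpha_1\rangle s'$ as required. The bookkeeping you flag (reversal keeps letters in $\pm T_1$, projections are $-\alpha_1$ and $\alpha_2$, and $s'[-\alpha_1\rangle s_2$ unwinds to $s_2[\alpha_1\rangle s'$) is all straightforward under the paper's conventions for $-\sigma$.

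Note that the present paper does not actually supply its own proof of this proposition: it is stated with a citation to~\cite{fact18}, just like Proposition~\ref{perm.prop}. So there is no in-paper argument to compare against. The direct Keller-style double induction you sketch at the end is indeed the standard self-contained route and is presumably close to what the cited source does; your reduction is a clean shortcut that exploits the fact that Proposition~\ref{perm.prop} has already been granted. Either route is acceptable here.
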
 %%%\ENXPROP{genkel.prop}

This may be interpreted as the fact that big general diamonds are filled with small ones.

\begin{proposition}{\bf  (Sequences generated by big general diamonds)}\label{seqdiam.prop}\\
Let $\TS=(S,\to,T_1\uplus T_2,\is)$ be a $\{T_1,T_2\}$-gdiam \lts{}.
If $s_0[\alpha_1\rangle s_1$ and $s_0[\alpha_2\rangle s_1$ for some $s_0,s_1\in S$,
$\alpha_1\in (\pm T_1)^*$ and $\alpha_2\in (\pm T_2)^*$,
\begin{enumerate}
\itemsep=0.9pt
\item
then for any $i\in\zsymbol$, there is some $s_i\in S$ such that $s_i[\alpha_1\rangle s_{i+1}$ and $s_i[\alpha_2\rangle s_{i+1}$;
\item
if $S$ is finite, for some $h\neq k\in\zsymbol$, $s_h=s_k$;
\item
if the various $s_i$'s belong to $[\is\rangle^{T_1}$
which is Petri net solvable, then $s_h=s_k$ for any $h,k\in\zsymbol$.
\end{enumerate}
\end{proposition}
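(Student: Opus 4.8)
The plan is to prove the three items in turn: first build the two-sided sequence, then read off periodicity, and finally force the collapse.

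\medskip
For item~1, I would define the states $s_i$ by induction in both directions, using Proposition~\ref{genkel.prop} (big general diamonds) as the sole tool. The pair $(s_0,s_1)$ with $s_0[\alpha_1\rangle s_1$ and $s_0[\alpha_2\rangle s_1$ is given. For the forward step, assuming $s_i[\alpha_1\rangle s_{i+1}$ and $s_i[\alpha_2\rangle s_{i+1}$, I apply Proposition~\ref{genkel.prop} at the base state $s_i$ (both target states being $s_{i+1}$); it yields a state $s'$ with $s_{i+1}[\alpha_2\rangle s'$ and $s_{i+1}[\alpha_1\rangle s'$, and I set $s_{i+2}:=s'$. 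For the backward step I reverse the two paths of the pair $(s_i,s_{i+1})$, getting $s_{i+1}[-\alpha_1\rangle s_i$ and $s_{i+1}[-\alpha_2\rangle s_i$ with $-\alpha_1\in(\pm T_1)^*$ and $-\alpha_2\in(\pm T_2)^*$; applying Proposition~\ref{genkel.prop} at $s_{i+1}$ gives $s'$ with $s_i[-\alpha_1\rangle s'$ and $s_i[-\alpha_2\rangle s'$, i.e.\ $s'[\alpha_1\rangle s_i$ and $s'[\alpha_2\rangle s_i$, so I set $s_{i-1}:=s'$. Iterating both steps produces the whole family $(s_i)_{i\in\zsymbol}$ with the required property. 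Item~2 is then immediate: the map $i\mapsto s_i$ sends the infinite index set into the finite state set $S$, so by pigeonhole two indices $h\neq k$ satisfy $s_h=s_k$.

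\medskip
The real content is item~3, and the main tool is the state equation (Proposition~\ref{steq.prop}). I would fix a Petri net solution $N$ of $[\is\rangle^{T_1}$ with incidence matrix $C$, and identify each state of $[\is\rangle^{T_1}$ with its marking under the defining isomorphism. Since consecutive $s_i$ lie in $[\is\rangle^{T_1}$ and are joined by the $T_1$-path $\alpha_1$, the state equation gives $s_{i+1}=s_i+C\cdot\Parikh(\alpha_1)$ as markings, hence $s_i=s_0+i\cdot C\cdot\Parikh(\alpha_1)$ for every $i\in\zsymbol$. The crucial observation is that a marking takes values in $\nsymbol$, and this identity must hold simultaneously for arbitrarily large positive and negative $i$ (which is exactly why item~1 was set up to give a two-sided sequence): if some component $p$ of $C\cdot\Parikh(\alpha_1)$ were positive (resp.\ negative), then $s_i(p)=s_0(p)+i\cdot(C\cdot\Parikh(\alpha_1))(p)$ would become negative for $i$ sufficiently negative (resp.\ positive), contradicting non-negativity. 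Therefore $C\cdot\Parikh(\alpha_1)=0$, so all the markings $s_i$ coincide, and as the isomorphism is a bijection between states and markings this yields $s_h=s_k$ for all $h,k\in\zsymbol$. Note that this argument uses PN-solvability, not finiteness, so item~3 does not rely on item~2.

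\medskip
The step I expect to be most delicate is the application of the state equation in item~3: it requires the $\alpha_1$-paths joining consecutive $s_i$ to run \emph{inside} $[\is\rangle^{T_1}$, equivalently inside the reachability graph of $N$, so that every intermediate marking is well defined and non-negative. Because $\alpha_1\in(\pm T_1)^*$ may contain reverse labels, a reverse step could a priori leave $[\is\rangle^{T_1}$; one must therefore check that the markings visited along $\alpha_1$ stay reachable in $N$ before invoking Proposition~\ref{steq.prop}. Once this is granted, the two-sided non-negativity bound closes the proof, and it is precisely the combination of this bound with the bi-infinite sequence of item~1 that upgrades the ``some $h\neq k$'' of item~2 to the ``all $h,k$'' of item~3.
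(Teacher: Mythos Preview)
Your proof is essentially the paper's: Proposition~\ref{genkel.prop} applied forward and backward for item~1, pigeonhole for item~2, and for item~3 the state equation to get $M_{s_{i+1}}-M_{s_i}=C\cdot\Parikh(\alpha_1)$ together with the observation that non-negativity of the bi-infinite family of markings forces $C\cdot\Parikh(\alpha_1)=0$ (the paper first derives this in the finite case via item~2 and then remarks that the non-negativity argument covers the infinite case as well; you go straight to non-negativity, which is marginally cleaner). The caveat you raise in your last paragraph---that invoking Proposition~\ref{steq.prop} on $s_i[\alpha_1\rangle s_{i+1}$ presupposes the intermediate states of this general path lie in $[\is\rangle^{T_1}$---is a genuine subtlety that the paper does not mention at all; it simply writes $M_{s_{i+1}}-M_{s_i}=C\cdot\Parikh(\alpha_1)$ without comment, so on this point you are more scrupulous than the original, even though neither proof spells out the resolution.
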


\begin{proof}
The first point results from successive applications of Proposition~\ref{genkel.prop} to $s_i[\alpha_1\rangle s_{i+1}$
and $s_i[\alpha_2\rangle s_{i+1}$, as well as to $s_{i+1}[-\alpha_1\rangle s_{i}$ and $s_{i+1}[-\alpha_2\rangle s_{i}$.

\medskip
The case where $S$ is finite is then immediate.

The last point results from the fact that, in any Petri net solution of $[\is\rangle^{T_1}$, if $M_s$ is
the marking corresponding to state $s$ (if reachable), then from Proposition~\ref{steq.prop},
$\forall i\in\zsymbol: M_{s_{i+1}}-M_{s_i}=C\cdot\Parikh(\alpha_1)$, hence is constant.
As a consequence, $\forall i,j\in\zsymbol:M_{s_j}-M_{s_i}=(j-i)\cdot C\cdot\Parikh(\alpha_1)$.
When $S$ is finite, from the previous point, $C\cdot\Parikh(\alpha_1)=0$, the markings corresponding to all $s_i$'s are the same,
hence the $s_i$'s are also the same.
But this remains true if $\TS$ is infinite, because all the markings are nonnegative
(if for some $p\in P$ we have $C\cdot\Parikh(\alpha_1)\neq 0$, then either $M_{s_i}(p)$ becomes negative for $i$ large enough,
or for $i$ small enough in $\zsymbol$).
\end{proof}%%%\ENDBEW{seqdiam.prop}

Now, the interest of the notion of general diamonds in our context arises from the following observation:

\begin{proposition}\label{gdiam.prop}{\bf (Product implies general diamonds \cite{fact18})}\\
Let $\TS_1=(S_1,\to_1,T_1,\is_1)$ and $\TS_2=(S_2,\to_2,T_2,\is_2)$
 be two  \lts{} with disjoint label sets, then $\TS_1\otimes\TS_2$
presents the general diamond property for each $a\in T_1$ and $b\in T_2$.
\end{proposition}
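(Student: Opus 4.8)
The plan is to exploit the fact that in the product $\TS_1\otimes\TS_2$ a label from $T_1$ only ever changes the first coordinate of a state while a label from $T_2$ only ever changes the second, and that these two coordinates evolve completely independently. Since $a\in T_1$ and $b\in T_2$, the two branches of any would-be diamond act on disjoint coordinates, so closing the diamond reduces to combining the two independent moves. The statement is phrased for general labels $u\in\{a,-a\}$ and $v\in\{b,-b\}$, so the one thing to organise carefully is the uniform handling of forward and reverse steps.

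Concretely, I would fix a state $s=(p,q)\in S_1\times S_2$ and assume $s[u\rangle s_1$ and $s[v\rangle s_2$ with $u\in\{a,-a\}$ and $v\in\{b,-b\}$. The first step is to unfold the definition of the product's transition relation (Definition~\ref{prod.def}) together with the reverse-label convention (Definition~\ref{LTS.def}). A forward step $s[a\rangle s_1$ forces $s_1=(p',q)$ with $p[a\rangle p'$ in $\TS_1$; a reverse step $s[-a\rangle s_1$ means $s_1[a\rangle s$, which by the same unfolding gives $s_1=(p',q)$ with $p'[a\rangle p$, i.e.\ $p[-a\rangle p'$ in $\TS_1$. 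In either case $s_1=(p',q)$, where $p[u\rangle p'$ holds in $\TS_1$ and the second coordinate is untouched. Symmetrically, $s_2=(p,q')$ with $q[v\rangle q'$ in $\TS_2$ and the first coordinate untouched.

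Then I would simply set $s'=(p',q')$ and verify the two required edges. Since $v$ acts only on the second coordinate and $q[v\rangle q'$ holds in $\TS_2$, we obtain $s_1=(p',q)[v\rangle (p',q')=s'$; since $u$ acts only on the first coordinate and $p[u\rangle p'$ holds in $\TS_1$, we obtain $s_2=(p,q')[u\rangle (p',q')=s'$. This furnishes the witness $s'$ demanded by Definition~\ref{gdiam.def}, completing the argument.

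There is no genuine obstacle here; the only point requiring a little care is the uniform treatment of the four sign combinations for $(u,v)$, and this is precisely why it pays to record the reverse-label convention at the outset. Once it is noted that a $(\pm a)$-step lives entirely in the first coordinate and a $(\pm b)$-step entirely in the second, all four cases collapse into the single coordinate-combination step above, matching the condensed single-formula statement of the general diamond property.
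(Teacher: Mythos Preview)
Your argument is correct and is precisely the natural direct verification from Definitions~\ref{prod.def} and~\ref{gdiam.def}; the paper itself does not give a proof here but simply cites the result from~\cite{fact18}, so there is nothing further to compare. The only thing one might add is that the uniform treatment of the four sign cases, which you handle cleanly via the reverse-label convention, is exactly what makes the condensed single-formula version of the diamond property convenient here.
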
 %%%ENXPROP{gdiam.prop}

\begin{figure}[!b]
\vspace*{-4mm}
\begin{center}
\begin{tikzpicture}[scale=2.0]
\refstepcounter{exampleTScounter}\label{diam1.ts}
\node[]at(0,0.75){$\TS_{\ref{diam1.ts}}$};
\node[circle,fill=black!100,inner sep=0.05cm](s1)at(0,0)[label=left:$\is$]{};
\node[circle,fill=black!100,inner sep=0.05cm](s2)at(1,0)[label=right:$s_1$]{};
\node[circle,fill=black!100,inner sep=0.05cm](s3)at(1,1)[label=above:$s_2$]{};
\draw[-triangle 45](s1)--node[auto,swap,inner sep=1.5pt,pos=0.5]{$a$}(s2);
\draw[-triangle 45](s3)--node[auto,swap,inner sep=1.5pt,pos=0.5]{$a$}(s2);
\draw[-triangle 45](s2)edge[-triangle 45, out=-135, in=-45, distance=1cm]node[auto,swap]{$b$}(s2);
\draw[-triangle 45](s3)edge[bend right=20]node[auto,swap,inner sep=1pt,pos=0.5]{$b$}(s1);
\draw[-triangle 45](s1)edge[bend right=20]node[auto,swap,inner sep=1pt,pos=0.5]{$b$}(s3);
\end{tikzpicture}\hspace{2em}
\begin{tikzpicture}[scale=2.0]
\refstepcounter{exampleTScounter}\label{diam2.ts}
\node[]at(0,0.75){$\TS_{\ref{diam2.ts}}$};
\node[circle,fill=black!100,inner sep=0.05cm](s1)at(0,0)[label=left:$\is$]{};
\node[circle,fill=black!100,inner sep=0.05cm](s2)at(1,-1)[label=below:$s_2$]{};
\node[circle,fill=black!100,inner sep=0.05cm](s3)at(1,0)[label=above:$s_3$]{};
\node[circle,fill=black!100,inner sep=0.05cm](s4)at(2,0)[label=right:$s_4$]{};
\node[circle,fill=black!100,inner sep=0.05cm](s5)at(1,1)[label=above:$s_5$]{};
\draw[-triangle 45](s1)--node[auto,swap,inner sep=1.5pt,pos=0.5]{$a$}(s2);
\draw[-triangle 45](s1)--node[auto,swap,inner sep=1.5pt,pos=0.5]{$b$}(s3);
\draw[-triangle 45](s2)--node[auto,swap,inner sep=1.5pt,pos=0.5]{$b$}(s4);
\draw[-triangle 45](s3)--node[auto,swap,inner sep=1.5pt,pos=0.5]{$a$}(s4);
\draw[-triangle 45](s1)--node[auto,swap,inner sep=1.5pt,pos=0.5]{$a$}(s5);
\draw[-triangle 45](s5)--node[auto,swap,inner sep=1.5pt,pos=0.5]{$b$}(s4);
\end{tikzpicture}
\end{center}\vspace*{-6mm}
\caption{General diamond property does not imply product.}
\label{gddiag.fig}
\end{figure}
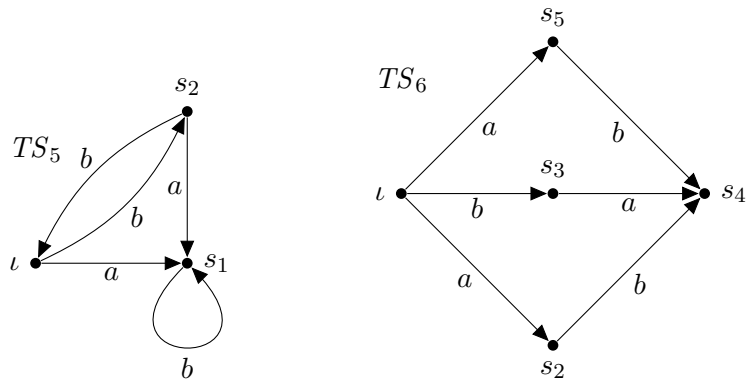

Unfortunately, the reverse property does not hold in all generality, as shown by the counterexamples in Fig.~\ref{gddiag.fig}.
Indeed, in both cases, $a$ and $b$ form general diamonds. However, from Proposition~\ref{dec.prop},
with $T_1=\{a\}$ and $T_2=\{b\}$,
 $\TS_{\ref{diam1.ts}}$ should be isomorphic to the product of $\is\xrightarrow[]{a}s_1$ and $\is\xleftrightarrow[]{b}s_2$,
 since $\{\is,s_1\}$ are the only states directly reachable with $a$ and  $\{\is,s_2\}$ are the only states directly reachable with $b$;
 however, this is not the case since we would then have four states in the product.
 Similarly, $\TS_{\ref{diam2.ts}}$ should be isomorphic to the product of $\{\is\xrightarrow[]{a}s_2,\is\xrightarrow[]{a}s_5\}$
 and of a single arrow $\is\xrightarrow[]{b}s_3$, which is not the case since then we would have six states in the product.

But note that both counterexample are non-deterministic, hence cannot have Petri net solutions.
And indeed, an essential result is that the local general diamond property suffices in the context of synthesis:

\begin{theorem} \label{PNsynt.thm}{\bf [General diamonds and Petri net synthesis imply product]}\\
If a (finite) totally reachable \lts{} $\TS=(S,\to,T_1\uplus T_2,\is)$ is deterministic and satisfies the general diamond property
for each pair of labels $a\in T_1$ and $b\in T_2$,
then it is Petri net synthesisable iff so are $[ \is \rangle^{T_1}$ and $[\is \rangle^{T_2}$;
moreover, we then have
$\TS\equiv_T[\is \rangle^{T_1}\otimes[ \is \rangle^{T_2}$
and therefore a possible solution of the synthesis problem for $\TS$
is the disjoint sum of a solution of $[\is \rangle^{T_1}$
and a solution of $[\is \rangle^{T_2}$.
\end{theorem}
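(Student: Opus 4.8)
The plan is to treat the two implications separately: the forward one is routine, while the backward one carries all the content and yields the isomorphism as a by-product. For the forward implication, suppose $\TS$ has a Petri net solution $N$. Restricting $N$ to the transitions in $T_1$ (keeping all places and the initial marking) gives a net whose reachable markings are exactly those obtained by firing $T_1$-transitions only; since $N$ realises $\TS$ up to isomorphism, these correspond precisely to the states of $[\is\rangle^{T_1}$, so the restricted net solves $[\is\rangle^{T_1}$, and symmetrically for $T_2$. No diamond hypothesis is needed here.

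\medskip
For the backward implication, assume $N_1$ and $N_2$ solve $[\is\rangle^{T_1}$ and $[\is\rangle^{T_2}$, which are deterministic and totally reachable as sub-systems of $\TS$. I would exhibit the isomorphism through two mutually inverse maps. The projection $\pi=(\pi_1,\pi_2)\colon S\to[\is\rangle^{T_1}\times[\is\rangle^{T_2}$ is defined, for $s$ reached by $\is[\sigma\rangle s$, by splitting $\sigma$ via Proposition~\ref{perm.prop} as $\is[\sigma_1\rangle p_1[\sigma_2\rangle s$ and $\is[\sigma_2\rangle p_2[\sigma_1\rangle s$, and setting $\pi_1(s)=p_1$, $\pi_2(s)=p_2$. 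Conversely, given $(x,y)$ with $\is[\xi\rangle x$ and $\is[\upsilon\rangle y$, Proposition~\ref{genkel.prop} produces a state $z$ with $x[\upsilon\rangle z$ and $y[\xi\rangle z$, and determinism makes $z$ unique; I set $\zeta(x,y)=z$, which is independent of the chosen paths because $y[\xi\rangle z$ fixes $z$ for fixed $\xi$ and $x[\upsilon\rangle z$ fixes it for fixed $\upsilon$. That $\pi$ and $\zeta$ are LTS-morphisms, and that $\zeta\circ\pi$ and $\pi\circ\zeta$ are identities, then follow by short computations using determinism together with Proposition~\ref{perm.prop}; total reachability guarantees $\pi$ is defined on all of $S$.

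\medskip
The one genuinely delicate point — and the main obstacle — is the well-definedness of $\pi$, i.e.\ that $p_1$ does not depend on the path $\sigma$. Two directed paths $\is[\sigma\rangle s$ and $\is[\tau\rangle s$ yield endpoints $p_1,q_1\in[\is\rangle^{T_1}$ that are joined, through $\is$, by a $\pm T_1$-path and, through $s$, by a $\pm T_2$-path; I must show $p_1=q_1$ (symmetrically for $\pi_2$). This is exactly the situation of Proposition~\ref{seqdiam.prop}: its first two points build a bi-infinite, hence periodic, sequence, and its third point — which is where finiteness and the Petri net solvability of the factor $[\is\rangle^{T_1}$ finally enter, through the state equation of Proposition~\ref{steq.prop} — forces the period to collapse, giving $p_1=q_1$. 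This is precisely the step that rules out the deterministic-but-unsolvable ``shift'' identifications that the bare diamond property permits (compare the counterexamples following Proposition~\ref{gdiam.prop}), so every hypothesis of the theorem is consumed here. The delicate part I expect to require care is guaranteeing that the periodic sequence stays inside the solvable factor, so that Proposition~\ref{seqdiam.prop}(3) genuinely applies; I would secure this by retracting the $\pm T_2$-path along the $T_1$-path back to $\is$ using Proposition~\ref{genkel.prop}.

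\medskip
Once $\pi$ is well-defined, $\pi$ and $\zeta$ are mutually inverse morphisms between deterministic totally reachable systems, hence an isomorphism, giving $\TS\equiv_T[\is\rangle^{T_1}\otimes[\is\rangle^{T_2}$. Proposition~\ref{sum0.prop} identifies this product with the reachability graph of $N_1\oplus N_2$, so $N_1\oplus N_2$ solves $\TS$; this delivers both the backward implication and the announced form of the solution, while Proposition~\ref{sum.prop} records the equivalence in its clean ``product solvable iff factors solvable'' form.
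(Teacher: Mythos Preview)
Your overall architecture matches the paper's: the forward direction by restriction, the backward direction by building a bijection $\pi\leftrightarrow\zeta$, with well-definedness of $\pi$ as the crux and Proposition~\ref{seqdiam.prop} as the tool that collapses the bi-infinite sequence. You also correctly flag that the delicate step is keeping the sequence inside $[\is\rangle^{T_1}$ so that Proposition~\ref{seqdiam.prop}(3) applies.

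However, your proposed fix for that delicate step --- ``retracting the $\pm T_2$-path along the $T_1$-path back to $\is$ using Proposition~\ref{genkel.prop}'' --- does not work on its own, and this is a genuine gap. Retracting along $\sigma_1$ merely shifts the bi-infinite sequence to a new bi-infinite sequence through $\is$, whose terms now lie in $\langle\is\rangle^{T_2}$ rather than $\langle\is\rangle^{T_1}$; you have moved the problem to the other factor, not solved it, and iterating just bounces back and forth. What the paper actually does is more intricate and uses an ingredient you do not mention: the PN-solvability of the \emph{other} factor $[\is\rangle^{T_2}$. For each intermediate state $ss_i$ of the sequence, total reachability gives a directed path $\is[\beta\rangle is[\alpha\rangle ss_i$ with $\beta\in T_2^*,\ \alpha\in T_1^*$; one then observes that $\is$ and $is$ are joined both by $\beta\in T_2^*$ and by a $(\pm T_1)^*$-path, so a \emph{second} application of Proposition~\ref{seqdiam.prop}(1) produces $\is[\beta^n\rangle$ for all $n$ inside $[\is\rangle^{T_2}$, and weak periodicity (from PN-solvability of $[\is\rangle^{T_2}$) forces $is=\is$. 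Only then does $ss_i\in[\is\rangle^{T_1}$, and only then can Proposition~\ref{seqdiam.prop}(3) be invoked with solvability of $[\is\rangle^{T_1}$ to get $p_1=q_1$. In short, well-definedness of $\pi_1$ already consumes solvability of \emph{both} factors, not just of $[\is\rangle^{T_1}$ as your sketch suggests.
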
 %%\ENDSATZ

\begin{proof}
First, we may observe that if $\TS$ is Petri net synthesisable, (it is deterministic, totally reachable and)
by dropping the transitions in $T_2$ in such a solution we shall get a solution to $[ \is \rangle^{T_1}$,
and by dropping the transitions in $T_1$ we shall get a solution to $[ \is \rangle^{T_2}$.

\medskip
 Let us now assume that $\TS$ is totally reachable, deterministic and satisfies the general diamond property
for each pair of labels $a\in T_1$ and $b\in T_2$;
let us also assume that $[\is \rangle^{T_1}$ and $[\is \rangle^{T_2}$ are  Petri net synthesisable.

 For each state $s\in S$, from the total reachability we have $\is[\sigma\rangle s$ and, from Proposition~\ref{perm.prop},
 $\is[\alpha\rangle s_1[\beta\rangle s$ and $\is[\beta\rangle s_2[\alpha\rangle s$
for some $s_1,s_2\in S$, $\alpha\in T_1^*$ and $\beta\in T_2^*$ ($\alpha_1$ being the projection of $\sigma$ on $T_1^*$,
and $\beta$ being the projection of $\sigma$ on $T_2^*$), so that $s_1\in[\is\rangle^{T_1}$ and $s_2\in[\is\rangle^{T_2}$.

\medskip
Conversely, if $s_1\in[\is\rangle^{T_1}$ and $s_2\in[\is\rangle^{T_2}$, i.e., there is
$\is[\alpha\rangle s_1$ in $[\is\rangle^{T_1}$ and $\is[\beta\rangle s_2$  in $[\is\rangle^{T_2}$,
then from Proposition~\ref{genkel.prop} we also have $\is[\alpha\rangle s_1[\beta\rangle s$ and
$\is[\beta\rangle s_2[\alpha\rangle s$ for some  $s\in S$.

It remains to show that the correspondence between $s$ and the pair $(s_1,s_2)$ is unique to derive that
$\TS\equiv_T[\is \rangle^{T_1}\otimes[ \is \rangle^{T_2}$
(the compatibility of the transition rules again results from the projections and antiprojections in Propositions~\ref{perm.prop}
and~\ref{genkel.prop}),
so that $TS$ is solved by the sum of a solution of $[\is \rangle^{T_1}$ and a solution of $[\is \rangle^{T_2}$.

%\medskip
For the direction $(s_1,s_2)\to s$, if $\is[\alpha'_1\rangle s_1$ for some $\alpha'_1\in T_1^*$,
from Proposition~\ref{genkel.prop} we have
$\is[\alpha'_1\rangle s_1[\alpha_2\rangle s'$ and $\is[\alpha_2\rangle s_2[\alpha'_1\rangle s'$ for some $s'\in S$.
But from the determinism, $s'=s$. And similarly  if $\is[\alpha'_2\rangle s_2$ with $\alpha'_2\in T_2^*$.

 For the other direction, let us assume that $\is[\sigma\rangle s$ for $\sigma\in(T_1\cup T_2)^*$ and $s\in S$,
 and that $\is[\alpha_1\rangle s_1[\alpha_2\rangle s$ as well as $\is[\alpha'_1\rangle s'_1[\alpha'_2\rangle s$
 with $\alpha_1,\alpha'_1\in T_1^*$, $\alpha_2,\alpha'_2\in T_2^*$ and $s\in S$.
 We need to show that $s_1=s'_1$ (the case based on $T_2$ will be obtained symmetrically).

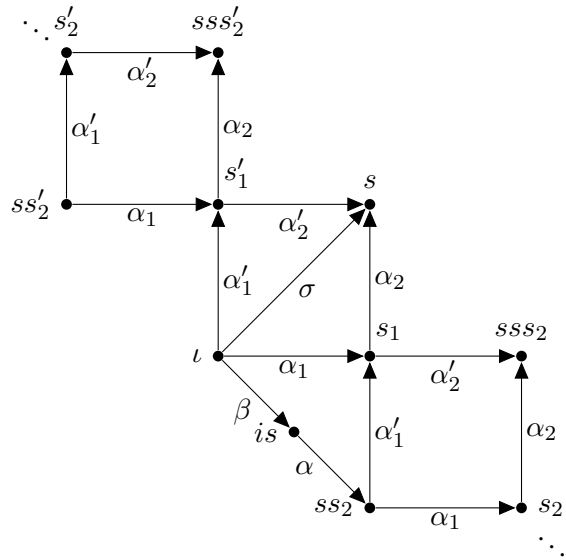
\begin{figure}[hbt]
\vspace*{-3mm}
\begin{center}
\begin{tikzpicture}[scale=2.0]
\refstepcounter{exampleTScounter}\label{diam1b.ts}
\node[circle,fill=black!100,inner sep=0.05cm](is)at(0,0)[label=left:$\is$]{};
\node[circle,fill=black!100,inner sep=0.05cm](s1)at(1,0)[label=above:\hspace*{0.5cm}$s_1$]{};
\node[circle,fill=black!100,inner sep=0.05cm](sp1)at(0,1)[label=above:\hspace*{0.5cm}$s'_1$]{};
\node[circle,fill=black!100,inner sep=0.05cm](s)at(1,1)[label=above:$s$]{};
\draw[-triangle 45](is)--node[auto,swap,inner sep=1.5pt,pos=0.5]{$\alpha_1$}(s1);
\draw[-triangle 45](s1)--node[auto,swap,inner sep=1.5pt,pos=0.5]{$\alpha_2$}(s);
\draw[-triangle 45](is)--node[auto,swap,inner sep=1.5pt,pos=0.5]{$\alpha'_1$}(sp1);
\draw[-triangle 45](sp1)--node[auto,swap,inner sep=1.5pt,pos=0.5]{$\alpha'_2$}(s);
\draw[-triangle 45](is)--node[auto,swap,inner sep=1.5pt,pos=0.5]{$\sigma$}(s);
\node[circle,fill=black!100,inner sep=0.05cm](s2)at(2,-1)[label=right:$s_2$]{};
\node[circle,fill=black!100,inner sep=0.05cm](ss2)at(1,-1)[label=left:$ss_2$]{};
\node[circle,fill=black!100,inner sep=0.05cm](sss2)at(2,0)[label=above:$sss_2$]{};
\draw[-triangle 45](s2)--node[auto,swap,inner sep=1.5pt,pos=0.5]{$\alpha_2$}(sss2);
\draw[-triangle 45](s1)--node[auto,swap,inner sep=1.5pt,pos=0.5]{$\alpha'_2$}(sss2);
\draw[-triangle 45](ss2)--node[auto,swap,inner sep=1.5pt,pos=0.5]{$\alpha_1$}(s2);
\draw[-triangle 45](ss2)--node[auto,swap,inner sep=1.5pt,pos=0.5]{$\alpha'_1$}(s1);
\node[circle,fill=black!100,inner sep=0.05cm](sp2)at(-1,2)[label=above:$s'_2$]{};
\node[circle,fill=black!100,inner sep=0.05cm](ssp2)at(-1,1)[label=left:$ss'_2$]{};
\node[circle,fill=black!100,inner sep=0.05cm](sssp2)at(0,2)[label=above:$sss'_2$]{};
\draw[-triangle 45](sp2)--node[auto,swap,inner sep=1.5pt,pos=0.5]{$\alpha'_2$}(sssp2);
\draw[-triangle 45](sp1)--node[auto,swap,inner sep=1.5pt,pos=0.5]{$\alpha_2$}(sssp2);
\draw[-triangle 45](ssp2)--node[auto,swap,inner sep=1.5pt,pos=0.5]{$\alpha'_1$}(sp2);
\draw[-triangle 45](ssp2)--node[auto,swap,inner sep=1.5pt,pos=0.5]{$\alpha_1$}(sp1);
\node[]()at(-1.2,2.2)[]{$\ddots$};\node[]()at(2.2,-1.2)[]{$\ddots$};
\node[circle,fill=black!100,inner sep=0.05cm](iss)at(0.5,-0.5)[label=left:$is$]{};
\draw[-triangle 45](is)--node[auto,swap,inner sep=1.5pt,pos=0.5]{$\beta$}(iss);
\draw[-triangle 45](iss)--node[auto,swap,inner sep=1.5pt,pos=0.3]{$\alpha$}(ss2);
\end{tikzpicture}
\end{center}\vspace*{-9mm}
\caption{Proof of Theorem~\ref{PNsynt.thm}.}
\label{PNsynt.fig}
\end{figure}

%\eject
Since we have both $s'_1[(-\alpha'_1)\alpha_1\rangle s_1$ and $s'_1[\alpha'_2(-\alpha_2)\rangle s_1$ ,
with $(-\alpha'_1)\alpha_1\in (\pm T_1)^*$ and $\alpha'_2(-\alpha_2)\in  (\pm T_2)^*$,
from Proposition~\ref{seqdiam.prop}(1) we may construct a series of configurations,
forward and backward, as also illustrated in Figure~\ref{PNsynt.fig},
with the same paths $(-\alpha'_1)\alpha_1$ and $\alpha'_2(-\alpha_2)$, leading to states $s_2,s_3,\ldots$ and $s'_2,s'_3,\ldots$

Since $\TS$ is totally reachable and satisfies the general diamond property for $T_1$ and $T_2$,
from Proposition~\ref{perm.prop} we also have a path $\is[\beta\rangle is[\alpha\rangle ss_2$
with $\beta\in T_2^*$ and $\alpha\in T_1^*$.
Since $\is[\beta\rangle is$ and $\is[\alpha_1(-\alpha'_1)(-\alpha)\rangle is$ with $\alpha_1(-\alpha'_1)(-\alpha)\in(\pm T)^*$,
we may again apply Proposition~\ref{seqdiam.prop}(1) (we shall only need it forwardly here),
constructing paths of increasing length $\beta^n$.
Since $\TS$ is finite, $[\is\rangle^{T_2}$ is Petri net solvable and those paths $\is[\beta^n\rangle$ belong to $[\is\rangle^{T_2}$,
by weak periodicity (see also the proof of  Proposition~\ref{seqdiam.prop}(3)) we have that $is=\is$.
Hence $ss_2$ belongs to $[\is\rangle^{T_1}$, and we may perform the same reasoning for $ss_3,ss_4,\ldots$
(with increasing paths in $(\pm T_1)^*$), as well as for $ss'_2,ss'_3, \ldots$

%\medskip
As a consequence, all the states $s_1,ss_2,s_3,ss_3,\ldots,ss'_2,s'_2,ss'_3,s'_3,\ldots$
belong to $[\is\rangle^{T_1}$.
Since $[\is\rangle^{T_1}$ is by hypothesis Petri net solvable, we may then apply  Proposition~\ref{seqdiam.prop}(3)
 and we get $s_1=s'_1$, as expected.
\end{proof}%%%%\ENDBEW{PNsynt.thm}

It remains to find adequate subsets of labels $T_1$ and $T_2$,  partitioning $T$ and satisfying the general diamond property.
To do that, one may rely on the following, which is again a local property.

\begin{definition}{\sc Connected labels} \label{con.def}\\
Let $\TS=(S,\to,T,\is)$ be an \lts{} and $a,b\in T$ be two distinct labels.

\medskip
We shall denote by $a\leftrightarrow b$ the fact that they do not form general diamonds, i.e.,
there are states which do not satisfy one of the constraints in Figure~\ref{gdiam.fig}.

We shall also note $\centernot\Diamond=\mathord{\leftrightarrow^*}$, i.e.,
the reflexive and transitive closure of $\leftrightarrow$,
meaning in some sense that the labels are `non-diamondisable'.\QED %\BX{\ref{con.def}} %\vspace{-2.7ex}
\end{definition}

These relations mean that in any decomposition, if $a\leftrightarrow b$ they
must belong to the same component, i.e., $a\in T_1\Rightarrow [a]\subseteq T_1$,
where $[a]=\{b\in T \mid a\mathrel{\centernot\Diamond} b\}$.
But from Theorem \ref{PNsynt.thm}, we know this is enough: for each equivalence class,
either the synthesis works and we have a global solution by taking the disjoint sum of all the solutions,
or one (or more) of the subproblems fails, and we know there is no global solution for the whole system
(and we may spot the culprits).

Our proposed factorisation algorithm now works as follows:
First, iterate over all states of the given lts, and for each state check if the adjacent edges
form general diamonds. If not, their labels must be in the same equivalence class.
Then, for each equivalence class \([a]\) try Petri net synthesis on \([\is\rangle^{[a]}\).
If it works, the result is the disjoint sum of the computed Petri nets.
This constructs the equivalence relation by repeatedly joining classes, but it also allows to stop the iteration early
when only one equivalence class remains (in which case no non-trivial factorisation is possible).

\drop{  %RD added paragraph, to be checked carefully!
Concerning the estimation of the efficiency of this procedure,
we may first observe that, for a non-factorisable lts,
it is in the worst case linear in the size $|S|$ of the state space,
but it will in general be less than that since one may stop as soon as
it is discovered that there is a single equivalence class;
in any case, the factorisation procedure is then useless,
but the extra burden is negligible  with respect to the time of the proper synthesis,
which is polynomial with a degree between 2 and 5,
depending on the class of nets which is searched for.
For a factorisable lts, the procedure is linear but, if there are $f$ factors approximately of the same size,
the size of each subsystem to synthesise is the $f$th root of the original one,
which divides the polynomial degree of the synthesis by $f$.
However, this concerns the worst case complexity analysis and the true one may be rather different,
which is the reason why we conducted various experiments, a summary of which will be described in the next section.
Note also that if the factorisation procedure is placed
at the end of the pre-synthesis phase, that checks the structure of the lts to determine  if there is a chance
the synthesis succeeds (see \cite{BDS-acta17}), the factorisation will not be entered at all if a failure is detected before.
If it is intertwined with the pre-synthesis,
the factorisation will start, but will be interrupted meanwhile if a failure is detected.

%RD1 some improvements?
Concerning the estimation of the efficiency of this procedure,
we may observe that, from the forward and the backward determinism, any
state can be connected to \(2\cdot\lvert T\rvert\) edges (each label may occur
once in forward and backwards direction).
Checking the presence of general diamonds
requires examining each pair of these edges,
so requires time in \(\mathcal{O}(\lvert T\rvert^2)\) for a each state
and \(\mathcal{O}(\lvert S\rvert\cdot\lvert T\rvert^2)\) for the full lts.
However, in practice, in the non-factorisable case, the procedure is likely to be a lot faster
since one may stop as soon as it is discovered that there is a single equivalence class;
in this case the factorisation procedure is then useless,
but the extra burden may be expected to be neglible with respect to the time of the proper synthesis.
For a factorisable lts, if there are \(f\) factors approximately of the same size,
the size of each subsystem to synthesise is the $f$th root of the original size.
This divides the polynomial degree of the synthesis by \(f\), i.e.,
replacing \(\mathcal{O}(|S|^d)\) with \(\mathcal{O}(f\cdot |S|^{d/f})\).
However, this concerns the worst case complexity analysis and the true one may be rather different,
which is the reason why we conducted various experiments, a summary of which will be described in the next section.
Note also that if the factorisation procedure is placed at the end of the pre-synthesis phase,
that checks the structure of the lts to determine if there is a chance that the synthesis succeeds (see \cite{BDS-acta17}),
the factorisation will not be entered at all if a failure is detected before.
}

\section{Articulations} \label{art.sct}

Let us consider two disjoint transition systems $\TS_1$ and $\TS_2$ and a state $s$ in the first of them ($s\in S_1$);
the general idea of their articulation around $s$ is to `plug' the second one on the chosen state,
as schematised in Figure~\ref{articul.fig}.

\begin{figure}[hbt]
\vspace*{-24mm}
\begin{tikzpicture}[scale=0.88]
\draw[very thick] (-1,0) .. controls (-5,4) and (3,4) .. (-1,0);
\node[](T1)at(-1,2){$T_1$};\node[](i1)at(-1,-0.2){$\iota_1$};
{\node[circle,minimum size = 2mm,fill=black!100,inner sep=0.05cm](s1)at(0.2,2)[label=right:${s}$]{};
\node[](TS1)at(-1,-1){{$\TS_1$}};}
{\node[circle,minimum size = 2mm,fill=black!100,inner sep=0.05cm](s12)at(0.2,2)[]{};}
{\draw[very thick] (3,0) .. controls (-1,4) and (7,4) .. (3,0);
\node[](T2)at(3,2){$T_2$};\node[](i2)at(3,-0.2){$\iota_2$};
\node[](TS1)at(3,-1){{$\TS_2$}};
\node[](TT)at(1,0){$T_1\cap T_2=\emptyset$};}
\draw[very thick] (7,0) .. controls (3,4) and (11,4) .. (7,0);
\node[circle,minimum size = 2mm,fill=black!100,inner sep=0.05cm](s1)at(8.2,2)[label=right:${s}$]{};
\draw[very thick] (8.25,2) .. controls (12,-2) and (12,6) .. (8.25,2);
\node[](i1)at(7,-0.2){$\iota$};
\node[]at(8.5,-1){{$\TS_1 \triangleleft s \triangleright \TS_2$}};
\node[]at(5,-1){{$\impl$}};
\end{tikzpicture}\vspace*{-8mm}
\caption{General idea of articulations.}
\label{articul.fig}\vspace*{-4mm}
\end{figure}
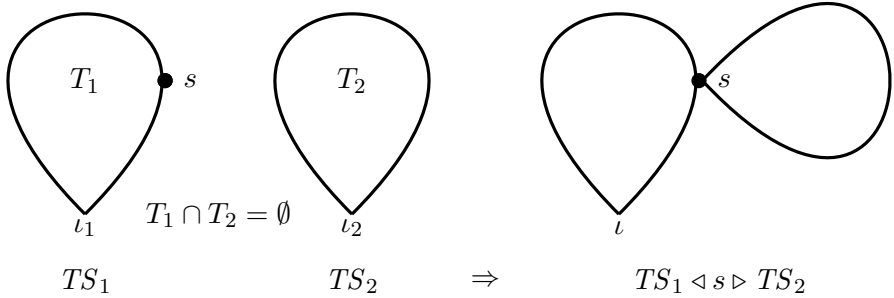

\begin{definition}{\sc Articulation of two disjoint \lts} \label{articul.def}\\
Let $\TS_1=(S_1,\to_1,T_1,\is_1)$ and $\TS_2=(S_2,\to_2,T_2,\is_2)$ be two (totally reachable and deterministic) LTSs
with $T_1\cap T_2=\emptyset$ and $s\in S_1$.
Thanks to isomorphisms we may assume that $S_1\cap S_2=\{s\}$ and $\is_2=s$. We shall then denote by $\TS_1 \triangleleft s \triangleright \TS_2=
(S_1\cup S_2,T_1\cup T_2,\to_1\cup\to_2,\is_1)$ the {\it articulation} of $\TS_1$ and $\TS_2$ around $s$.

\medskip
Conversely, let $\TS=(S,\to,T,\is)$ be a  (totally reachable and deterministic) LTS.
We shall say that a label $t$ is useful  if $\exists s,s'\in S:\; s[t\rangle s'$.
 Let $\emptyset\subseteq T_1\subseteq T$; we shall then denote by
 $\adj(T_1)=\{s\in S|\exists t\in T_1: s[t\rangle\mbox{ or }[t\rangle s\}
     \mbox{ if there are useful labels in }T_1,\{\is\}\mbox{ otherwise}$.
This is the {\it adjacency set} of $T_1$, i.e., the set of states connected to $T_1$
(with the convention that, if $T_1$ is empty or only contains useless labels, the result is the singleton initial state).
Let $T_2=T\setminus T_1$ and $s\in S$.
We shall say that $\TS$ is {\it articulated}\,\footnote{This notion
 has some similarity with the cut vertices (or articulation points) introduced for connected unlabelled undirected graphs,
 whose removal disconnects the graph. They have been used for instance
to decompose such graphs into biconnected components~\cite{HopTar73,WestTar92}.
Note however that here we have labelled graphs.} by $T_1$ and $T_2$ around $s$
 if  $\adj(T_1)\cap\adj(T_2)=\{s\}$, $\forall s_1\in\adj(T_1)\exists\alpha_1\in T_1^*:\is[\alpha_1\rangle s_1$ and
$\forall s_2\in\adj(T_2)\exists\alpha_2\in T_2^*:s[\alpha_2\rangle s_2$. \QED
\end{definition} %%\ENDDEF{articul.def}

This operator is only defined up to isomorphism since we may need to rename the state sets
(usually the right one, but we may also rename the left one, or both).
The only constraint is that, after the relabellings, $s$ is the unique common state of $\TS_1$ and $\TS_2$,
and is the state where the two systems are to be articulated.
Figure~\ref{art.fig} illustrates this operator. It also shows that the articulation highly relates
on the state around which the articulation takes part.
It may also be observed that, if $\TS_0=(\{\is\},\emptyset,\emptyset,\is)$ is the trivial empty LTS,
we have that, for any state $s$ of $\TS$,  $\TS\triangleleft s \triangleright \TS_0\equiv \TS$, i.e.,
we have a kind of right neutral trivial articulation.
Similarly, $\TS_0\triangleleft \is \triangleright \TS\equiv \TS$, i.e.,
we have a kind of left neutral trivial articulation. However, these neutrals  will play no role in the following of this paper,
so that we shall exclude them from our considerations (and assume the edge label sets to be non-empty,
and only composed of useful labels).

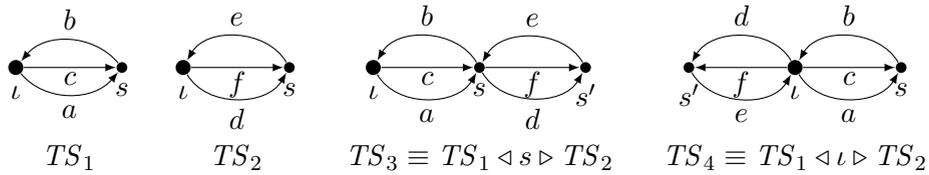
\begin{figure}[hbt]
\refstepcounter{exampleno}\label{ex1.ts}
\begin{center}
\begin{tikzpicture}[scale=0.7]\vspace*{-2mm}
\node[]at(1,-1.7){$\TS_{\ref{ex1.ts}}$};
\node[circle,fill=black!100,inner sep=0.07cm](s0)at(0,0)[label=below:$\is$]{};
\node[circle,fill=black!100,inner sep=0.05cm](s1)at(2,0)[label=below:$s$]{};
\draw[-latex](s0)--node[auto,swap,inner sep=1.5pt,pos=0.5]{$c$}(s1);
\draw[-latex](s0)to[out=-50,in=-130]node[auto,swap]{$a$}(s1);
\draw[-latex](s1)to[out=130,in=50]node[auto,swap]{$b$}(s0);
\end{tikzpicture}\hspace{0.2cm}
\refstepcounter{exampleno}\label{ex2.ts}
\begin{tikzpicture}[scale=0.7]\vspace*{-2mm}
\node[]at(1,-1.7){$\TS_{\ref{ex2.ts}}$};
\node[circle,fill=black!100,inner sep=0.07cm](s0)at(0,0)[label=below:$\is$]{};
\node[circle,fill=black!100,inner sep=0.05cm](s1)at(2,0)[label=below:$s$]{};
\draw[-latex](s0)--node[auto,swap,inner sep=1.5pt,pos=0.5]{$f$}(s1);
\draw[-latex](s0)to[out=-60,in=-120]node[auto,swap]{$d$}(s1);
\draw[-latex](s1)to[out=120,in=60]node[auto,swap]{$e$}(s0);
\end{tikzpicture}\hspace{0.2cm}
\refstepcounter{exampleno}\label{ex3.ts}
\begin{tikzpicture}[scale=0.7]\vspace*{-2mm}
\node[]at(2,-1.7){$\TS_{\ref{ex3.ts}}\equiv\TS_{\ref{ex1.ts}} \triangleleft s \triangleright \TS_{\ref{ex2.ts}}$};
\node[circle,fill=black!100,inner sep=0.07cm](s0)at(0,0)[label=below:$\is$]{};
\node[circle,fill=black!100,inner sep=0.05cm](s1)at(2,0)[label=below:$s$]{};
\node[circle,fill=black!100,inner sep=0.05cm](s2)at(4,0)[label=below:$s'$]{};
\draw[-latex](s0)--node[auto,swap,inner sep=1.5pt,pos=0.5]{$c$}(s1);
\draw[-latex](s0)to[out=-60,in=-120]node[auto,swap]{$a$}(s1);
\draw[-latex](s1)to[out=120,in=60]node[auto,swap]{$b$}(s0);
\draw[-latex](s1)--node[auto,swap,inner sep=1.5pt,pos=0.5]{$f$}(s2);
\draw[-latex](s1)to[out=-60,in=-120]node[auto,swap]{$d$}(s2);
\draw[-latex](s2)to[out=120,in=60]node[auto,swap]{$e$}(s1);
\end{tikzpicture}\hspace{0.2cm}
\refstepcounter{exampleno}\label{ex4.ts}
\begin{tikzpicture}[scale=0.7]\vspace*{-2mm}
\node[]at(0,-1.7){$\TS_{\ref{ex4.ts}}\equiv\TS_{\ref{ex1.ts}} \triangleleft \is \triangleright \TS_{\ref{ex2.ts}}$};
\node[circle,fill=black!100,inner sep=0.07cm](s0)at(0,0)[label=below:$\is$]{};
\node[circle,fill=black!100,inner sep=0.05cm](s1)at(2,0)[label=below:$s$]{};
\node[circle,fill=black!100,inner sep=0.05cm](s2)at(-2,0)[label=below:$s'$]{};
\draw[-latex](s0)--node[auto,swap,inner sep=1.5pt,pos=0.5]{$c$}(s1);
\draw[-latex](s0)to[out=-60,in=-120]node[auto,swap]{$a$}(s1);
\draw[-latex](s1)to[out=120,in=60]node[auto,swap]{$b$}(s0);
\draw[-latex](s0)--node[auto,swap,inner sep=1.5pt,pos=0.5,below]{$f$}(s2);
\draw[-latex](s0)to[out=120,in=60]node[auto,swap]{$d$}(s2);
\draw[-latex](s2)to[out=-60,in=-120]node[auto,swap]{$e$}(s0);
\end{tikzpicture}
\end{center}\vspace*{-6mm}
\caption{Some articulations.}
\label{art.fig}
\end{figure}

Several easy but interesting properties may be derived for this articulation operator \cite{RD-articul-PN}.

\begin{proposition}\label{articul.prop}{\bf (Both forms of articulation are equivalent)}\\
If $\TS=(S,\to,T,\is)$ is articulated by $T_1$ and $T_2$ around $s$, then with $\to_1=\to\cap \adj(T_1)\times T_1\times\adj(T_1)$
(i.e., the  restriction of $\to$ to $T_1$) and similarly for $\to_2$,
the structures $\TS_1=(\adj(T_1),\to_1,T_1,\is)$ and $\TS_2=(\adj(T_2),\to_2,T_2,s)$ are
totally reachable LTSs and $\TS\equiv_{T_1\uplus T_2}\TS_1 \triangleleft s \triangleright \TS_2$
(in that case we do not need to apply a relabelling to $\TS_1$ and $\TS_2$).

\medskip
Conversely, $\TS_1 \triangleleft s \triangleright \TS_2$ is articulated by the label sets of $\TS_1$ and $\TS_2$ around~$s$.
\end{proposition}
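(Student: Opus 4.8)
The plan is to prove both directions purely by unfolding the definitions of $\adj$, of the articulation operator, and of the articulation conditions; no deep argument is needed, only careful bookkeeping of the adjacency sets and the reachability clauses, with some attention to the corner cases at the initial states.

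For the forward direction I would first record the elementary fact that $\to_1$ collects \emph{exactly} the $T_1$-labelled edges of $\TS$: if $(x,t,y)\in\mathord{\to}$ with $t\in T_1$ then $x[t\rangle$ and $[t\rangle y$, so $x,y\in\adj(T_1)$ and hence $(x,t,y)\in\mathord{\to_1}$; the converse inclusion is immediate, and symmetrically for $\to_2$ and $T_2$. From this, total reachability of $\TS_1=(\adj(T_1),\to_1,T_1,\is)$ is exactly the second articulation clause: for $s_1\in\adj(T_1)$ the witnessing path $\is[\alpha_1\rangle s_1$ with $\alpha_1\in T_1^*$ uses only $T_1$-edges, so it is a path of $\TS_1$; the same path shows $\is\in\adj(T_1)$ (its first step leaves $\is$ by a $T_1$-edge, or $\alpha_1=\leer$ and $s_1=\is$), so $\TS_1$ is a well-defined LTS with initial state $\is$. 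Symmetrically $\TS_2=(\adj(T_2),\to_2,T_2,s)$ is totally reachable with initial state $s$, using $s\in\adj(T_1)\cap\adj(T_2)$; determinism of both is inherited from $\TS$.

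Next I would show the identity map is the required isomorphism. Since $\adj(T_1)\cap\adj(T_2)=\{s\}$ and $\is_2=s$, no relabelling is needed and $\TS_1\triangleleft s\triangleright\TS_2=(\adj(T_1)\cup\adj(T_2),T_1\uplus T_2,\to_1\cup\to_2,\is)$ is defined. By the first observation $\mathord{\to}=\mathord{\to_1}\cup\mathord{\to_2}$, since every edge carries a label in $T_1\uplus T_2$ and hence falls into $\to_1$ or $\to_2$. For the state sets, $\adj(T_1)\cup\adj(T_2)\subseteq S$ trivially, while for the reverse inclusion total reachability gives every $x\neq\is$ an incoming edge whose label lies in $T_1\uplus T_2$, placing $x$ in $\adj(T_1)\cup\adj(T_2)$, and $\is\in\adj(T_1)$. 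Hence $S=\adj(T_1)\cup\adj(T_2)$, the two tuples coincide, and $\TS\equiv_{T_1\uplus T_2}\TS_1\triangleleft s\triangleright\TS_2$.

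For the converse I start from $\TS=\TS_1\triangleleft s\triangleright\TS_2$ and compute the adjacency sets inside $\TS$. Because $T_1\cap T_2=\emptyset$, the $T_1$-edges of $\TS$ are precisely the edges of $\to_1$, whose endpoints lie in $S_1$; conversely, total reachability of $\TS_1$ makes every non-initial state of $S_1$ incident to some $\to_1$-edge, and the non-emptiness/usefulness assumption on $T_1$ makes $\is_1$ incident to a $T_1$-edge as well, so $\adj_\TS(T_1)=S_1$, and symmetrically $\adj_\TS(T_2)=S_2$. The three articulation clauses then read off immediately: $\adj_\TS(T_1)\cap\adj_\TS(T_2)=S_1\cap S_2=\{s\}$; for $s_1\in S_1$ a $T_1^*$-path $\is_1[\alpha_1\rangle s_1$ exists in $\TS_1$ by total reachability and lifts to $\TS$; and for $s_2\in S_2$ a $T_2^*$-path $s[\alpha_2\rangle s_2$ exists in $\TS_2$ since $\is_2=s$. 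Finally one checks that $\TS$ is itself totally reachable (reach $s$ inside $\TS_1$, then continue inside $\TS_2$) and deterministic (any two equally-labelled edges leaving a state lie in the same $\to_i$ by label-disjointness), so Definition~\ref{articul.def} applies. The only points requiring care are these corner cases at the initial states, namely $\is\in\adj(T_1)$ and $\adj_\TS(T_1)=S_1$ when $\TS_1$ is small; this is exactly where the standing assumption that label sets are non-empty and contain only useful labels is used, and it is the main (mild) obstacle to a clean statement.
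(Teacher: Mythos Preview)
Your argument is correct and proceeds by exactly the kind of definitional unfolding one would expect; the paper itself does not spell out a proof of this proposition (it is listed among the ``easy but interesting properties'' imported from~\cite{RD-articul-PN}), so there is nothing to compare against beyond noting that your approach is the natural one.

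One very small point of care: in the converse direction you write that ``the non-emptiness/usefulness assumption on $T_1$ makes $\is_1$ incident to a $T_1$-edge as well''. Usefulness of the labels in $T_1$ only guarantees that \emph{some} $T_1$-edge exists in $\TS_1$, not that it touches $\is_1$. The conclusion is still correct, but the missing step is: if $|S_1|>1$, total reachability of $\TS_1$ forces an outgoing edge at $\is_1$; if $|S_1|=1$, the useful label must be a self-loop at $\is_1$. Either way $\is_1\in\adj_\TS(T_1)$, so $\adj_\TS(T_1)=S_1$ as claimed. With that sentence added, the proof is complete.
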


\begin{proposition}\label{art.prop}{\bf (Evolutions of an articulation)}\\
If $\TS\equiv\TS_1 \triangleleft s \triangleright \TS_2$,
$\is[\alpha\rangle s'$ is an evolution of $\TS$ iff  it is an alternation of evolutions of $\TS_1$ and $\TS_2$
separated by occurrences of $s$, i.e., either $\alpha\in T_1^*$ or
$\alpha=\alpha_1\alpha_2\ldots\alpha_n$ such that $\alpha_i\in T_1^*$ if $i$ is odd,  $\alpha_i\in T_2^*$ if $i$ is even,
$\is[\alpha_1\rangle s$ and $\forall i\in\{1,2,\ldots,n-1\}:[\alpha_i\rangle s[\alpha_{i+1}\rangle$.
\end{proposition}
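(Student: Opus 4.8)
The plan is to prove Proposition~\ref{art.prop} by establishing both directions of the ``iff'' and using the structural definition of the articulation together with the fact that the two label sets are disjoint. Recall from Definition~\ref{articul.def} that $\TS_1 \triangleleft s \triangleright \TS_2$ has state set $S_1\cup S_2$ with $S_1\cap S_2=\{s\}$, edge relation $\to_1\cup\to_2$, and initial state $\is_1$. The crucial structural observation, which I would isolate first, is that since $T_1\cap T_2=\emptyset$ and $\to_1$ only touches states of $S_1$ while $\to_2$ only touches states of $S_2$, the \emph{only} state through which a path can pass from the $\TS_1$-part to the $\TS_2$-part (or back) is the shared state $s$. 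Concretely: any edge labelled in $T_1$ stays within $S_1$, and any edge labelled in $T_2$ stays within $S_2$; hence a maximal factor of $\alpha$ lying entirely in $T_1^*$ traces a path inside $\TS_1$, and a maximal $T_2^*$-factor traces a path inside $\TS_2$, and the transition between two such consecutive factors can only occur at a state belonging to both $S_1$ and $S_2$, namely $s$.

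For the forward direction, I would take an evolution $\is[\alpha\rangle s'$ in $\TS$ and factor $\alpha$ into maximal blocks of consecutive labels from the same set, writing $\alpha=\alpha_1\alpha_2\cdots\alpha_n$ where the blocks alternate between $T_1^*$ and $T_2^*$. Since the initial state $\is=\is_1\in S_1$ and $\is_2=s$, the first block must be in $T_1^*$ (a $T_2$-edge at $\is$ would require $\is\in S_2$, forcing $\is=s$, which is the degenerate case absorbed into $\alpha\in T_1^*$; more carefully, if the first block were in $T_2^*$ then $\is$ would have to equal $s$). Thus odd blocks lie in $T_1^*$ and even blocks in $T_2^*$. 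By the structural observation, the boundary state after block $\alpha_1$ belongs to $S_1\cap S_2=\{s\}$, giving $\is[\alpha_1\rangle s$, and every subsequent block-boundary likewise forces passage through $s$, i.e.\ $[\alpha_i\rangle s[\alpha_{i+1}\rangle$ for each $i\in\{1,\ldots,n-1\}$. The case where $\alpha$ uses only $T_1$-labels is precisely $\alpha\in T_1^*$ and requires no return to $s$.

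For the converse direction, I would verify that any $\alpha$ of the stated alternating shape is genuinely executable in $\TS$ from $\is$. This is essentially immediate: each $T_1^*$-block is an evolution of $\TS_1$ by Proposition~\ref{articul.prop} (so it is present in $\to_1\subseteq\to$), each $T_2^*$-block is an evolution of $\TS_2$ (present in $\to_2\subseteq\to$), and the hypotheses $\is[\alpha_1\rangle s$ and $[\alpha_i\rangle s[\alpha_{i+1}\rangle$ guarantee that the endpoint of each block coincides with the starting point of the next, so the concatenation composes into a single path in $\TS$. Here I would lean on the fact that $\is_2=s$ so that each even block, which is a $\TS_2$-evolution, correctly starts at $s$, and each odd block after the first returns to $s$ before the next $\TS_2$-block begins.

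The main obstacle, such as it is, is the bookkeeping around the shared state and the degenerate endpoints rather than any deep argument: one must be careful that a block boundary really lands on $s$ and not merely on ``some state reachable in both parts,'' which is exactly where $\adj(T_1)\cap\adj(T_2)=\{s\}$ (built into the articulation via Proposition~\ref{articul.prop}) does the work. I would also take care with the boundary conventions---whether $\alpha$ ends in a $T_1$- or $T_2$-block, and the purely-$T_1^*$ special case---to match the statement's phrasing exactly. Determinism of $\TS_1$ and $\TS_2$ is not strictly needed for this evolution characterisation, but it reassures us that the factorisation into blocks and the identification of the passage state are unambiguous.
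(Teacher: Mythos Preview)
The paper does not give its own proof of this proposition: it is listed among the ``easy but interesting properties'' imported from \cite{RD-articul-PN} and stated without argument. Your approach is the natural one and is correct: since $S_1\cap S_2=\{s\}$ and every $T_i$-labelled edge lies entirely inside $S_i$, the only state at which a path can switch alphabets is $s$, so factoring $\alpha$ into maximal monochromatic blocks yields the required alternation with block boundaries at $s$; the converse is immediate by concatenation.

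One small wrinkle worth tightening: your treatment of the case $\is=s$ (which can genuinely occur, cf.\ Proposition~\ref{com.prop}) is slightly muddled. If the first label of $\alpha$ lies in $T_2$, this is \emph{not} ``absorbed into $\alpha\in T_1^*$''; rather one takes $\alpha_1=\leer\in T_1^*$, so that $\is[\leer\rangle s$ holds trivially. The paper's example right after the proposition makes explicit that empty blocks are permitted in the decomposition, so the maximal-block factorisation should be relaxed to allow a leading (and possibly interleaved) $\leer$ when needed. With that adjustment your argument is complete.
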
 %%\ENXPROP{art.prop}

For instance, for $\TS_{\ref{ex3.ts}}$ in Figure~\ref{art.fig}, a possible evolution is $\is[abc\rangle s[f\hspace{-0.1em}ede\rangle s[b\rangle \is$,
but also equivalently $\is[a\rangle s[\leer\rangle s[bc\rangle s[f\hspace{-0.1em}e\rangle s[\leer\rangle s[de\rangle s[b\rangle \is$
(where $\leer$ is the empty sequence).

\begin{proposition}\label{assoc.prop}{\bf (Associativity of articulations)}\\
Let us assume that $\TS_1$, $\TS_2$ and $\TS_3$ are three LTSs with label sets $T_1$, $T_2$ and $T_3$ respectively,
pairwise disjoint. Let $s_1$ be a state of $\TS_1$ and $s_2$ be a state of $\TS_2$.
Then, $\TS_1\triangleleft s_1\triangleright(\TS_2\triangleleft s_2\triangleright \TS_3)\equiv_{T_1\cup T_2\cup T_3}
(\TS_1\triangleleft s_1\triangleright\TS_2)\triangleleft s'_2\triangleright \TS_3$,
where $s'_2$ corresponds in $\TS_1\triangleleft s_1\triangleright\TS_2$ to $s_2$ in $\TS_2$
(let us recall that the articulation operator may rename the states of the second operand).
\end{proposition}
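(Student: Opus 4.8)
The plan is to prove the stronger statement that, for suitably chosen representatives of the isomorphism classes, the two sides are literally the same tuple; since both constructions are only defined up to isomorphism, exhibiting one common representative suffices. The only real work is the bookkeeping of the state identifications forced by the two nested applications of $\triangleleft\,\cdot\,\triangleright$, together with checking that the intermediate LTSs are totally reachable and deterministic so that the operator is legally applicable at each step.

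First I would fix representatives of $\TS_1$, $\TS_2$, $\TS_3$ so that every renaming demanded by the operator holds simultaneously: take the state sets to satisfy $S_1\cap S_2=\{s_1\}$, $S_2\cap S_3=\{s_2\}$ and $S_1\cap S_3=\emptyset$, with $\is_2=s_1$ and $\is_3=s_2$, keeping all edges as given. This is possible because the three label sets are pairwise disjoint by hypothesis and only the state names need adjusting; the identifications $\is_2=s_1$ and $\is_3=s_2$ are exactly the gluings the operator performs, and they do not interfere with one another, since relabelling by a bijection preserves any equality of states already imposed.

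Then I would unfold both sides against Definition~\ref{articul.def}. On the left, $\TS_2\triangleleft s_2\triangleright\TS_3=(S_2\cup S_3,\,T_2\cup T_3,\,\to_2\cup\to_3,\,\is_2)$ with $\is_3=s_2$; its initial state is $\is_2=s_1$, which is precisely the state of $\TS_1$ where it is plugged, so $\TS_1\triangleleft s_1\triangleright(\TS_2\triangleleft s_2\triangleright\TS_3)=(S_1\cup S_2\cup S_3,\,T_1\cup T_2\cup T_3,\,\to_1\cup\to_2\cup\to_3,\,\is_1)$. On the right, $\TS_1\triangleleft s_1\triangleright\TS_2=(S_1\cup S_2,\,T_1\cup T_2,\,\to_1\cup\to_2,\,\is_1)$ with $\is_2=s_1$; here the state corresponding to $s_2$ is $s_2$ itself, so $s'_2=s_2$, and since $\is_3=s_2$ and $S_3\cap(S_1\cup S_2)=\{s_2\}$ the second articulation gives $(\TS_1\triangleleft s_1\triangleright\TS_2)\triangleleft s'_2\triangleright\TS_3=(S_1\cup S_2\cup S_3,\,T_1\cup T_2\cup T_3,\,\to_1\cup\to_2\cup\to_3,\,\is_1)$. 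The two tuples coincide, which yields the claimed isomorphism.

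The routine verifications I would record along the way are that each intermediate system is a legal operand: the systems $\TS_{23}=\TS_2\triangleleft s_2\triangleright\TS_3$ and $\TS_1\triangleleft s_1\triangleright\TS_2$ are totally reachable (the initial state reaches the articulation state, whence the whole second component) and deterministic (the two edge relations use disjoint label sets, so no state can carry two identically labelled out- or in-edges), exactly as in Proposition~\ref{articul.prop}. The main, though light, obstacle is the consistency of the renamings: one must check that the single common representative above really satisfies both operators' disjointness requirements on each side, namely $S_1\cap S_{23}=\{s_1\}$ with $\is_{23}=s_1$ on the left and $(S_1\cup S_2)\cap S_3=\{s'_2\}$ with $\is_3=s'_2$ on the right. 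Once this is settled, the equality of the edge relations is immediate, because articulation never alters edges; it only merges a single state of one factor with the initial state of the next.
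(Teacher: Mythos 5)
The paper itself gives no proof of Proposition~\ref{assoc.prop} (it is imported from the cited articulation paper), so there is nothing internal to compare against; judged on its own merits, your argument is correct and is the natural one: pick one system of representatives in which all the state identifications demanded by both bracketings hold simultaneously, observe that both sides then unfold to the literal tuple $(S_1\cup S_2\cup S_3,\;T_1\cup T_2\cup T_3,\;\to_1\cup\to_2\cup\to_3,\;\is_1)$, and check that the intermediate articulations are legal operands (totally reachable and deterministic, which holds because the label sets are disjoint and the glued state inherits reachability through $s_1$). The one place where your set-up is not quite right is the boundary case $s_2=\is_2$: after the renaming $\is_2=s_1$ forced by the outer articulation, this gives $s_2=s_1$, and then $\is_3=s_2$ forces $s_1\in S_3$, contradicting your stipulation $S_1\cap S_3=\emptyset$. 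The claim that the three identifications ``do not interfere with one another'' therefore fails in that case; the fix is trivial (take $S_1\cap S_3=\{s_1\}=\{s_2\}$ there, and both unfoldings still produce the same tuple), but as written the choice of representatives is inconsistent for that instance of the proposition, so you should add the case split or simply require $S_1\cap S_3\subseteq\{s_1\}\cap\{s_2\}$ from the start.
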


This is illustrated by Figure~\ref{artass.fig}.

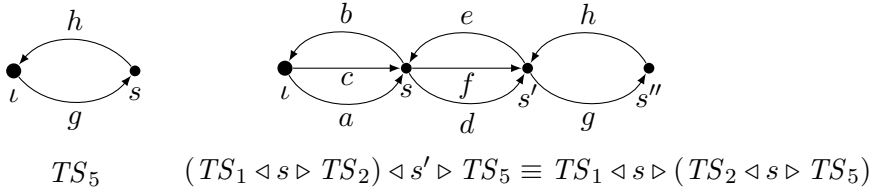
\begin{figure}[hbt]
\vspace*{-4mm}
\refstepcounter{exampleno}\label{ex5.ts}
\begin{center}
\begin{tikzpicture}[scale=0.8]
\node[]at(1,-1.7){$\TS_{\ref{ex5.ts}}$};
\node[circle,fill=black!100,inner sep=0.07cm](s0)at(0,0)[label=below:$\is$]{};
\node[circle,fill=black!100,inner sep=0.05cm](s1)at(2,0)[label=below:$s$]{};
\draw[-latex](s0)to[out=-50,in=-130]node[auto,swap]{$g$}(s1);
\draw[-latex](s1)to[out=130,in=50]node[auto,swap]{$h$}(s0);
\end{tikzpicture}\hspace{0.2cm}
\begin{tikzpicture}[scale=0.8]
\node[]at(4,-1.7){$(\TS_{\ref{ex1.ts}} \triangleleft s \triangleright \TS_{\ref{ex2.ts}})\triangleleft s' \triangleright \TS_{\ref{ex5.ts}}
\equiv \TS_{\ref{ex1.ts}} \triangleleft s \triangleright (\TS_{\ref{ex2.ts}}\triangleleft s \triangleright \TS_{\ref{ex5.ts}})$};
\node[circle,fill=black!100,inner sep=0.07cm](s0)at(0,0)[label=below:$\is$]{};
\node[circle,fill=black!100,inner sep=0.05cm](s1)at(2,0)[label=below:$s$]{};
\node[circle,fill=black!100,inner sep=0.05cm](s2)at(4,0)[label=below:$s'$]{};
\node[circle,fill=black!100,inner sep=0.05cm](s3)at(6,0)[label=below:$s''$]{};
\draw[-latex](s0)--node[auto,swap,inner sep=1.5pt,pos=0.5]{$c$}(s1);
\draw[-latex](s0)to[out=-60,in=-120]node[auto,swap]{$a$}(s1);
\draw[-latex](s1)to[out=120,in=60]node[auto,swap]{$b$}(s0);
\draw[-latex](s1)--node[auto,swap,inner sep=1.5pt,pos=0.5]{$f$}(s2);
\draw[-latex](s1)to[out=-60,in=-120]node[auto,swap]{$d$}(s2);
\draw[-latex](s2)to[out=120,in=60]node[auto,swap]{$e$}(s1);
\draw[-latex](s2)to[out=-60,in=-120]node[auto,swap]{$g$}(s3);
\draw[-latex](s3)to[out=120,in=60]node[auto,swap]{$h$}(s2);
\end{tikzpicture}
\end{center}\vspace*{-7mm}
\caption{Associativity of  articulations.}
\label{artass.fig}\vspace*{-2mm}
\end{figure}

\begin{proposition}\label{com.prop}{\bf  (Commutative articulations)}\\
If $\TS_1=(S_1,\to_1,T,\is_{1})$  and $\TS_2=(S_2,\to_2,T,\is_{2})$  with disjoint label sets (i.e., $T_1\cap T_2=\emptyset$),
then $\TS_1 \triangleleft \is_1 \triangleright \TS_2\equiv_{T_1\cup T_2}\TS_2 \triangleleft \is_2 \triangleright \TS_1$.
\end{proposition}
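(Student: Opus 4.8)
The plan is to observe that both sides of the claimed isomorphism articulate the two systems precisely at their initial states, so that the construction becomes symmetric once a single common representative is fixed for that state. Concretely, I would first invoke the ``up to isomorphism'' freedom built into Definition~\ref{articul.def} and rename the state sets so that $\is_1=\is_2=:\star$ and $S_1\cap S_2=\{\star\}$. The point worth stressing is that this one normalisation is legitimate for \emph{both} articulations simultaneously: in $\TS_1\triangleleft\is_1\triangleright\TS_2$ the articulation point is $s=\is_1$ and the definition instructs us to identify $\is_2$ with it, whereas in $\TS_2\triangleleft\is_2\triangleright\TS_1$ the articulation point is $s=\is_2$ and the definition identifies the initial state $\is_1$ of the second operand with it. In either reading the only gluing performed is the identification $\is_1=\is_2$, so the same representation serves both orderings.

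With this representation fixed, I would simply read off the two tuples from the definition. The system $\TS_1\triangleleft\is_1\triangleright\TS_2$ has state set $S_1\cup S_2$, label set $T_1\cup T_2$, edge relation $\to_1\cup\to_2$ and initial state $\is_1=\star$; the system $\TS_2\triangleleft\is_2\triangleright\TS_1$ has state set $S_2\cup S_1$, label set $T_2\cup T_1$, edge relation $\to_2\cup\to_1$ and initial state $\is_2=\star$. Since the union of sets (and of relations) is commutative and $\is_1=\is_2=\star$, these two tuples are literally the same set-theoretic object. Hence the identity map on $S_1\cup S_2$ is the required state isomorphism, yielding $\TS_1\triangleleft\is_1\triangleright\TS_2\equiv_{T_1\cup T_2}\TS_2\triangleleft\is_2\triangleright\TS_1$.

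The only step needing genuine care---and the sole potential obstacle---is the bookkeeping of the renaming conventions: one must check that the isomorphism clause of Definition~\ref{articul.def} really permits the \emph{same} identification $\is_1=\is_2$ to be used for both orderings, rather than forcing two incompatible relabellings. Once one notes that each ordering glues exactly along the pair of initial states and nowhere else---the disjointness hypotheses $T_1\cap T_2=\emptyset$ and, after renaming, $S_1\cap S_2=\{\star\}$ being themselves symmetric in the two operands---there is nothing further to verify. In particular no appeal to total reachability, determinism, or the evolution characterisation of Proposition~\ref{art.prop} is needed; the statement is purely structural.
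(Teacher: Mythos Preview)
Your proposal is correct and follows exactly the approach the paper takes: the paper does not give a formal proof but only the remark (immediately after the proposition) that in the left member one renames $\is_2$ into $\is_1$ and in the right member $\is_1$ into $\is_2$, so that after these relabellings the two constructions coincide. Your write-up is a more explicit version of that same observation, and your emphasis that the single normalisation $\is_1=\is_2=\star$ serves both orderings is precisely the content of the paper's note.
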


Note that, in the left member of the equivalence, we must rename $\is_2$ into $\is_1$,
and in the right member we must rename $\is_1$ into $\is_2$, in order to apply Definition~\ref{articul.def} (defined up to isomorphisms).
For instance, in Figure \ref{art.fig}, $\TS_{\ref{ex4.ts}}\equiv\TS_{\ref{ex1.ts}} \triangleleft \is \triangleright \TS_{\ref{ex2.ts}}\equiv \TS_{\ref{ex2.ts}} \triangleleft \is \triangleright \TS_{\ref{ex1.ts}}$.

\begin{proposition}\label{comassoc.prop}{\bf (Commutative associativity of articulations)}\\
Let us assume that $\TS_1$, $\TS_2$ and $\TS_3$ are three LTSs with label sets $T_1$, $T_2$ and $T_3$ respectively,
pairwise disjoint. Let $s_2$ and $s_3$ be two  states of $\TS_1$ ($s_2=s_3$ is allowed).
Then, $(\TS_1\triangleleft s_2\triangleright\TS_2)\triangleleft s_3\triangleright \TS_3\equiv_{T_1\cup T_2\cup T_3}
(\TS_1\triangleleft s_3\triangleright\TS_3)\triangleleft s_2\triangleright \TS_2$.
\end{proposition}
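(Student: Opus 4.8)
The plan is to expand both members of the claimed equivalence directly from Definition~\ref{articul.def} and to check that, once compatible representatives are chosen, the two constructions produce literally the same tuple. Recall that the articulation operator is defined only up to isomorphism, via renaming of state sets; hence establishing $\equiv_{T_1\cup T_2\cup T_3}$ reduces to exhibiting representatives for which the two sides coincide, the required isomorphism then being the identity bijection on states.

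Before expanding, I would record the auxiliary fact that the articulation of two totally reachable and deterministic LTSs with disjoint label sets is again totally reachable and deterministic, so that the \emph{outer} articulation on each side is legitimately defined in the sense of Definition~\ref{articul.def}. Total reachability is inherited because the articulation point is reachable from $\is_1$ inside $\TS_1$ while the glued factor is totally reachable from its initial state (which has been identified with that point); determinism cannot fail at the shared state since the outgoing, respectively incoming, edges of the two factors carry labels from disjoint alphabets. Consequently $\TS_1\triangleleft s_2\triangleright\TS_2$ and $\TS_1\triangleleft s_3\triangleright\TS_3$ are both admissible left operands for a further articulation.

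Next I would fix the representatives. Using the isomorphism freedom of Definition~\ref{articul.def}, arrange $\is_2=s_2$ with $S_1\cap S_2=\{s_2\}$, $\is_3=s_3$ with $S_1\cap S_3=\{s_3\}$, and $S_2\cap S_3=\emptyset$ when $s_2\neq s_3$ (respectively $S_2\cap S_3=\{s_2\}=\{s_3\}$ when $s_2=s_3$). Expanding the left-hand side then gives first $\TS_1\triangleleft s_2\triangleright\TS_2=(S_1\cup S_2,\,T_1\cup T_2,\,\to_1\cup\to_2,\,\is_1)$, and, since $s_3\in S_1$ is unchanged as an element of $S_1\cup S_2$, a second expansion at $s_3$ yields $(S_1\cup S_2\cup S_3,\,T_1\cup T_2\cup T_3,\,\to_1\cup\to_2\cup\to_3,\,\is_1)$. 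Expanding the right-hand side symmetrically, $\TS_1\triangleleft s_3\triangleright\TS_3=(S_1\cup S_3,\,T_1\cup T_3,\,\to_1\cup\to_3,\,\is_1)$, and as $s_2\in S_1$ survives unchanged in $S_1\cup S_3$, a second expansion at $s_2$ yields $(S_1\cup S_3\cup S_2,\,T_1\cup T_3\cup T_2,\,\to_1\cup\to_3\cup\to_2,\,\is_1)$.

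By commutativity and associativity of set union, these two tuples are identical component by component, so the identity map on $S_1\cup S_2\cup S_3$ is the desired isomorphism and the equivalence follows. The only points requiring care — and hence the main, though modest, obstacle — are the bookkeeping of the ``up to isomorphism'' renamings, namely ensuring that the representative chosen for $\TS_3$ when attaching it on the left and for $\TS_2$ when attaching it on the right are mutually compatible, together with the degenerate case $s_2=s_3$. In that case both factors are glued at one common state, but the disjointness $T_2\cap T_3=\emptyset$ again prevents any determinism clash there, and the symmetric expansion above goes through verbatim.
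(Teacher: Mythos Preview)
Your proof is correct. The paper does not actually supply a proof of this proposition: it is listed among ``several easy but interesting properties'' derived for the articulation operator and attributed to~\cite{RD-articul-PN}, with only Figure~\ref{comassart.fig} as illustration. Your direct expansion from Definition~\ref{articul.def}, after fixing mutually compatible representatives so that $(S_1\cup S_2)\cap S_3=\{s_3\}$ and $(S_1\cup S_3)\cap S_2=\{s_2\}$, is exactly the straightforward argument one expects for such a bookkeeping result, and your handling of the degenerate case $s_2=s_3$ and of the well-definedness of the outer articulation is appropriate.
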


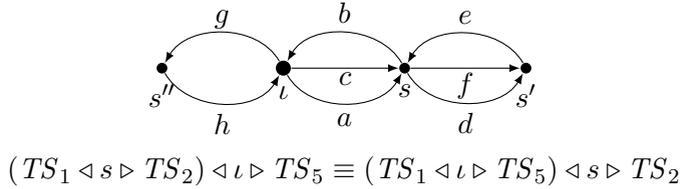
\begin{figure}[hbt]
\vspace*{-4mm}
\begin{center}
\begin{tikzpicture}[scale=0.8]
\node[]at(1,-1.7){$(\TS_{\ref{ex1.ts}} \triangleleft s \triangleright \TS_{\ref{ex2.ts}}) \triangleleft \is \triangleright
 \TS_{\ref{ex5.ts}}\equiv(\TS_{\ref{ex1.ts}} \triangleleft \is \triangleright \TS_{\ref{ex5.ts}}) \triangleleft s \triangleright
  \TS_{\ref{ex2.ts}}$};\vspace*{-2mm}
\node[circle,fill=black!100,inner sep=0.07cm](s0)at(0,0)[label=below:$\is$]{};
\node[circle,fill=black!100,inner sep=0.05cm](s1)at(2,0)[label=below:$s$]{};
\node[circle,fill=black!100,inner sep=0.05cm](s2)at(4,0)[label=below:$s'$]{};
\draw[-latex](s0)--node[auto,swap,inner sep=1.5pt,pos=0.5]{$c$}(s1);
\draw[-latex](s0)to[out=-60,in=-120]node[auto,swap]{$a$}(s1);
\draw[-latex](s1)to[out=120,in=60]node[auto,swap]{$b$}(s0);
\draw[-latex](s1)--node[auto,swap,inner sep=1.5pt,pos=0.5]{$f$}(s2);
\draw[-latex](s1)to[out=-60,in=-120]node[auto,swap]{$d$}(s2);
\draw[-latex](s2)to[out=120,in=60]node[auto,swap]{$e$}(s1);
\node[circle,fill=black!100,inner sep=0.05cm](s3)at(-2,0)[label=below:$s''$]{};
\draw[-latex](s0)to[out=120,in=60]node[auto,swap,inner sep=1.5pt,pos=0.5]{$g$}(s3);
\draw[-latex](s3)to[out=-60,in=-120]node[auto,swap]{$h$}(s0);
\end{tikzpicture}
\end{center}\vspace*{-7mm}
\caption{Commutative associativity of articulations.}
\label{comassart.fig}\vspace*{-2mm}
\end{figure}

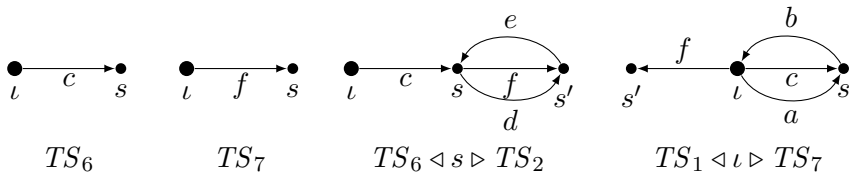
\begin{figure}[!b]
\vspace*{-4mm}
\refstepcounter{exampleno}\label{ex6.ts}
\begin{center}
\begin{tikzpicture}[scale=0.7]
\node[]at(1,-1.7){$\TS_{\ref{ex6.ts}}$};
\node[circle,fill=black!100,inner sep=0.07cm](s0)at(0,0)[label=below:$\is$]{};
\node[circle,fill=black!100,inner sep=0.05cm](s1)at(2,0)[label=below:$s$]{};
\draw[-latex](s0)--node[auto,swap,inner sep=1.5pt,pos=0.5]{$c$}(s1);
\end{tikzpicture}\hspace{0.26cm}
\refstepcounter{exampleno}\label{ex7.ts}
\begin{tikzpicture}[scale=0.7]
\node[]at(1,-1.7){$\TS_{\ref{ex7.ts}}$};
\node[circle,fill=black!100,inner sep=0.07cm](s0)at(0,0)[label=below:$\is$]{};
\node[circle,fill=black!100,inner sep=0.05cm](s1)at(2,0)[label=below:$s$]{};
\draw[-latex](s0)--node[auto,swap,inner sep=1.5pt,pos=0.5]{$f$}(s1);
\end{tikzpicture}\hspace{0.26cm}
%\refstepcounter{exampleno}\label{ex8.ts}
\begin{tikzpicture}[scale=0.7]
\node[]at(2,-1.7){$\TS_{\ref{ex6.ts}} \triangleleft s \triangleright \TS_{\ref{ex2.ts}}$};
\node[circle,fill=black!100,inner sep=0.07cm](s0)at(0,0)[label=below:$\is$]{};
\node[circle,fill=black!100,inner sep=0.05cm](s1)at(2,0)[label=below:$s$]{};
\node[circle,fill=black!100,inner sep=0.05cm](s2)at(4,0)[label=below:$s'$]{};
\draw[-latex](s0)--node[auto,swap,inner sep=1.5pt,pos=0.5]{$c$}(s1);
\draw[-latex](s1)--node[auto,swap,inner sep=1.5pt,pos=0.5]{$f$}(s2);
\draw[-latex](s1)to[out=-60,in=-120]node[auto,swap]{$d$}(s2);
\draw[-latex](s2)to[out=120,in=60]node[auto,swap]{$e$}(s1);
\end{tikzpicture}\hspace{0.26cm}
%\refstepcounter{exampleno}\label{ex9.ts}
\begin{tikzpicture}[scale=0.7]
\node[]at(0,-1.7){$\TS_{\ref{ex1.ts}} \triangleleft \is \triangleright \TS_{\ref{ex7.ts}}$};
\node[circle,fill=black!100,inner sep=0.07cm](s0)at(0,0)[label=below:$\is$]{};
\node[circle,fill=black!100,inner sep=0.05cm](s1)at(2,0)[label=below:$s$]{};
\node[circle,fill=black!100,inner sep=0.05cm](s2)at(-2,0)[label=below:$s'$]{};
\draw[-latex](s0)--node[auto,swap,inner sep=1.5pt,pos=0.5]{$c$}(s1);
\draw[-latex](s0)to[out=-60,in=-120]node[auto,swap]{$a$}(s1);
\draw[-latex](s1)to[out=120,in=60]node[auto,swap]{$b$}(s0);
\draw[-latex](s0)--node[auto,swap,inner sep=1.5pt,pos=0.5]{$f$}(s2);
\end{tikzpicture}
\end{center}\vspace*{-7mm}
\caption{Sequential articulations.}
\label{seqart.fig}
\end{figure}

\begin{proposition}\label{seq.prop}{\bf (Sequence articulations)}\\
If $\TS_1=(S_1,\to_1,T,\is_{1})$  and $\TS_2=(S_2,\to_2,T,\is_{2})$  with disjoint label sets (i.e., $T_1\cap T_2=\emptyset$),
if $\forall s_1\in S_1\exists\alpha_1\in T_1^*:s_1[\alpha_1\rangle s$ ($s$ is a {\it home state in $\TS_1$})
and $\nexists t_1\in T_1:s[t_1\rangle$ ($s$ is a {\it dead end in $\TS_1$}),\\
then $\TS_1 \!\triangleleft s \triangleright \!\TS_2$ behaves like a sequence, i.e.,
once $\TS_2$ has started it is no longer possible to execute~$T_1$.

The same occurs when $\is_2$ does not occur in a non-trivial cycle, i.e.,
$\is_2[\alpha_2\rangle\is_2\land\alpha_2\in T_2^*\impl\alpha_2=\leer$:
once $\TS_2$ has started it is no longer possible to execute $T_1$.
\end{proposition}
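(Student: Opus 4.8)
The plan is to argue directly on the trace of an arbitrary evolution, relying only on the anatomy of the articulation: every edge labelled in $T_1$ lies in $\to_1$ and so has both endpoints in $S_1$, every edge labelled in $T_2$ lies in $\to_2$ and so has both endpoints in $S_2$, while $S_1\cap S_2=\{s\}$ and $\is_2=s$. Proposition~\ref{art.prop} already records this as an alternation of $T_1$- and $T_2$-blocks glued at $s$, but I prefer to reason on the raw trace so as to keep the empty blocks under control. So write an evolution from the initial state as $\is_1=r_0[t^{(1)}\rangle r_1[t^{(2)}\rangle\cdots[t^{(m)}\rangle r_m$ with each $t^{(k)}\in T_1\uplus T_2$, and assume $\TS_2$ has started, i.e. $t^{(i)}\in T_2$ for some $i$; fix $i$ minimal and suppose, for contradiction, that some later step uses a $T_1$-label, fixing $j>i$ minimal with $t^{(j)}\in T_1$.

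The heart of the proof is to pin down the two visits to $s$ forced by these minimal indices. Since $i$ is minimal, $t^{(1)},\dots,t^{(i-1)}\in T_1$, so $r_0,\dots,r_{i-1}\in S_1$ by induction; but $r_{i-1}$ is the source of the $T_2$-edge $t^{(i)}$, hence also lies in $S_2$, whence $r_{i-1}\in S_1\cap S_2=\{s\}$, i.e. $r_{i-1}=s$. Symmetrically, since $j$ is the least $T_1$-index after $i$, the labels $t^{(i)},\dots,t^{(j-1)}$ all lie in $T_2$, so $r_{i-1},\dots,r_{j-1}\in S_2$; but $r_{j-1}$ is the source of the $T_1$-edge $t^{(j)}$, hence also lies in $S_1$, whence $r_{j-1}=s$ as well.

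With both anchor states identified, the two hypotheses close the argument separately, and this is where the statement splits. Under the dead-end hypothesis, $r_{j-1}=s$ together with $s[t^{(j)}\rangle$ and $t^{(j)}\in T_1$ contradicts $\nexists t_1\in T_1:\,s[t_1\rangle$ outright. Under the no-non-trivial-cycle hypothesis, the infix $\beta=t^{(i)}\cdots t^{(j-1)}$ lies in $T_2^*$, is non-empty (it contains at least $t^{(i)}$), and satisfies $s[\beta\rangle s$ because $r_{i-1}=r_{j-1}=s$; as $\is_2=s$, this is exactly a non-trivial cycle $\is_2[\beta\rangle\is_2$ with $\beta\in T_2^*$, contradicting $\is_2[\alpha_2\rangle\is_2\land\alpha_2\in T_2^*\impl\alpha_2=\leer$. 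In either case no such $j$ exists, so once a $T_2$-label has fired, no $T_1$-label can ever fire afterwards, which is the asserted sequential behaviour.

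The only delicate point I anticipate is the bookkeeping around empty alternation blocks: applying Proposition~\ref{art.prop} verbatim would oblige me to show that \emph{every} odd block occurring after the first non-empty $T_2$-block is empty, which is clumsy precisely because the alternation form tolerates $\leer$-blocks. Passing to the minimal indices $i$ and $j$ sidesteps this entirely, since it isolates exactly the two passages through $s$ that feed the two hypotheses. Finally, I would note that the home-state hypothesis in the first statement is not actually used for the displayed implication, which rests solely on the dead-end condition; it serves only to justify the word ``sequence'', guaranteeing that $s$ — and hence the entry point of $\TS_2$ — is reachable from every state of $\TS_1$ by a $T_1$-path.
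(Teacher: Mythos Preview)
Your proof is correct. The paper does not actually supply a proof for this proposition: it simply states it and points to an illustrating figure, treating the claim as an immediate consequence of Proposition~\ref{art.prop} on the alternating structure of evolutions in an articulation.

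Your argument is precisely the unpacking of that alternation structure, and your choice to work with the minimal indices $i$ and $j$ rather than with the block decomposition of Proposition~\ref{art.prop} is a clean way to avoid fussing over empty blocks. The two anchor points you isolate, $r_{i-1}=s$ and $r_{j-1}=s$, are exactly the passages through $s$ that the alternation form guarantees, and each hypothesis then kills the supposed $T_1$-step in one line, just as you say. Your closing remark that the home-state condition plays no role in the implication itself---only in justifying the terminology ``sequence'' by ensuring $s$ is actually reachable in $\TS_1$---is accurate and a useful observation that the paper leaves implicit.
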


This is illustrated in Figure~\ref{seqart.fig}.
It may be observed that sequences in \cite{BDK02} (and in Figure~\ref{seq.fig}) correspond to the intersection of both cases.

\medskip
Let us now examine the connection between articulations and Petri net synthesis.

\begin{proposition}\label{art1.prop}{\bf (Synthesis of components of an articulation)}\\
If $\TS=(S,\to,T_1\uplus T_2,\is)$ is articulated by $T_1$ and $T_2$ around $s$,
so that $\TS\equiv \TS_1 \triangleleft s \triangleright \TS_2$ with $\TS_1=(\adj(T_1),\to_1,T_1,\is)$ and $\TS_2=(\adj(T_2),\to_2,s)$ (see Proposition~\ref{articul.prop}),
and is PN-solvable, component $\TS_1$ and $\TS_2$ are also PN-solvable.
Moreover, in the corresponding solution for $\TS_1$, if the decomposition is not trivial,
the marking corresponding to $s$ is not dominated by any other reachable marking.
\end{proposition}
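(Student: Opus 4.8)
The plan is to prove the two halves separately: PN-solvability of the components by the same ``transition-dropping'' trick used at the beginning of the proof of Theorem~\ref{PNsynt.thm}, and then the non-domination property by combining Petri net monotonicity with the structural constraint on where $T_2$-labels may occur (Proposition~\ref{art.prop}, equivalently $\adj(T_1)\cap\adj(T_2)=\{s\}$).

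First I would fix a Petri net solution $N=(P,T_1\uplus T_2,F,M_0)$ of $\TS$ and let $\zeta$ be the isomorphism between the reachable markings of $N$ and the states of $\TS$; write $M_x$ for the marking with $\zeta(M_x)=x$. For $\TS_1$, consider the net $N_1=(P,T_1,F_1,M_0)$ obtained by deleting the $T_2$-transitions (same $P$, $F$ restricted to $T_1$, same $M_0$). The firing sequences of $N_1$ from $M_0$ are exactly the $T_1$-labelled firing sequences of $N$ from $M_0$, so the reachable markings of $N_1$ correspond via $\zeta$ to the states of $\TS$ reachable from $\is$ using only $T_1$-labels, i.e.\ to $[\is\rangle^{T_1}$. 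The articulation conditions of Definition~\ref{articul.def} give $\adj(T_1)=[\is\rangle^{T_1}$, so $N_1$ solves $\TS_1$. Symmetrically, since $s$ is reachable we have $M_s\in[M_0\rangle$, and the net $N_2=(P,T_2,F_2,M_s)$ obtained by deleting the $T_1$-transitions and starting from $M_s$ has reachable markings corresponding to $\adj(T_2)$, hence solves $\TS_2$. This half is routine and parallels the first observation of Theorem~\ref{PNsynt.thm}.

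For the non-domination claim, $N_1$ is the announced solution of $\TS_1$ and $M_s$ is the marking corresponding to $s$. I would argue by contradiction: suppose some reachable marking $M'$ of $N_1$ satisfies $M_s\lneqq M'$, and set $s'=\zeta(M')$; since $M'\neq M_s$ and $\zeta$ is a bijection, $s'\neq s$, while $s'\in\adj(T_1)$. Because the decomposition is non-trivial, $T_2$ contains a useful label, and since $\TS_2$ is totally reachable from $s$ this forces at least one $T_2$-transition $t_2$ to be enabled at $s$ in $\TS$ (either the first transition of a nonempty $T_2$-path out of $s$, or a $T_2$-self-loop at $s$ in the corner case $\adj(T_2)=\{s\}$); hence $M_s\goesto{t_2}$ in $N$. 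By monotonicity, $M_s\le M'$ together with $M_s\goesto{t_2}$ yields $M'\goesto{t_2}$ in $N$. As $N$ solves $\TS$, the state $s'$ then has an outgoing $t_2$-edge, so $s'\in\adj(T_2)$, whence $s'\in\adj(T_1)\cap\adj(T_2)=\{s\}$, contradicting $s'\neq s$. Thus no reachable marking of $N_1$ strictly dominates $M_s$.

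The main obstacle is the second half: one has to pin down that a $T_2$-transition is \emph{genuinely} enabled at $M_s$ (this is exactly where non-triviality enters, and where the self-loop case $\adj(T_2)=\{s\}$ must be checked), then transport this enabledness along the domination $M_s\le M'$ via monotonicity and read it back in $\TS$, where the articulation structure forbids any $T_2$-move at a state of the $\TS_1$-part other than $s$. The first half, by contrast, is essentially the standard ``drop the other transitions'' observation.
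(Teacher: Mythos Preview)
Your proof is correct and follows essentially the same approach as the paper: drop the transitions of the other component to obtain $N_1$ and $N_2$, then for non-domination use monotonicity of firing together with the articulation constraint $\adj(T_1)\cap\adj(T_2)=\{s\}$ to derive a contradiction from $M_s\lneqq M'$. The paper's proof is terser (it does not spell out the self-loop corner case or invoke monotonicity by name), but the argument is the same.
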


\begin{proof}
Let $ N=(P,T,F,M_0)$ be a solution for $\TS$.
It is immediate that  $ N_1=(P,T_1,F_1,M_0)$, where $F_1$ is the restriction of $F$ to $T_1$, is a solution for $\TS_1$
(but there may be many other ones).

\medskip
Similarly, if $M$ is the marking of $ N$ (and $ N_1$) corresponding to $s$, it may be seen that  $ N_2=(P,T_2,F_2,M)$, where $F_2$ is the restriction of $F$ to $T_2$, is a solution for $\TS_2$
(but there may be many other ones).

Moreover, if the decomposition is not trivial, $T_2\neq\emptyset$.
 Let us thus assume that $s[t_2\rangle$ for some label $t_2\in T_2$ and
  $M'$ is a marking of $ N_1$ corresponding to some state $s'$ in $\TS_1$
with $M'\gneqq M$, then $s\neq s'$, $s'[t_2\rangle$ and $s$ is not the unique articulation between $T_1$ and $T_2$.
\end{proof} %%\ENDPROOF{art1.prop}

Note that there may also be solutions to $\TS_1$ (other than $N_1$) such that the marking $M$ corresponding to $s$ is dominated.
This is illustrated in Figure~\ref{art1.fig}.

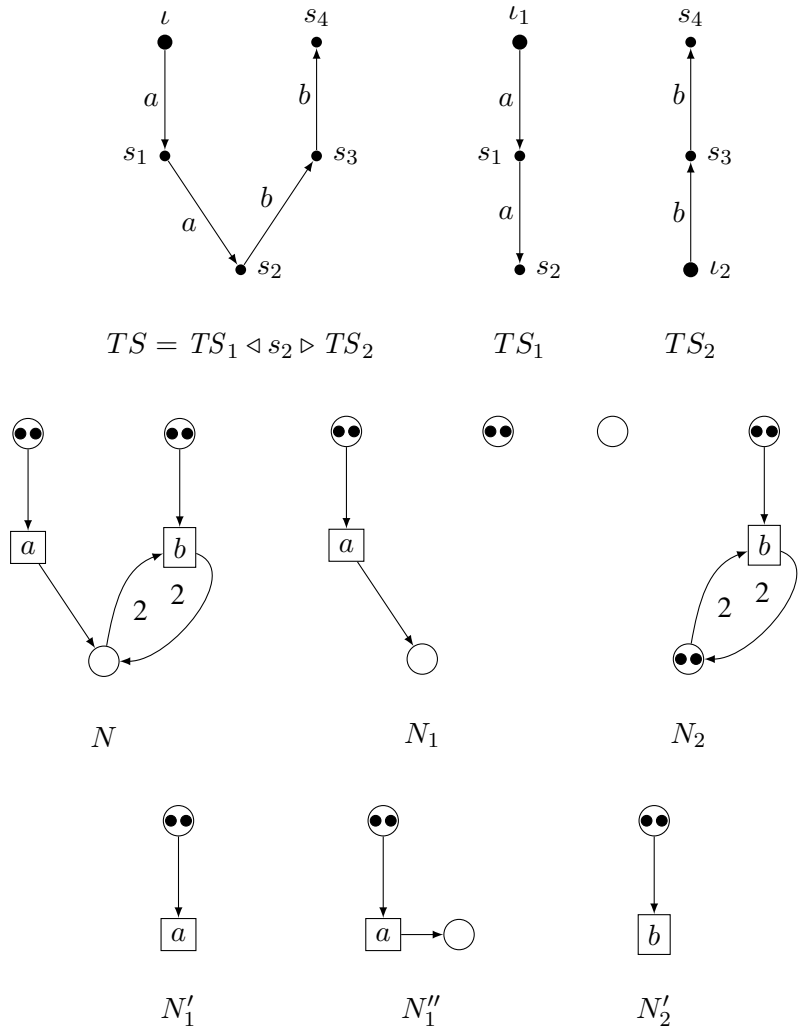
\begin{figure}[!ht]
\begin{center}
\begin{tikzpicture}[scale=1.0]
\node[circle,fill=black!100,inner sep=0.07cm](s0)at(-1,3)[label=above:$\is$]{};
\node[circle,fill=black!100,inner sep=0.05cm](s1)at(-1,1.5)[label=left:$s_1$]{};
\node[circle,fill=black!100,inner sep=0.05cm](s2)at(0,0)[label=right:$s_2$]{};
\node[circle,fill=black!100,inner sep=0.05cm](s3)at(1,1.5)[label=right:$s_3$]{};
\node[circle,fill=black!100,inner sep=0.05cm](s4)at(1,3)[label=above:$s_4$]{};
\draw[-latex](s0)--node[auto,swap,inner sep=2pt,pos=0.5]{$a$}(s1);
\draw[-latex](s1)--node[auto,swap,inner sep=2pt,pos=0.5]{$a$}(s2);
\draw[-latex](s2)--node[auto,inner sep=2pt,pos=0.5]{$b$}(s3);
\draw[-latex](s3)--node[auto,inner sep=2pt,pos=0.55]{$b$}(s4);
\node[]at(0,-1){$TS=\TS_1 \triangleleft s_2 \triangleright \TS_2$};
\end{tikzpicture}\hspace*{1.0cm}
\begin{tikzpicture}[scale=1.0]
\node[circle,fill=black!100,inner sep=0.07cm](s0)at(0,3)[label=above:$\is_1$]{};
\node[circle,fill=black!100,inner sep=0.05cm](s1)at(0,1.5)[label=left:$s_1$]{};
\node[circle,fill=black!100,inner sep=0.05cm](s2)at(0,0)[label=right:$s_2$]{};
\draw[-latex](s0)--node[auto,swap,inner sep=2pt,pos=0.5]{$a$}(s1);
\draw[-latex](s1)--node[auto,swap,inner sep=2pt,pos=0.5]{$a$}(s2);
\node[]at(0,-1){$TS_1$};
\end{tikzpicture}\hspace*{1.0cm}
\begin{tikzpicture}[scale=1.0]
\node[circle,fill=black!100,inner sep=0.07cm](s2)at(0,0)[label=right:$\is_2$]{};
\node[circle,fill=black!100,inner sep=0.05cm](s3)at(0,1.5)[label=right:$s_3$]{};
\node[circle,fill=black!100,inner sep=0.05cm](s4)at(0,3)[label=above:$s_4$]{};
\draw[-latex](s2)--node[auto,inner sep=2pt,pos=0.5]{$b$}(s3);
\draw[-latex](s3)--node[auto,inner sep=2pt,pos=0.55]{$b$}(s4);
\node[]at(0,-1){$TS_2$};
\end{tikzpicture}\\[0.6cm]
\begin{tikzpicture}[scale=1.0]
\node[draw,minimum size=0.4cm](a)at(0,1.5){$a$};
\node[draw,minimum size=0.4cm](b)at(2,1.5){$b$};
\node[circle,draw,minimum size=0.4cm](p0)at(0,3)[]{};\filldraw[black](-0.1,3)circle(2pt);\filldraw[black](0.1,3)circle(2pt);
\node[circle,draw,minimum size=0.4cm](p1)at(1,0)[]{};
\node[circle,draw,minimum size=0.4cm](p2)at(2,3)[]{};\filldraw[black](1.9,3)circle(2pt);\filldraw[black](2.1,3)circle(2pt);
\draw[-latex](p0)--(a);\draw[-latex](a)--(p1);
\draw[-latex](p2)--(b);
\draw[-latex](b)to[out=-20,in=0]node[auto,swap]{2}(p1);
\draw[-latex](p1)to[out=80,in=200]node[auto,swap]{2}(b);
\node[]at(1,-1){$N$};
\end{tikzpicture}\hspace*{1.0cm}
\begin{tikzpicture}[scale=1.0]
\node[draw,minimum size=0.4cm](a)at(0,1.5){$a$};
\node[circle,draw,minimum size=0.4cm](p0)at(0,3)[]{};\filldraw[black](-0.1,3)circle(2pt);\filldraw[black](0.1,3)circle(2pt);
\node[circle,draw,minimum size=0.4cm](p1)at(1,0)[]{};
\node[circle,draw,minimum size=0.4cm](p2)at(2,3)[]{};\filldraw[black](1.9,3)circle(2pt);\filldraw[black](2.1,3)circle(2pt);
\draw[-latex](p0)--(a);\draw[-latex](a)--(p1);
\node[]at(1,-1){$N_1$};
\end{tikzpicture}\hspace*{1.0cm}
\begin{tikzpicture}[scale=1.0]
\node[draw,minimum size=0.4cm](b)at(2,1.5){$b$};
\node[circle,draw,minimum size=0.4cm](p0)at(0,3)[]{};%\filldraw[black](-0.1,3)circle(2pt);\filldraw[black](0.1,3)circle(2pt);
\node[circle,draw,minimum size=0.4cm](p1)at(1,0)[]{};\filldraw[black](0.9,0)circle(2pt);\filldraw[black](1.1,0)circle(2pt);
\node[circle,draw,minimum size=0.4cm](p2)at(2,3)[]{};\filldraw[black](1.9,3)circle(2pt);\filldraw[black](2.1,3)circle(2pt);
\draw[-latex](p2)--(b);
\draw[-latex](b)to[out=-20,in=0]node[auto,swap]{2}(p1);
\draw[-latex](p1)to[out=80,in=200]node[auto,swap]{2}(b);
\node[]at(1,-1){$N_2$};
\end{tikzpicture}\\[0.6cm]
\begin{tikzpicture}[scale=1.0]
\node[draw,minimum size=0.4cm](a)at(0,1.5){$a$};
\node[circle,draw,minimum size=0.4cm](p0)at(0,3)[]{};\filldraw[black](-0.1,3)circle(2pt);\filldraw[black](0.1,3)circle(2pt);
\draw[-latex](p0)--(a);
\node[]at(0,0.5){$N'_1$};
\end{tikzpicture}\hspace*{2.0cm}
\begin{tikzpicture}[scale=1.0]
\node[draw,minimum size=0.4cm](a)at(0,1.5){$a$};
\node[circle,draw,minimum size=0.4cm](p0)at(0,3)[]{};\filldraw[black](-0.1,3)circle(2pt);\filldraw[black](0.1,3)circle(2pt);
\node[circle,draw,minimum size=0.4cm](p1)at(1,1.5)[]{};
\draw[-latex](p0)--(a);\draw[-latex](a)--(p1);
\node[]at(0.5,0.5){$N''_1$};
\end{tikzpicture}\hspace*{2.0cm}\begin{tikzpicture}[scale=1.0]
\node[draw,minimum size=0.4cm](b)at(2,1.5){$b$};
\node[circle,draw,minimum size=0.4cm](p2)at(2,3)[]{};\filldraw[black](1.9,3)circle(2pt);\filldraw[black](2.1,3)circle(2pt);
\draw[-latex](p2)--(b);
\node[]at(2,0.5){$N'_2$};
\end{tikzpicture}
\vspace{-1.5em}
\end{center}
\caption{The lts $\TS$ is articulated around $s_2$, with $T_1=\{a\}$ and $T_2=\{b\}$, hence leading to $\TS_1$ and $\TS_2$.
It is solved by $N$, and the corresponding solutions for $\TS_1$ and $\TS_2$ are $N_1$ and $N_2$, respectively.
$\TS_1$ also has the solution $N'_1$ but the marking corresponding to $s_2$ is then empty, hence it is dominated by the initial marking (as well as by the intermediate one). This is not the case for the other solution $N''_1$
(obtained from $N_1$ by erasing the useless  isolated place: we never claimed that $N_1$ is a minimal solution).
$\TS_2$ also has the solution $N'_2$.}
\label{art1.fig}
\end{figure}

%\section{Recomposition}\label{comp.sec}
\medskip
The other way round, let us now assume that $\TS=\TS_1 \triangleleft s \triangleright \TS_2$ is an articulated LTS
and that it is possible to solve $\TS_1$ and $\TS_2$.
Is it possible from that to build a solution of $\TS$?

To do that, we shall add the constraint already observed in Proposition~\ref{art1.prop} that,
in the solution of $\TS_1$, the marking corresponding to $s$ is not dominated by another reachable marking.
If this is satisfied we shall say that the solution is {\em adequate} with respect to $s$.
Hence, in the treatment of the system in Figure~\ref{art1.fig},
we want to avoid considering the solution $N'_1$ of $\TS_1$; on the contrary, $N_1$ or $N''_1$ will be acceptable.

\medskip
If $\TS_1$ is reversible and PN-solvable, any solution is adequate. Indeed, for any pair of distinct states $s,s'\in S_1$
we then have a path $s[\alpha\rangle s'$ with $\alpha\in T_1^*$. In the solution $\PN_1$ of $\TS_1$,
if $M$ is the marking corresponding to $s$ and $M'$ is the one corresponding to $s'$,
we have $M\neq M'$ and $M[\alpha\rangle M'$ and if $M\lneqq M'$ we also have an infinite path $M'[\alpha^\infty\rangle$.
Since  $\PN_1$ is a solution of $\TS_1$, we also have $s[\alpha^n\rangle s_i$ for an infinite series of different states
$s_i$ for $n\in\nsymbol$, and $\TS_2$ as well as $\TS$ may not be finite as we assumed in this paper.
Note that, from a similar argument, since $\TS_2$ is finite, no marking reachable in $\PN_2$ dominates the initial one,
corresponding to $\is_2=s$.

However, if $\TS_1$ is solvable, it is always possible to get a solution adequate at $s$, as for any state in fact.

\begin{proposition}\label{adeq.prop}{\bf (Adequate solutions)}\\
Let $\TS_1$ be a (finite) solvable \lts{} and $s$ any of its states.
Then there is a solution of $\TS_1$ adequate at~$s$.
\end{proposition}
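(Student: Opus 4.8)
The plan is to start from an arbitrary solution $N=(P,T_1,F,M_0)$ of $\TS_1$ (which exists since $\TS_1$ is solvable) and to \emph{enrich} it with finitely many extra places so that the marking $M_s$ attached to $s$ becomes maximal among the reachable markings, i.e.\ dominated by none. Recall (Definition~\ref{reg.def} and the classical characterisation quoted just after it) that a solution is nothing but a family of regions solving every SSP and every ESSP; adding further \emph{valid} regions to such a family keeps all these separation problems solved and, since region validity forces $\rho(\text{source})\ge\bk(a)$ at each existing edge, disables no edge of $\TS_1$. Hence the enriched net is again a solution, with the same reachability graph up to isomorphism, each state $x$ now carrying the marking $M_x$ extended by the new region values. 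Thus it suffices to add, for every state $s'\neq s$, one region $(\rho_{s'},\bk_{s'},\fw_{s'})$ with $\rho_{s'}(s)>\rho_{s'}(s')$: in the resulting net $M_{s'}$ can no longer dominate $M_s$, since it is already beaten on the coordinate $\rho_{s'}$.

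The whole proof then reduces to a separation-with-a-prescribed-direction lemma: for any two distinct states $s,s'$ there is a region $\rho$ with $\rho(s)>\rho(s')$. Solvability gives, through $\mathrm{SSP}(s,s')$, some region $\rho$ with $\rho(s)\neq\rho(s')$; the only issue is the sign. The key observation — which I expect to be the real content — is that every region can be \emph{complemented}: given a region $(\rho,\bk,\fw)$ and a large enough constant $C$ (take $C=\max_{x\in S}\rho(x)+\max_{a\in T_1}\fw(a)$), the triple $(\rho',\fw,\bk)$ with $\rho'=C-\rho$ is again a region. Indeed $\rho'(s_2)-\rho'(s_1)=\rho(s_1)-\rho(s_2)=\bk(a)-\fw(a)$ along every edge $s_1[a\rangle s_2$, the source condition $\rho'(s_1)\ge\fw(a)$ holds by the choice of $C$, and all values are nonnegative integers. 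Since complementation reverses the order, $\rho'(s)-\rho'(s')=\rho(s')-\rho(s)$, so from the region produced by the SSP I obtain the wanted direction by keeping $\rho$ when $\rho(s)>\rho(s')$ and passing to its complement $\rho'$ otherwise.

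Combining the two steps yields the statement: pick a solution $N$, and for each of the finitely many states $s'\neq s$ append to $N$ a region with strictly larger value at $s$ than at $s'$, obtained from $\mathrm{SSP}(s,s')$ and, if necessary, complementation. The enlarged net is still a solution of $\TS_1$ and, by construction, its marking at $s$ is dominated by no reachable marking, i.e.\ it is adequate at $s$ in the sense of Proposition~\ref{art1.prop}.

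I expect the main obstacle to be isolating and justifying the complementation step cleanly — in particular checking that $\rho'=C-\rho$ stays a valid region (nonnegativity of the values together with the source inequalities at all edges of $\TS_1$) and that appending valid regions to a solution does not alter the synthesised reachability graph. Everything else is routine: finiteness of $S$ bounds the number of places to be added, and adequacy reduces to beating every potential dominator on a single coordinate. One side-check worth recording is that \emph{no new} domination is created, since adding coordinates can only make markings harder to compare, so any coordinate on which $M_s$ already beat another state is preserved.
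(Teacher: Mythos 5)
Your proof is correct, and its engine is exactly the paper's: the complemented region $\rho'=C-\rho$ with $\bk$ and $\fw$ swapped and $C=\max_x\rho(x)+\max_a\fw(a)$ is precisely the mirror place $\widetilde p$ of the paper's proof, with the same initial marking $k-M_0(p)+\max_{t}F(t,p)$. The only real difference is in \emph{what} gets complemented. The paper simply adds a complement place for every place of the given solution $\PN_1$, so that $\widetilde M(p)+\widetilde M(\widetilde p)$ is invariant; any strict increase on some $p$ then forces a strict decrease on $\widetilde p$, and one obtains the stronger conclusion that \emph{no} reachable marking dominates any other, with no case distinction on signs and no appeal to the region characterisation of solvability. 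You instead extract a region from each $\mathrm{SSP}(s,s')$ and complement it only when the inequality points the wrong way; this adds at most $|S_1|-1$ places and targets only the state $s$, but it needs the extra (standard, and correctly justified by you) facts that a solution yields SSP-solving regions and that appending valid regions neither disables an edge of $\TS_1$ nor changes the synthesised reachability graph. Both routes are sound; the paper's is marginally more economical in argument, yours potentially in the number of added places.
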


\begin{proof}
Let $\PN_1$ be any solution. For any place $p\in P_1$, we may construct a complement or mirror place $\widetilde{p}$
such that $\forall t\in T_1: F(t,\widetilde{p})=F(p,t)\land F(\widetilde{p},t)=F(t,p)$ so that, for any reachable marking $\widetilde{M}$
of the new net, $\widetilde{M}(p)+\widetilde{M}(\widetilde{p})$ is constant. It remains to chose the initial marking of this place in order
not to exclude some evolutions available in $\PN_1$. Since $\TS_1$ is finite, the marking of $p$ is bounded (as for any other place).
Let $k$ be that bound and let us chose $\widetilde{M}_0(p)=k-M_0(p)+\max_{t\in T_1}F(t,p)$.
That way, for any reachable marking, $\widetilde{M}(\widetilde{p})\geq\max_{t\in T_1}F(t,p)=\max_{t\in T_1}F(\widetilde{p},t)$,
so that place $\widetilde{p}$ does not block any transition that would be enabled by place $p$.
We may thus conclude that the introduction of $\widetilde{p}$ does not change the reachability graph
and the new net is still a PN-solution of $\TS_1$. As a consequence, for each reachable marking $M$ of $\PN_1$,
if $M(p)>M_s(p)$ (where $M_s$ is the marking corresponding to $s$), $\widetilde{M}(\widetilde{p})<\widetilde{M}_s(\widetilde{p})$.
Hence, if we introduce a complement place for each place of $\PN_1$, in the new net
no reachable marking may dominate another one and the constraint mentioned above is always satisfied.
\end{proof} %%%\ENDPROOF{adeq.prop}

However there is a simpler way to get an adequate solution (since we know there is one), in the following way:

\begin{proposition} \label{additional.prop}{\bf (Forcing an adequate solution for $\TS_1$)}\\
Let us add to $\TS_1$ an arc $s[u\rangle s$ where $u$ is a new fresh label. Let $\TS'_1$ be the LTS so obtained.
If $\TS'_1$ is not solvable, there is no (adequate) solution. Otherwise, solve $\TS'_1$ and erase $u$ from the solution.
Let $N_1$ be the net obtained with the procedure just described: it is a solution of $\TS_1$ with the adequate property that the marking corresponding to $s$ is not dominated by another one.
\end{proposition}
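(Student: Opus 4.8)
The plan is to prove the two implications that together justify the procedure, organised around a single observation: a self-loop $s[u\rangle s$ labelled by a fresh $u$ can be realised by a Petri net precisely when the marking of $s$ is non-dominated, because enabling of a transition is monotone in the marking. Throughout I write $M_s$ for the marking corresponding to a state $s$, and I use the standard fact that in any PN-solution distinct states carry distinct markings (the reachability graph is isomorphic to the LTS).

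First I would treat the direction that gives the construction: assume $\TS'_1$ is solvable, pick a solution $N'$, and set $N_1 := N'\setminus u$ (erase the transition $u$ together with all its incident arcs). Since the only $u$-arc in $\TS'_1$ is the self-loop at $s$, we have $M_s[u\rangle M_s$ in the reachability graph of $N'$, so by the state equation (Proposition~\ref{steq.prop}) the column of the incidence matrix indexed by $u$ vanishes. Hence $u$ has no token effect and fires only at $M_s$; removing it changes neither the set of reachable markings nor the $T_1$-arcs, so the reachability graph of $N_1$ is $\TS'_1$ with the $u$-loop deleted, i.e.\ $\TS_1$, proving that $N_1$ solves $\TS_1$. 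For adequacy, suppose some reachable $M_{s'}\gneqq M_s$ with $s'\neq s$. As $u$ is enabled at $M_s$, monotonicity of enabling gives that $u$ is enabled at $M_{s'}$ as well; but $\TS'_1$ has no $u$-arc leaving $s'$, contradicting that $N'$ solves $\TS'_1$. Thus $M_s$ is dominated by no other reachable marking, which is exactly the adequacy requirement.

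Next I would prove the contrapositive of the first assertion of the statement, namely that if $\TS_1$ admits an adequate solution then $\TS'_1$ is solvable (so ``$\TS'_1$ not solvable'' forces ``no adequate solution''). Given an adequate solution $N_1$ of $\TS_1$, extend it to $\TS'_1$ by attaching the new transition $u$ to every existing place $p$ with $F(p,u)=F(u,p)=M_s(p)$. This makes $u$ a zero-effect transition that is enabled at a reachable marking $M$ iff $M\geq M_s$ componentwise. At $s$ this holds; at any $s'\neq s$ we have $M_{s'}\neq M_s$, and $M_{s'}\geq M_s$ would give $M_s\lneqq M_{s'}$, contradicting adequacy, so $u$ is enabled only at $s$. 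All SSPs and the ESSPs for the labels of $T_1$ are untouched (the places and their $T_1$-connections are unchanged, and each remains a region of $\TS'_1$ since the single $u$-loop at $s$ imposes $\rho(s)\geq\rho(s)$ and a null difference), while for each $s'\neq s$ some place with $M_{s'}(p)<M_s(p)$ solves $\mathrm{ESSP}(s',u)$; hence the augmented net realises $\TS'_1$.

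Combining the two directions yields the equivalence ``$\TS_1$ has an adequate solution $\iff$ $\TS'_1$ is solvable'', and the construction of the second paragraph produces such an adequate solution whenever $\TS'_1$ is solvable. I expect the main obstacle to be the bookkeeping in that second paragraph: one must check carefully that erasing $u$ neither creates nor destroys reachable markings nor alters any $T_1$-arc, which is exactly where the zero token-effect of the self-loop is used; everything else follows from monotonicity of enabling and from distinctness of markings in a solution. Finally, I would remark that by Proposition~\ref{adeq.prop} a solvable $\TS_1$ always possesses an adequate solution, so the ``not solvable'' case cannot arise for solvable $\TS_1$; the genuine content is that this simpler self-loop trick recovers one.
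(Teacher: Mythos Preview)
Your proposal is correct and follows essentially the same approach as the paper: both directions hinge on the monotonicity of enabling and on attaching $u$ to every place $p$ with $F(p,u)=F(u,p)=M_s(p)$ as a zero-effect transition. Your version is somewhat more explicit (invoking the state equation to certify that $u$ has null incidence, and spelling out why erasing $u$ leaves the reachable markings and $T_1$-arcs intact), whereas the paper argues these points more tersely; the discussion of SSPs/ESSPs in your third paragraph is not needed, since a direct check of the reachability graph suffices, as the paper does.
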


\begin{proof}
If there is an adequate solution $N_1$ of $\TS_1$ (and from the reasoning above there is one iff there is a solution),
with a marking $M$ corresponding to $s$,
let us add a new transition $u$ to it with, for each place $p$ of $N_1$, $W(p,u)=M(p)= W(u,p)$:
the reachability graph of this new net is (isomorphic to) $\TS'_1$ since $u$ is enabled by
marking $M$ (or any larger one, but there is none) and does not modify the marking.
Hence, if there is no (adequate) solution of $\TS_1$, there is no solution of $\TS'_1$.

\medskip
Let us now assume there is a solution $N'_1$ of $\TS'_1$. The marking $M$ corresponding to $s$ is not dominated  otherwise there would be a loop $M'[s\rangle M'$ elsewhere in the reachability graph of $N'_1$, hence also in $\TS'_1$. Hence, dropping $u$ in $N'_1$ will lead to an adequate solution of $\TS_1$.
\end{proof} %%%\ENDPROOF{additional.prop}

For instance, when applied to $\TS_1$ in Figure~\ref{art1.fig}, this will lead to $N''_1$, and not $N'_1$ ($N_1$ could also be produced, but it is likely that a `normal' synthesis tool will not construct the additional isolated place).

Now, to understand how one may generate a solution for $\TS$ from the ones obtained for $\TS_1$ and $\TS_2$,
we may first carefully examine the example illustrated in Figure~\ref{art2.fig}:
some side conditions (i.e., pairs of place-transition with arcs going both ways, with identical weights)
occur in the global solution.
This leads to the following construction.

\begin{figure}[htb]
\vspace*{-2mm}
\begin{center}
\begin{tikzpicture}[scale=1.0]
\node[circle,fill=black!100,inner sep=0.07cm](s0)at(-1,3)[label=above:$\is$]{};
\node[circle,fill=black!100,inner sep=0.05cm](s1)at(-1,1.5)[label=above:$s$]{};
\node[circle,fill=black!100,inner sep=0.05cm](s2)at(-1,0)[label=below:$s_2$]{};
\draw[-latex](s0)to[out=-20,in=20]node[right,swap,inner sep=2pt,pos=0.5]{$a$}(s1);
\draw[-latex](s1)to[out=160,in=200]node[left,swap,inner sep=2pt,pos=0.5]{$b$}(s0);
\draw[-latex](s1)to[out=-20,in=20]node[right,inner sep=2pt,pos=0.5]{$c$}(s2);
\draw[-latex](s2)to[out=160,in=200]node[left,inner sep=2pt,pos=0.55]{$d$}(s1);
\node[]at(0,-1){$TS=\TS_1 \triangleleft s \triangleright \TS_2$};
\end{tikzpicture}\hspace*{1.0cm}
\begin{tikzpicture}[scale=1.0]
\node[circle,fill=black!100,inner sep=0.07cm](s0)at(0,3)[label=above:$\is_1$]{};
\node[circle,fill=black!100,inner sep=0.05cm](s1)at(0,1.5)[label=below:$s$]{};
\draw[-latex](s0)to[out=-20,in=20]node[right,swap,inner sep=2pt,pos=0.5]{$a$}(s1);
\draw[-latex](s1)to[out=160,in=200]node[left,swap,inner sep=2pt,pos=0.5]{$b$}(s0);
\node[]at(0,-1){$TS_1$};
\end{tikzpicture}\hspace*{1.0cm}
\begin{tikzpicture}[scale=1.0]
\node[circle,fill=black!100,inner sep=0.07cm](s0)at(0,1.5)[label=above:$\is_2$]{};
\node[circle,fill=black!100,inner sep=0.05cm](s1)at(0,0)[label=below:$s_2$]{};
\draw[-latex](s0)to[out=-20,in=20]node[right,swap,inner sep=2pt,pos=0.5]{$c$}(s1);
\draw[-latex](s1)to[out=160,in=200]node[left,swap,inner sep=2pt,pos=0.5]{$d$}(s0);
\node[]at(0,-1){$TS_2$};
\end{tikzpicture}\\[1cm]
\begin{tikzpicture}[scale=1.0]
\node[draw,minimum size=0.4cm](a)at(-1,1.5){$a$};
\node[draw,minimum size=0.4cm](b)at(1,1.5){$b$};
\node[circle,draw,minimum size=0.4cm](p0)at(0,3)[]{};\filldraw[black](0,3)circle(2pt);
\node[circle,draw,minimum size=0.4cm](p1)at(0,0)[]{};
\draw[-latex](p0)--(a);\draw[-latex](a)--(p1);\draw[-latex](p1)--(b);\draw[-latex](b)--(p0);
\node[]at(0,-1){$N_1$};
\end{tikzpicture}\hspace*{1.0cm}
\begin{tikzpicture}[scale=1.0]
\node[draw,minimum size=0.4cm](a)at(-1,1.5){$c$};
\node[draw,minimum size=0.4cm](b)at(1,1.5){$d$};
\node[circle,draw,minimum size=0.4cm](p0)at(0,3)[]{};\filldraw[black](0,3)circle(2pt);
\node[circle,draw,minimum size=0.4cm](p1)at(0,0)[]{};
\draw[-latex](p0)--(a);\draw[-latex](a)--(p1);\draw[-latex](p1)--(b);\draw[-latex](b)--(p0);
\node[]at(0,-1){$N_2$};
\end{tikzpicture}\hspace*{1.0cm}
\begin{tikzpicture}[scale=1.0]
\node[draw,minimum size=0.4cm](a)at(0,0){$a$};
\node[draw,minimum size=0.4cm](b)at(1.5,0.75){$b$};
\node[draw,minimum size=0.4cm](c)at(0,3){$c$};
\node[draw,minimum size=0.4cm](d)at(3,1.5){$d$};
\node[circle,draw,minimum size=0.4cm](p0)at(0,1.5)[]{};
\node[circle,draw,minimum size=0.4cm](p2)at(3,3)[]{};
\node[circle,draw,minimum size=0.4cm](p1)at(1.5,2.25)[]{};\filldraw[black](1.5,2.25)circle(2pt);
\node[circle,draw,minimum size=0.4cm](p3)at(3,0)[]{};\filldraw[black](3,0)circle(2pt);
\draw[-latex](a)--(p0);\draw[-latex](p3)--(a);
\draw[-latex](p0)--(b);\draw[-latex](b)--(p3);
\draw[-latex](c)--(p2);\draw[-latex](p2)--(d);
\draw[-latex](p1)--(c);\draw[-latex](d)--(p1);
\draw[-latex](c)to[out=-60,in=60](p0);\draw[-latex](p0)to[out=120,in=-120](c);
\draw[-latex](p1)to[out=-60,in=60](b);\draw[-latex](b)to[out=120,in=-120](p1);
\node[]at(1.5,-1){$N$};
\end{tikzpicture}
\vspace{-1.5em}
\end{center}
\caption{The lts $\TS$ is articulated around $s$, with $T_1=\{a,b\}$ and $T_2=\{c,d\}$,
hence leading to $\TS_1$ and $\TS_2$.
It is solved by $N$, and the corresponding solutions for $\TS_1$ and $\TS_2$ are $N_1$ and $N_2$, respectively.
In $N$, we may recognise $N_1$ and $N_2$, connected by two kinds of side conditions:
the first one connects the label $b$ out of $s$ in $\TS_1$ to the initial marking of $N_2$,
the other one connects the label $c$ out of $\is_2$ in $\TS_2$ to the marking of $N_1$ corresponding to $s$.}
\label{art2.fig}\vspace*{-1mm}
\end{figure}
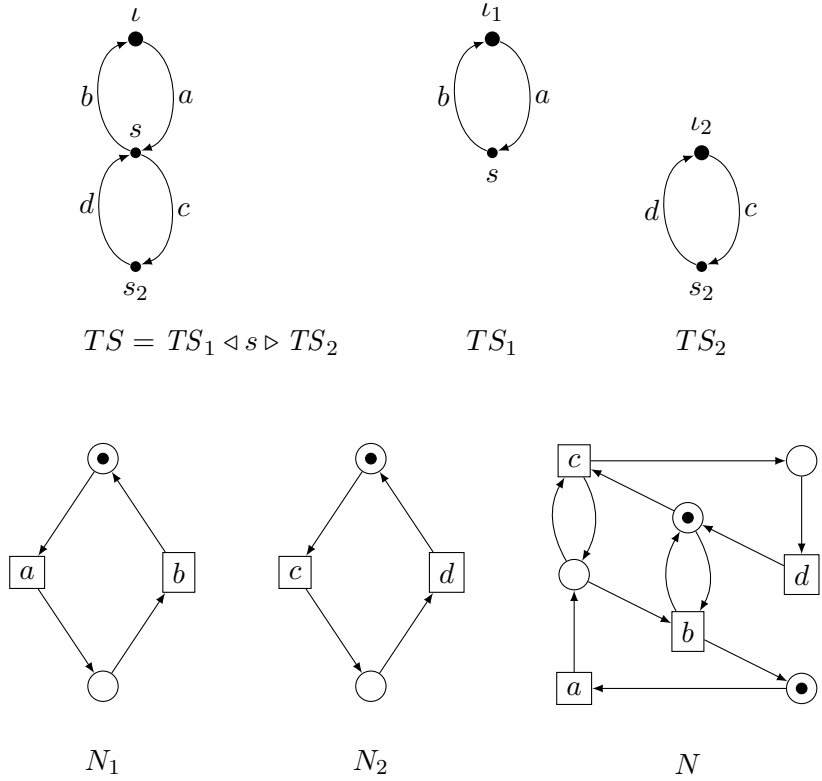

\begin{definition}{\sc Articulation of Petri Nets}\label{articulPN.def}\\
%Let $\TS=\TS_1\triangleleft s\triangleright\TS_2$ be an articulation of the LTS $\TS$ around $s$ for the partition $T=T_1\uplus T_2$. \\
Let $\PN_1=(P_1,T_1,F_1,M_0^1)$ and $\PN_2=(P_2,T_2,F_2,M_0^2)$ be two disjoint bounded Petri net systems
and $M$ a reachable marking of $\PN_1$ not dominated by another one.
$\PN_1\triangleleft M \triangleright\PN_2$ is the Petri net built first by putting side by side $\PN_1$ and $\PN_2$.

\medskip
Then, for each transition $t_1$ enabled at $M$ in $\PN_1$, and each place $p_2\in P_2$ such that $M_0^2(p_2)>0$,
create a side condition $F(t_1,p_2)=F(p_2,t_1)=M(p_2)$.
For each transition $t_2$ initially enabled in $\PN_2$, and each place $p_1\in P_1$ such that $M(p_1)>0$,
create a side condition $F(t_2,p_1)=F(p_1,t_2)=M(p_1)$.

\medskip
No other modification is afforded.\QED
\end{definition} %%\ENDDEF{articulPN.def}

\begin{proposition}\label{constr.prop}{\bf (Synthesis of articulation)}\\
 Let $\TS=\TS_1\triangleleft s \triangleright\TS_2$.
 If $\TS_1$ or $\TS_2$ are not solvable, so is $\TS$.\\
Otherwise, let $\PN_1$ be a solution of $\TS_1$ adequate at $s$,
i.e., such that the marking $M_s$ corresponding to $s$ is not dominated by another reachable marking,
and let $\PN_2$ be a disjoint solution of $\TS_2$.
Then the net $\PN_1\triangleleft M_s \triangleright\PN_2$ is a solution of $\TS$.
\end{proposition}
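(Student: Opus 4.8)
The plan is to verify that the constructed net $\PN=\PN_1\triangleleft M_s\triangleright\PN_2$ has a reachability graph isomorphic to $\TS=\TS_1\triangleleft s\triangleright\TS_2$. The first (negative) direction is immediate from Proposition~\ref{art1.prop}: if $\TS$ were solvable then both $\TS_1$ and $\TS_2$ would be solvable, so unsolvability of either factor forces unsolvability of $\TS$. For the main (positive) direction, I would first record the crucial fact that the side conditions added in Definition~\ref{articulPN.def} are genuine side conditions (equal backward and forward weights), so they \emph{never change the incidence matrix}; hence every firing in $\PN$ updates the $\PN_1$-places and $\PN_2$-places exactly as in $\PN_1$ and $\PN_2$ separately. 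The side conditions therefore act purely as \emph{guards}: they can only forbid certain firings, not alter the token count. Consequently the reachable markings of $\PN$ project onto markings of $\PN_1$ and of $\PN_2$, and the whole problem reduces to showing that the guards switch enabling on and off in exactly the pattern described by Proposition~\ref{art.prop} (evolutions of $\TS$ are alternations of $\TS_1$- and $\TS_2$-evolutions separated by visits to $s$).

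The key step is to characterise, for a reachable marking of $\PN$, \emph{when} the guards are transparent. I would argue as follows. Starting from the combined initial marking $(M_0^1,M_0^2)$, as long as only $T_1$-transitions have fired we stay in the $\PN_2$-part at its initial marking $M_0^2$; the side conditions attached to the $T_1$-transitions enabled at $M_s$ require $M_0^2(p_2)$ tokens in each initially-marked place $p_2$ of $\PN_2$, which are present, so these guards are transparent and $\PN_1$ evolves freely. Symmetrically, the guards that a $T_2$-transition $t_2$ (initially enabled in $\PN_2$) places on the $\PN_1$-part demand $M_s(p_1)$ tokens in each place $p_1$ with $M_s(p_1)>0$; this is satisfied \emph{exactly} at the marking $M_s$ corresponding to $s$. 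Here I would invoke the \textbf{adequacy hypothesis}: because $M_s$ is not dominated by any other reachable marking of $\PN_1$, the only reachable $\PN_1$-marking at which all these $T_2$-guards are simultaneously satisfied is $M_s$ itself. This is the linchpin — it guarantees that $\TS_2$ can only be entered when $\PN_1$ sits precisely at $s$, matching the articulation semantics, and it is what rules out spurious interleavings that would otherwise make $\PN$ solve something larger than $\TS$.

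With that characterisation in hand, I would set up the isomorphism $\zeta$ between reachable markings of $\PN$ and states of $\TS$ by induction along firing sequences, using Proposition~\ref{art.prop} to organise any evolution of $\TS$ as a sequence of $T_1$- and $T_2$-segments pivoting through $s$. Reachable markings of $\PN$ split into those where the $\PN_2$-part is at $M_0^2$ (these mirror states of $\TS_1$, reached by firing the corresponding $T_1$-sequence with transparent guards) and those where the $\PN_1$-part is at $M_s$ while $\PN_2$ has moved (these mirror the states of $\TS_2$ reached after $\is[\alpha_1\rangle s$); the non-domination of $M_s$ is re-used to ensure these two families overlap only at the marking coding $s$, so $\zeta$ is well-defined and bijective. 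Finally I would check edge preservation in both directions, which is routine once the guard-transparency pattern above is established, and appeal to the determinism and total reachability of $\TS_1$ and $\TS_2$ (assumed throughout for articulations) to conclude $\TS\equiv$ the reachability graph of $\PN$.

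I expect the \textbf{main obstacle} to be the rigorous proof that no reachable $\PN_1$-marking \emph{other than} $M_s$ satisfies all the $T_2$-guards at once: the adequacy assumption gives non-domination of $M_s$, but I must show the guards collectively encode ``$M\geq M_s$ on the support of $M_s$'', and then combine non-domination with boundedness to force $M=M_s$. A secondary delicate point is confirming that re-entering $\PN_1$ after a $T_2$-segment (the alternation in Proposition~\ref{art.prop}) returns the $\PN_2$-part to $M_0^2$ exactly when $\PN_1$ is again at $M_s$, so that the guards on $T_1$-transitions become transparent again — this is the symmetric counterpart and will need the same non-domination argument applied to $M_0^2$ in $\PN_2$ (available because $\TS_2$ is finite, as noted in the discussion preceding Proposition~\ref{adeq.prop}).
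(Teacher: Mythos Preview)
Your proposal is correct and follows essentially the same line as the paper's own proof: invoke Proposition~\ref{art1.prop} for the negative direction, then argue that the added side conditions act as pure guards which are transparent for $\PN_1$ exactly when $\PN_2$ sits at $M_0^2$ and transparent for $\PN_2$ exactly when $\PN_1$ sits at $M_s$, using non-domination of $M_s$ in $\PN_1$ and of $M_0^2$ in $\PN_2$ to rule out spurious enablings; the paper does this in a single informal paragraph, while you spell out the isomorphism more carefully. One remark: the ``main obstacle'' you anticipate is lighter than you fear --- since markings are non-negative and $M_s(p_1)=0$ outside its support, the guard condition ``$M(p_1)\geq M_s(p_1)$ on the support of $M_s$'' already yields $M\geq M_s$ componentwise, so non-domination alone (without any appeal to boundedness) forces $M=M_s$.
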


\begin{proof}
The property arises from Proposition~\ref{art1.prop} and
the observation that $\PN_1$ with the additional side conditions behaves like the original $\PN_1$
if $\PN_2$ is in $M_0^2$ and $\PN_2$ with the additional side conditions behaves like the original $\PN_2$
if $\PN_1$ is in $M_s$.
When $\PN_1$ is not in $M_s$, (modified) $\PN_2$ may not leave $M_0^2$ since $M_s$ is not dominated in $\PN_1$.
When $\PN_1$ reaches $M_s$, (modified) $\PN_2$ may start its job
but then (modified) $\PN_1$ may not leave $M_s$ until $\PN_2$ returns to $M_0^2$ since the latter is not dominated in $\PN_2$.
When $\PN_2$ is in $M_0^2$, (modified) $\PN_1$ may leave $M_s$
but then (modified) $\PN_2$ may not leave $M_0^2$ since $M_s$ is not dominated in $\PN_1$.
The evolutions of the constructed net thus correspond exactly to what is described by $\TS$.
\end{proof}  %%\ENDPROOF{constr.prop}

\noindent Note that we do not claim this is the only solution, but the goal is to find a solution when there is one.

%\section{Decomposition} \label{decomp.sct}

It remains to show when and how an LTS  may be decomposed by a non-trivial  articulation (or several ones).
Let us thus consider some  LTS $\TS=(S,\to,T,\is)$.
We may assume it is finite, totally reachable, deterministic and weakly live (there is no useless label).

\medskip
First, we may observe that, for any two distinct  labels $t,t'\in T$, if $|\adj(\{t\})\cap\adj(\{t'\})|>1$,
$t$ and $t'$ must belong to the same subset for defining an articulation.
Let us extend the function $\adj$ to non-empty subsets of labels by stating $\adj(T')=\cup_{t\in T'}\adj(t)$
when $\emptyset\subset T'\subset T$ .
We then have that, if $\emptyset\subset T_1,T_2\subset T$ and we know that all the labels in $T_1$
must belong to the same subset for defining an articulation, and similarly for $T_2$,
$|\adj(T_1)\cap\adj(T_2)|>1$ implies that
$T_1\cup T_2$ must belong to the same subset of labels defining an articulation.
If we get the full set $T$, that means that there is no possible articulation (but the trivial one).

Hence, starting from any partition $\mathcal T$ of $T$
(initially, if $T=\{t_1,t_2,\ldots,t_n\}$, we shall start from the finest partition ${\mathcal T}=\{\{t_1\},\{t_2\},\ldots,\{t_n\}\}$),
we shall construct the finest partition compatible with the previous rule:\medskip

\noindent \textbf{while there is $T_1,T_2\in{\mathcal T}$ such that $T_1\neq T_2$ and $|\adj(T_1)\cap\adj(T_2)|>1$, replace $T_1$ and $T_2$ in $\mathcal T$ by $T_1\cup T_2$}.\medskip

\noindent  At the end, if ${\mathcal T}=\{T\}$, we may stop with the result: {\em there is no non-trivial articulation}.

\medskip
Otherwise, we may define a finite bipartite undirected graph
whose nodes are the members of the partition $\mathcal T$ and some states of $S$,
such that  if $T_i,T_j\in\mathcal T, T_i\neq T_j$ and  $\adj(T_i)\cap\adj(T_j)=\{s\}$, there is a node $s$ in the graph,
connected to $T_i$ and $T_j$ (and this is the only reason to have a state as a node of the graph).
Since $\TS$ is weakly live and totally reachable, this graph is connected, and each state occurring in it has at least two neighbours
(on the contrary, a subset of labels may be connected to a single state).
Indeed, since $\TS$ is weakly live, $\cup_{T'\in\mathcal T}\adj(T')=S$.
Each state $s$ occurring as a node in the graph is connected to at least two members of the $\mathcal T$,
by the definition of the introduction of $s$ in the graph. Let $T_1$ be the member of $\mathcal T$ such that $\is\in\adj(T_1)$,
let $T_i$ be any other member of $\mathcal T$, and let us consider a path $\is[\alpha\rangle$ ending with some $t\in T_i$
(we may restrict our attention to a short such path, but this is not necessary):
each time there is a sequence $t't''$ in $\alpha$ such that $t'$ and $t''$ belong to two different  members $T'$ and $T''$ of
$\mathcal T$, we have $[t'\rangle s [t''\rangle$, where $s$ is the only state-node connected to $T'$ and $T''$,
hence in the graph we have $T'\rightarrow s\rightarrow T''$.
This will yield a path in the constructed graph going from $T_1$ to $T_i$, hence the connectivity.

If there is a cycle in this graph, that means that there is no way to group the members of $\mathcal T$ in this cycle in two subsets
such that the corresponding adjacency sets only have a single common state. Hence we need to fuse all these members,
for each such cycle, leading to a new partition, and we also need to go back to the refinement of the partition in order to be compatible with the intersection rule, and to the construction of the graph.

\begin{figure}[htbp]
\begin{center}
\begin{tikzpicture}[scale=1.0]
\node[circle,fill=black!100,inner sep=0.07cm](s0)at(0,0)[label=above:$\is$]{};
\node[circle,fill=black!100,inner sep=0.05cm](s1)at(1,0)[label=right:$s_1$]{};
\node[circle,fill=black!100,inner sep=0.05cm](s2)at(2,-1)[label=below:$s_2$]{};
\node[circle,fill=black!100,inner sep=0.05cm](s3)at(2,1)[label=above:$s_3$]{};
\draw[-latex](s0)to[out=30,in=150]node[above,swap,inner sep=2pt,pos=0.5]{$a$}(s1);
\draw[-latex](s1)to[out=-150,in=-30]node[below,swap,inner sep=2pt,pos=0.5]{$b$}(s0);
\draw[-latex](s1)to[]node[left,inner sep=2pt,pos=0.5]{$c$}(s2);
\draw[-latex](s2)to[]node[right,inner sep=2pt,pos=0.5]{$d$}(s3);
\draw[-latex](s3)to[]node[left,inner sep=2pt,pos=0.55]{$e$}(s1);
\node[circle,fill=black!100,inner sep=0.05cm](s4)at(3,1)[]{};
\node[circle,fill=black!100,inner sep=0.05cm](s5)at(4,1)[]{};
\draw[-latex](s3)to[]node[above,inner sep=2pt,pos=0.5]{$f$}(s4);
\draw[-latex](s4)to[]node[above,inner sep=2pt,pos=0.5]{$f$}(s5);
\node[circle,fill=black!100,inner sep=0.05cm](s6)at(0,-1)[]{};
\draw[-latex](s2)to[out=-150,in=-30]node[below,swap,inner sep=2pt,pos=0.5]{$g$}(s6);
\draw[-latex](s6)to[out=30,in=150]node[below,swap,inner sep=2pt,pos=0.5]{$h$}(s2);
\node[circle,fill=black!100,inner sep=0.05cm](s7)at(4,-1)[label=above:$s_7$]{};
\draw[-latex](s2)to[out=30,in=150]node[below,swap,inner sep=2pt,pos=0.5]{$i$}(s7);
\draw[-latex](s7)to[out=-150,in=-30]node[below,swap,inner sep=2pt,pos=0.5]{$j$}(s2);
\node[circle,fill=black!100,inner sep=0.05cm](s8)at(5,-1)[]{};
\node[circle,fill=black!100,inner sep=0.05cm](s9)at(5,0)[]{};
\draw[-latex](s7)to[]node[above,inner sep=2pt,pos=0.5]{$k$}(s8);
\draw[-latex](s8)to[]node[right,inner sep=2pt,pos=0.5]{$k$}(s9);
\node[]at(2,-2){$TS$};
\end{tikzpicture}\\[0.7cm]
\begin{tikzpicture}[scale=1.0]
\node[](s1)at(0,0){$\{a,b\}$};
\node[](ss1)at(1.5,0){$s_1$};
\node[](s2)at(3,0){$\{c,d,e\}$};
\node[](ss3)at(3,1){$s_3$};
\node[](s3)at(4.5,1){$\{f\}$};
\node[](ss2)at(4.5,0){$s_2$};
\node[](s4)at(6,0){$\{h,g\}$};
\node[](s5)at(4.5,-1){$\{i,j\}$};
\node[](ss7)at(6,-1){$s_7$};
\node[](s6)at(7.5,-1){$\{k\}$};
\draw[-](s1)to(ss1);\draw[-](ss1)to(s2);
\draw[-](s2)to(ss3);\draw[-](ss3)to(s3);
\draw[-](s2)to(ss2);\draw[-](ss2)to(s4);
\draw[-](ss2)to(s5);
\draw[-](s5)to(ss7);\draw[-](ss7)to(s6);
\node[]at(3,-2){$G$};
\end{tikzpicture}\\[0.7cm]
\begin{tikzpicture}[scale=1.0]
\node[circle,fill=black!100,inner sep=0.07cm](s0)at(0,3)[label=above:$\is$]{};
\node[circle,fill=black!100,inner sep=0.05cm](s1)at(0,1.5)[label=below:$s_1$]{};
\draw[-latex](s0)to[out=-20,in=20]node[right,swap,inner sep=2pt,pos=0.5]{$a$}(s1);
\draw[-latex](s1)to[out=160,in=200]node[left,swap,inner sep=2pt,pos=0.5]{$b$}(s0);
\node[]at(0,0.5){$TS_1$};
\end{tikzpicture}\hspace*{1.0cm}
\begin{tikzpicture}[scale=1.0]
\node[circle,fill=black!100,inner sep=0.07cm](s1)at(1,0)[label=left:$\is_2$]{};
\node[circle,fill=black!100,inner sep=0.05cm](s2)at(2,-1)[label=below:$s_2$]{};
\node[circle,fill=black!100,inner sep=0.05cm](s3)at(2,1)[label=above:$s_3$]{};
\draw[-latex](s1)to[]node[left,inner sep=2pt,pos=0.5]{$c$}(s2);
\draw[-latex](s2)to[]node[right,inner sep=2pt,pos=0.5]{$d$}(s3);
\draw[-latex](s3)to[]node[left,inner sep=2pt,pos=0.55]{$e$}(s1);
\node[]at(1.5,-2){$TS_2$};
\end{tikzpicture}\hspace*{1.0cm}
\begin{tikzpicture}[scale=1.0]
\node[circle,fill=black!100,inner sep=0.07cm](s0)at(0,2)[label=above:$\is_3$]{};
\node[circle,fill=black!100,inner sep=0.05cm](s1)at(0,1)[]{};
\node[circle,fill=black!100,inner sep=0.05cm](s2)at(0,0)[]{};
\draw[-latex](s0)to[]node[right,swap,inner sep=2pt,pos=0.5]{$f$}(s1);
\draw[-latex](s1)to[]node[left,swap,inner sep=2pt,pos=0.5]{$f$}(s2);
\node[]at(0,-1){$TS_3$};
\end{tikzpicture}\\[0.7cm]
\begin{tikzpicture}[scale=1.0]
\node[circle,fill=black!100,inner sep=0.07cm](s0)at(0,3)[label=above:$\is_4$]{};
\node[circle,fill=black!100,inner sep=0.05cm](s1)at(0,1.5)[label=below:$$]{};
\draw[-latex](s0)to[out=-20,in=20]node[right,swap,inner sep=2pt,pos=0.5]{$h$}(s1);
\draw[-latex](s1)to[out=160,in=200]node[left,swap,inner sep=2pt,pos=0.5]{$g$}(s0);
\node[]at(0,0.5){$TS_4$};
\end{tikzpicture}\hspace*{1.0cm}
\begin{tikzpicture}[scale=1.0]
\node[circle,fill=black!100,inner sep=0.07cm](s0)at(0,3)[label=above:$\is_5$]{};
\node[circle,fill=black!100,inner sep=0.05cm](s1)at(0,1.5)[label=below:$s_7$]{};
\draw[-latex](s0)to[out=-20,in=20]node[right,swap,inner sep=2pt,pos=0.5]{$i$}(s1);
\draw[-latex](s1)to[out=160,in=200]node[left,swap,inner sep=2pt,pos=0.5]{$j$}(s0);
\node[]at(0,0.5){$TS_5$};
\end{tikzpicture}\hspace*{1.0cm}
\begin{tikzpicture}[scale=1.0]
\node[circle,fill=black!100,inner sep=0.07cm](s0)at(0,2)[label=above:$\is_6$]{};
\node[circle,fill=black!100,inner sep=0.05cm](s1)at(0,1)[]{};
\node[circle,fill=black!100,inner sep=0.05cm](s2)at(0,0)[]{};
\draw[-latex](s0)to[]node[right,swap,inner sep=2pt,pos=0.5]{$k$}(s1);
\draw[-latex](s1)to[]node[left,swap,inner sep=2pt,pos=0.5]{$k$}(s2);
\node[]at(0,-1){$TS_6$};
\end{tikzpicture}\\[0.2cm]
$TS\equiv\TS_1\triangleleft s_1\triangleright(((\TS_2\triangleleft s_3\triangleright\TS_3)\triangleleft s_2\triangleright\TS_4)\triangleleft s_2\triangleright(\TS_5\triangleleft s_7\triangleright\TS_6))$
\end{center}\vspace*{-2mm}
\caption{The lts $\TS$ leads to the graph $G$. The corresponding components are $\TS_1$ to $\TS_6$,
which may easily be synthesised;
note that, from the total reachability of $\TS$, they are all totally reachable themselves.
This leads to the articulated expression  below.}
\label{decomp.fig}
\end{figure}

%\medskip
Finally, we shall get an acyclic graph $G$, with at least three nodes
(otherwise we stopped the articulation algorithm with the information that there is no non-trivial decomposition).

We shall now define a procedure $\art(SG)$ that builds an LTS expression based on articulations from a subgraph $SG$ of $G$
with a chosen state-node root. We shall then apply it recursively to $G$,
leading finally to an articulation-based (possibly complex) expression equivalent to the original LTS~$\TS$.

The basic case will be that, if  $SG$ is a graph composed of a state $s$ connected to a subset node $T_i$,
$\art(SG)$ will be the LTS $\TS_i=(\adj(T_i),T_i,\to_i,s)$ (as usual $\to_i$ is the projection of $\to$ on $T_i$;
by construction, it will always be the case that $s\in\adj(T_i)$).

First, if $\is$ is a state-node of the graph, $G$ then has the form of a star with root $\is$ and a set of satellite subgraphs
$G_1$, $G_2$, ..., $G_n$ ($n$ is at least 2). Let us denote by $SG_i$ the subgraph with root $\is$ connected to $G_i$:
the result will then be the (commutative, see Proposition~\ref{com.prop}) articulation around $\is$ of all the LTSs $\art(SG_i)$.

Otherwise, let $T_1$ be the (unique) label subset in the graph such that $\is\in\adj(T_1)$.
$G$ may then be considered as a star with $T_1$ at the center, surrounded by subgraphs $SG_1$, $SG_2$, ..., $SG_n$
(here $n$ may be $1$), each one with a root $s_i$ connected to $T_1$
(we have here that $s_i\in\adj(T_1)$, and we allow $s_i=s_j$): the result is then
$((\ldots((\adj(T_1),T_1,\to_1,\is)\triangleleft s_1\triangleright\art(SG_1))\triangleleft s_2\triangleright\art(SG_2))\ldots)\triangleleft s_n\triangleright\art(SG_n))$.
Note that, if $n>1$, the order in which we consider the subgraphs is irrelevant from Proposition~\ref{comassoc.prop}.

\medskip
Finally, if a subgraph starts from a state $s'$, followed by a subset $T'$, itself followed by subgraphs $SG_1$, $SG_2$, ..., $SG_n$
($n\geq 1$; if it is $0$ we have the base case), each one with a root $s_i$ connected to $T'$
(we have here that $s'\in\adj(T')$, and we allow $s_i=s_j$): the result is then
$((\ldots((\adj(T'),T',\to',s')\triangleleft s_1\triangleright\art(SG_1))\triangleleft s_2\triangleright\art(SG_2))\ldots)\triangleleft s_n\triangleright\art(SG_n))$.
Again,  if $n>1$, the order in which we consider the subgraphs is irrelevant from Proposition~\ref{comassoc.prop}.

\medskip
This procedure is illustrated in Figure~\ref{decomp.fig}.

\medskip
Contrary to what happened for the product of nets, articulations do not preserve many Petri net subclasses.
For instance, if $\PN_1$ and $\PN_2$ are plain (no arc weight greater than 1),
$\PN_1 \triangleleft M \triangleright\PN_2$ is not plain (unless $M$ and $M_0^2$ do not have more than one token in any place);
however, since we may have many solutions to a synthesis problem, it may happen that $\PN_1 \triangleleft M \triangleright\PN_2$  is not safe but that another solution of the same problem is safe.
On the contrary, an immediate consequence of Definition~\ref{articulPN.def} is that

\begin{corollary}\label{safea.cor}{\bf (Bound preservation)}\\
$\PN_1 \triangleleft M \triangleright\PN_2$  is safe iff so are $\PN_1$ and $\PN_2$.
If $\PN_1 \triangleleft M \triangleright\PN_2$ is $k$-safe, so are $\PN_1$ and $\PN_2$;
finally, if $\PN_1$ is $k_1$-safe and $\PN_2$ is $k_2$-safe,
then $\PN_1 \triangleleft M \triangleright\PN_2$ is $\max(k_1,k_2)$-safe.
\end{corollary}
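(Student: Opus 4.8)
The plan is to exploit the single structural feature of Definition~\ref{articulPN.def} that matters for boundedness: the articulation \emph{adds no new place}, so the place set of $\PN_1 \triangleleft M \triangleright\PN_2$ is exactly $P_1\uplus P_2$, and every arc it adds is a \emph{side condition}, i.e.\ a pair $F(t,p)=F(p,t)$. Firing a transition across a side condition removes and immediately returns the same number of tokens, hence leaves the count of $p$ unchanged. Consequently, along any firing sequence of the articulated net the token count of a place $p\in P_1$ is affected only by the transitions of $T_1$ (the transitions of $T_2$ touch $P_1$-places through side conditions only), and symmetrically for $P_2$.

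First I would prove the inclusion of reachable markings in one direction. Given a firing sequence $\sigma$ of $\PN_1 \triangleleft M \triangleright\PN_2$ reaching a marking $\overline M$, I delete from $\sigma$ all occurrences of $T_2$-transitions. By the remark above this does not change the $P_1$-component of the running marking, so the resulting word $\sigma_1$ yields the same $P_1$-marking as $\sigma$; moreover $\sigma_1$ is firable in $\PN_1$, because a $T_1$-transition fired in the articulated net has in particular its $F_1$-preconditions satisfied (the added side conditions towards $P_2$ can only impose \emph{more} constraints). Hence $\overline M$ restricted to $P_1$ is a reachable marking of $\PN_1$, and symmetrically its restriction to $P_2$ is reachable in $\PN_2$. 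This already yields the last claim: if $\PN_1$ is $k_1$-safe and $\PN_2$ is $k_2$-safe, then every place of the articulated net carries at most $\max(k_1,k_2)$ tokens in every reachable marking.

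For the converse direction I would show that every reachable marking of a component embeds into a reachable marking of the whole net, relying on the behavioural analysis already carried out in the proof of Proposition~\ref{constr.prop}. Keeping $\PN_2$ frozen at $M_0^2$, any firing sequence of $\PN_1$ can be replayed in the articulated net: the only new constraints on the $T_1$-transitions enabled at $M$ are side conditions of weight $M_0^2(p_2)$ on the initially marked places $p_2$, which hold because those places still carry exactly $M_0^2(p_2)$ tokens. Thus $(M_1,M_0^2)$ is reachable for every reachable $M_1$ of $\PN_1$. Symmetrically, after driving $\PN_1$ to the reachable marking $M$ and freezing it there, any firing sequence of $\PN_2$ can be replayed: the side conditions on the initially enabled $T_2$-transitions have weight $M(p_1)$ on places $p_1\in P_1$ and are satisfied since $\PN_1$ sits at $M$. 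Hence $(M,M_2)$ is reachable for every reachable $M_2$ of $\PN_2$. It follows that if $\PN_1 \triangleleft M \triangleright\PN_2$ is $k$-safe then so are $\PN_1$ and $\PN_2$.

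Combining the two directions gives the full statement: the third sentence is the embedding/projection argument just described, the second sentence is its $k$-instance, and ``safe iff safe'' is the case $k=k_1=k_2=1$, since $\max(1,1)=1$. The main obstacle is the converse embedding of the previous paragraph: one must check carefully that freezing one component never disables the transitions of the other, which amounts to verifying that the weights chosen for the side conditions in Definition~\ref{articulPN.def} exactly match the frozen marking ($M_0^2$ on the $\PN_2$ side, $M$ on the $\PN_1$ side). Once this bookkeeping is done, boundedness transfers place by place with no further effort.
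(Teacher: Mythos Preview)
Your argument is correct and is precisely the kind of reasoning the paper has in mind: the paper gives no detailed proof of this corollary, stating only that it is ``an immediate consequence of Definition~\ref{articulPN.def}''. The observations you make explicit---that the articulation introduces no new places and only side conditions, so token counts on $P_1$ (resp.\ $P_2$) are affected solely by $T_1$ (resp.\ $T_2$)---are exactly what justifies that claim, and your two inclusions (projection of reachable markings onto the components, embedding of component markings via freezing the other side) are the natural way to cash it out; the embedding direction also implicitly reuses the behavioural analysis of Proposition~\ref{constr.prop}, as you note.
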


\section{Mixed decomposition}\label{mixed.sct}

In the previous sections we have introduced two pairs of (families of) operators acting on transition systems and Petri net systems:
$\TS_1\otimes\TS_2\mbox{ - }\PN_1\oplus\PN_2$ and
$\TS_1\triangleleft s\triangleright\TS_2\mbox{ - }\PN_1\triangleleft M\triangleright\PN_2$.

They may be intermixed, as exemplified in Figure~\ref{seq.fig}, where
$\TS=\TS(start);(\TS(a)\otimes\TS(b))$; $\TS(end)$ may be rewritten
$\TS=\TS(start)\triangleleft s_1\triangleright((\TS(a)\otimes\TS(b))\triangleleft s_4\triangleright\TS(end))$.

We may wonder however if there are cases where a transition system may be decomposed both as a
(non-trivial)  product and as an (non-trivial) articulation.
In the following, we shall assume there is no useless label (i.e., each label occurs at least once in a transition),
and that the transition systems are totally reachable (otherwise there is no Petri net solution).
If $T$ is a set of labels, we shall  also denote by $\is_T$ the transition system with a single state $\is$ and, for each $t\in T$, a loop $\is[t\rangle \is$ (up to isomorphism, it is the only transition system with only one state and label set $T$, without useless label).

\medskip
First, we may have $\TS\equiv\TS_1\otimes\TS_2\equiv\TS_1\triangleleft s_1\triangleright\TS_2$,
 but only if $\TS,\TS_1,\TS_2$ have a single state, as illustrated by  Figure~\ref{single.fig}.

\begin{figure}[hbt]
\vspace*{-2mm}
\begin{tikzpicture}[scale=0.9]
\draw[-latex] (-1,0) .. controls (-3,2) and (1,2) .. (-1,0);
\node[](T1)at(-1,1){$a$};
\node[circle,fill=black!100,inner sep=0.07cm](i1)at(-1,0)[label=below:$\is_1$]{};
\node[](TS1)at(-1,-1){{$\TS_1$}};
\draw[-latex] (1,0) .. controls (-1,2) and (3,2) .. (1,0);
\node[](T2)at(1,1){$b$};
\node[circle,fill=black!100,inner sep=0.07cm](i2)at(1,0)[label=below:$\is_2$]{};
\node[](TS2)at(1,-1){{$\TS_2$}};
\draw[-latex] (6,0) .. controls (2,2) and (6,2) .. (6,0);
\node[](T1)at(5,1){$a$};
\draw[-latex] (6,0) .. controls (6,2) and (10,2) .. (6,0);
\node[](T1)at(7,1){$b$};
\node[circle,fill=black!100,inner sep=0.07cm](i)at(6,0)[label=below:$\is$]{};
\node[](TS)at(6,-1){{$\TS_1\otimes\TS_2\equiv_{\{a,b\}}\TS_1\triangleleft \is_1\triangleright\TS2$}};
\end{tikzpicture}\vspace*{-2mm}
\caption{Singleton case.} \label{single.fig}\vspace*{-3mm}
\end{figure}
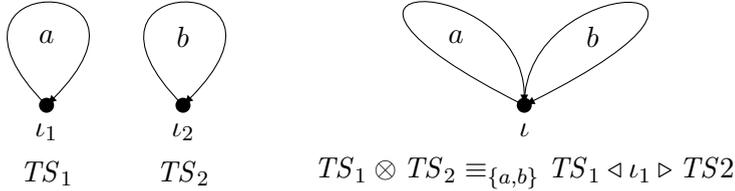

 \begin{proposition}\label{single.prop}{\bf (Equivalent decompositions (first case))}\\
 $\TS\equiv\TS_1\otimes\TS_2\equiv\TS_1\triangleleft s_1\triangleright\TS_2$, with $|T_1|>0<|T_2|$,
 iff $|S_1|=1=|S_2|$ (hence also $|S|=1$ and $s_1=\is_1$).
 \end{proposition}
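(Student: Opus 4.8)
The plan is to treat the two implications separately, the substantive content lying entirely in the forward (``only if'') direction; the converse is the immediate computation already displayed in Figure~\ref{single.fig}. Throughout I use the standing assumption of the section that $\TS$ (hence, by Proposition~\ref{dec.prop}, each factor) is totally reachable and has no useless labels.

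For the converse, if $|S_1|=1=|S_2|$ then, since every label is useful and each $\TS_i$ is totally reachable from its single state, every label of $T_i$ must be a self-loop at that state, so $\TS_1\equiv\is_{T_1}$ and $\TS_2\equiv\is_{T_2}$. Both $\TS_1\otimes\TS_2$ and $\TS_1\triangleleft\is_1\triangleright\TS_2$ then collapse to the single-state system $\is_{T_1\cup T_2}$, and the side claims ($|S|=1$ and $s_1=\is_1$) are then immediate.

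For the forward direction I would first extract a cardinality constraint. By Proposition~\ref{dec.prop} the factors of a totally reachable product are totally reachable, so every pair in $S_1\times S_2$ is reachable and $|S|=|S_1|\cdot|S_2|$. On the other hand, Definition~\ref{articul.def} glues $\TS_1$ and $\TS_2$ along the single shared state $s_1=\is_2$, whence $|S|=|S_1|+|S_2|-1$. Equating the two expressions gives $|S_1|\cdot|S_2|=|S_1|+|S_2|-1$, i.e.\ $(|S_1|-1)(|S_2|-1)=0$, so at least one factor is a singleton.

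The main obstacle is that this counting does not by itself finish the proof, since the equation is satisfied also by, e.g., $|S_1|=1$ and $|S_2|=5$; a structural invariant is needed. Here I would use the number of states carrying a self-loop with a fixed label, which is preserved by isomorphism. Suppose, say, $|S_1|=1$; then $\TS_1\equiv\is_{T_1}$ and, choosing any $a\in T_1$ (possible since $|T_1|>0$), in $\TS_1\otimes\TS_2$ every state $(\is_1,s_2)$ carries an $a$-loop, giving $|S_2|$ such states, whereas in $\TS_1\triangleleft s_1\triangleright\TS_2$ the label $a$ occurs only in $\to_1$, all of whose edges are loops at the single state $s_1=\is_2$, giving exactly one such state; hence $|S_2|=1$. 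The case $|S_2|=1$ is symmetric: pick $b\in T_2$ (nonempty) and compare the $|S_1|$ $b$-loops of the product with the unique $b$-loop sitting at $s_1$ in the articulation to force $|S_1|=1$. In either case both factors are singletons, and $s_1=\is_1$ and $|S|=1$ follow at once.
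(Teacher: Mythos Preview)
Your proof is correct and follows essentially the same route as the paper: both start from the cardinality identity $|S_1|\cdot|S_2|=|S_1|+|S_2|-1$ to force one factor to be a singleton, and then exploit that the resulting self-loop label $a\in T_1$ is adjacent to every state of the product but confined to a single state in the articulation. The only cosmetic difference is in how the contradiction is phrased: you count the states carrying an $a$-loop (an isomorphism invariant), while the paper picks an edge $s[t\rangle s'$ in $\TS_2$ and observes that both endpoints lie in $\adj(a)\cap\adj(t)$, violating the requirement $|\adj(T_1)\cap\adj(T_2)|=1$ from Definition~\ref{articul.def}.
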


 \begin{proof}
 If $\{T_1,T_2\}$ is a partition of $T$, we have that
 $\is_T\equiv\is_{T_1}\otimes\is_{T_2}\equiv\is_{T_1}\triangleleft \is\triangleright\is_{T_2}$.

 \medskip
 Let us now assume that  $\TS\equiv\TS_1\otimes\TS_2\equiv\TS_1\triangleleft s_1\triangleright\TS_2$.
 We must have $|S|=|S_1|\cdot|S_2|=|S_1|+|S_2|-1$, so that $(|S_1|-1)\cdot(|S_2|-1)=0$ and $|S_1|=1$ or $|S_2|=1$.
 Let us assume that $|S_2|>1$ (the case $|S_1|>1$ is symmetrical):
 there must be $s\neq s'\in S_2$ and $t\in T_2$ such that $(s,t,s')\in\to_2$.
 Since $|S_1|=1$ and it is assumed there is no useless label, $\TS_1\equiv\is_{T_1}$ for some  partition $\{T_1,T_2\}$ of $T$.
 In $\TS_1\otimes\TS_2$, we have $(\is,s)[t\rangle(\is,s')$, $(\is,s)[a\rangle(\is,s)$ and $(\is,s')[a\rangle(\is,s')$ for any $a\in T_1$.
 Hence $(\is,s)\neq(\is,s')\in\adj(t)\cap\adj(a)$ and $t,a\in T_2$ in $\TS_1\triangleleft \is\triangleright\TS_2$,
 contradicting the fact that $\{T_1,T_2\}$ is a partition of $T$.
 Hence, we must have $|S_1|=|S_2|=1$.
 \end{proof} %% \ENDBEW{single.prop}

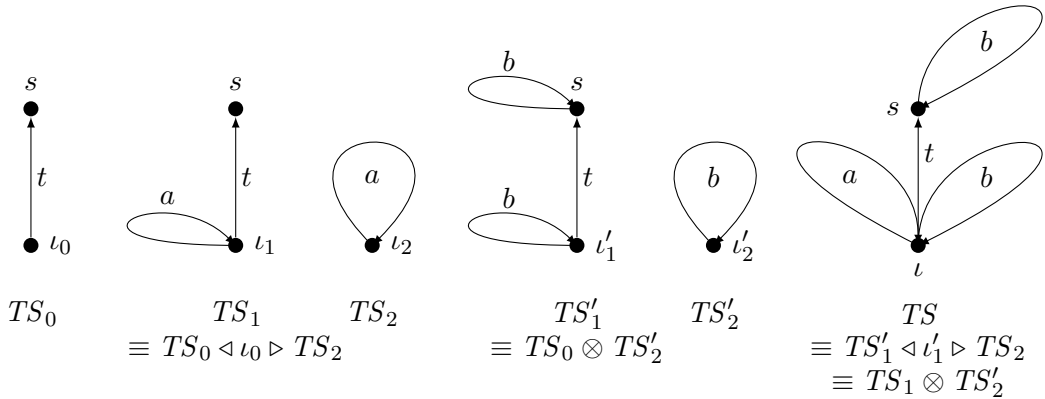
\begin{figure}[!b]
\vspace*{-7mm}
\begin{tikzpicture}[scale=0.9]
\node[circle,fill=black!100,inner sep=0.07cm](i0)at(-4,0)[label=right:$\is_0$]{};
\node[circle,fill=black!100,inner sep=0.07cm](s0)at(-4,2)[label=above:$s$]{};
\draw[-latex](i0)to[]node[right,swap,inner sep=2pt,pos=0.5]{$t$}(s0);
\node[](TS1)at(-4,-1){{$\TS_0$}};
\draw[-latex] (-1,0) .. controls (-4,0) and (-2,1) .. (-1,0);
\node[](T1)at(-2,0.7){$a$};
\node[circle,fill=black!100,inner sep=0.07cm](i1)at(-1,0)[label=right:$\is_1$]{};
\node[circle,fill=black!100,inner sep=0.07cm](s)at(-1,2)[label=above:$s$]{};
\draw[-latex](i1)to[]node[right,swap,inner sep=2pt,pos=0.5]{$t$}(s);
\node[](TS1)at(-1,-1){{$\TS_1$}};
\node[](TS1)at(-1,-1.5){{$\equiv\TS_0\triangleleft \is_0\triangleright\TS_2$}};
\draw[-latex] (1,0) .. controls (-1,2) and (3,2) .. (1,0);
\node[](T2)at(1,1){$a$};
\node[circle,fill=black!100,inner sep=0.07cm](i2)at(1,0)[label=right:$\is_2$]{};
\node[](TS2)at(1,-1){{$\TS_2$}};
\draw[-latex] (4,0) .. controls (1,0) and (3,1) .. (4,0);
\node[](T1)at(3,0.7){$b$};
\draw[-latex] (4,2) .. controls (1,2) and (3,3) .. (4,2);
\node[](T1)at(3,2.7){$b$};
\node[circle,fill=black!100,inner sep=0.07cm](i1)at(4,0)[label=right:$\is'_1$]{};
\node[circle,fill=black!100,inner sep=0.07cm](s)at(4,2)[label=above:$s$]{};
\draw[-latex](i1)to[]node[right,swap,inner sep=2pt,pos=0.5]{$t$}(s);
\node[](TSp1)at(4,-1){{$\TS'_1$}};
\node[](TS1)at(4,-1.5){{$\equiv\TS_0\otimes\TS'_2$}};
\draw[-latex] (6,0) .. controls (4,2) and (8,2) .. (6,0);
\node[](T2)at(6,1){$b$};
\node[circle,fill=black!100,inner sep=0.07cm](i2)at(6,0)[label=right:$\is'_2$]{};
\node[](TSp2)at(6,-1){{$\TS'_2$}};
\draw[-latex] (9,0) .. controls (5,2) and (9,2) .. (9,0);
\node[](T1)at(8,1){$a$};
\draw[-latex] (9,0) .. controls (9,2) and (13,2) .. (9,0);
\node[](T1)at(10,1){$b$};
\draw[-latex] (9,2) .. controls (9,4) and (13,4) .. (9,2);
\node[](T1)at(10,3){$b$};
\node[circle,fill=black!100,inner sep=0.07cm](i')at(9,0)[label=below:$\is$]{};
\node[circle,fill=black!100,inner sep=0.07cm](s')at(9,2)[label=left:$s$]{};
\draw[-latex](i')to[]node[right,swap,inner sep=2pt,pos=0.7]{$t$}(s');
\node[](TS)at(9,-1){{$\TS$}};
\node[](TS1)at(9,-1.5){{$\equiv\TS'_1\triangleleft \is'_1\triangleright\TS_2$}};
\node[](TS1)at(9,-2){{$\equiv\TS_1\otimes\TS'_2$}};
%\node[](TS)at(9,-1){{$\TS_1\otimes\TS_2\equiv_{\{a,b\}}\TS_1\triangleleft \is_1\triangleright\TS2$}};
\end{tikzpicture}\vspace*{-4mm}
\caption{Ambiguous case.} \label{ambig.fig}
\end{figure}

 In terms of synthesis, the gain of decomposing $\TS$ is not very high in this case,
 since there is an easy solution (composed of isolated transitions).

There are special cases, however, where we have a choice between a product and an articulation decomposition
with non-singleton state spaces, but with different partitions of the label set, as illustrated by Figure~\ref{ambig.fig}.
This only occurs however when one of the components  is a singleton system.

\begin{proposition}\label{ambig.prop}{\bf (Equivalent decompositions (second case))}\\
If $\TS\equiv\TS_1\otimes\TS_2\equiv\TS'_1\triangleleft s'\triangleright\TS'_2$,
then $|S_1|=1$ or $|S_2|=1$ as well as $|S'_1|=1$ or $|S'_2|=1$.
\end{proposition}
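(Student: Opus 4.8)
The plan is to prove the two assertions separately: \textbf{(I)} the articulation is state-trivial, $|S'_1|=1$ or $|S'_2|=1$, and \textbf{(II)} the product is state-trivial, $|S_1|=1$ or $|S_2|=1$. Since we consider only non-trivial decompositions, all four label sets $T_1,T_2,T'_1,T'_2$ are non-empty. By Propositions~\ref{dec.prop} and~\ref{articul.prop} the four components are totally reachable; by Definition~\ref{prod.def} the product carries the full state set $S_1\times S_2$, so I identify each state of $\TS$ with the pair $(p,q)\in S_1\times S_2$ it receives from the product. The guiding intuition is that a genuine product is highly connected whereas a genuine articulation has a bottleneck, so the two cannot both be genuine; the remaining, state-trivial, cases are then killed by a self-loop argument.

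For \textbf{(I)} I would assume for contradiction that $|S'_1|\ge 2$ and $|S'_2|\ge 2$, and split on the product. The core case is $|S_1|,|S_2|\ge 2$. Here I pass to the underlying undirected graph of $\TS$: by Definition~\ref{prod.def} it is exactly the Cartesian product $G_1\mathbin{\square}G_2$ of the underlying graphs of the factors, both of which are connected (total reachability) with at least one edge. On the articulation side $\to=\to_1\cup\to_2$ is glued only at $s'$ (Definition~\ref{articul.def}), so there is no edge between $S'_1\setminus\{s'\}$ and $S'_2\setminus\{s'\}$; choosing $u\in S'_1\setminus\{s'\}$ and $v\in S'_2\setminus\{s'\}$ makes $s'$ a cut vertex separating $u$ from $v$. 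But the Cartesian product of two connected graphs with at least two vertices each has \emph{no} cut vertex: writing $s'=(p,q)$ and using $|S_1|,|S_2|\ge 2$ to pick coordinates $e\neq q$ in $S_2$ and $f\neq p$ in $S_1$, one routes any two vertices $\neq s'$ around $s'$ by travelling inside a single row $S_1\times\{e\}$ (second coordinate $\neq q$) and inside columns $\{a\}\times S_2$ whose first coordinate $\neq p$, each segment provably missing $(p,q)$. This contradiction is the heart of the matter, and organising the routing so that every segment avoids $(p,q)$ is the main obstacle.

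The remaining product cases for \textbf{(I)} are state-trivial and dispatched by self-loops. If $|S_1|=1$ then, there being no useless labels, every $\beta\in T_1$ is a loop at the unique state of $\TS_1$, hence a loop at every state $(\is_1,q)$ of $\TS$; in particular it loops at $u$ and at $v$, so that loop lies in $\to_1$ at $u$ (as $u\notin S'_2$) and in $\to_2$ at $v$ (as $v\notin S'_1$), forcing $\beta\in T'_1\cap T'_2=\es$, a contradiction; the case $|S_2|=1$ is symmetric. This proves \textbf{(I)}. Assertion \textbf{(II)} is proved by the mirror argument: assuming $|S_1|,|S_2|\ge 2$, a genuine articulation gives the same cut-vertex contradiction, while if a component is state-trivial, say $|S'_2|=1$, then each $\gamma\in T'_2$ occurs in $\TS$ only as a loop at $s'=(p_0,q_0)$ (Proposition~\ref{art.prop}); but a $\gamma$-loop at $(p_0,q_0)$ comes from a $\gamma$-loop at $q_0$ in $\TS_2$ if $\gamma\in T_2$ (resp.\ at $p_0$ in $\TS_1$ if $\gamma\in T_1$), hence is present at all of the $\ge 2$ reachable states $(p,q_0)$ (resp.\ $(p_0,q)$), contradicting that $\gamma$ loops only at $s'$. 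The case $|S'_1|=1$ forces $s'=\is$ and is identical with $T'_1$ in place of $T'_2$, completing the proof.
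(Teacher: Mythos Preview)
Your proof is correct, but it takes a genuinely different route from the paper's.

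The paper argues directly through the adjacency-set characterisation of articulations (Definition~\ref{articul.def}): assuming $|S_1|,|S_2|>1$, it picks a non-loop edge $(s_1,a_1,s'_1)$ in $\TS_1$ and any edge $(s_2,a_2,s'_2)$ in $\TS_2$, and observes that in the product both $(s_1,s_2)$ and $(s'_1,s_2)$ lie in $\adj(a_1)\cap\adj(a_2)$. Since an articulation requires $|\adj(T'_1)\cap\adj(T'_2)|=1$, this forces $a_1$ and every $a_2\in T_2$ into the same component; the symmetric argument then collapses the whole label set, so no non-trivial articulation exists. Part~(I) is then obtained from Part~(II) by the same self-loop argument you use. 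Your approach instead passes to the underlying undirected graph, notes that an articulation with $|S'_1|,|S'_2|\ge 2$ makes $s'$ a cut vertex, and invokes the (standard, but not proved in the paper) fact that a Cartesian product of two connected graphs on $\ge 2$ vertices is $2$-connected.

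What each buys: the paper's argument is shorter and stays entirely within the adjacency machinery already set up, avoiding the routing lemma you sketch; your argument is more conceptual and identifies the obstruction as a purely topological one (cut vertex versus $2$-connectivity), which is arguably more illuminating and would transfer to any other pair of operators with the same connectivity profiles. Your self-loop arguments for the state-trivial cases are essentially the same as the paper's, phrased a bit more crisply. One minor point: your organisation proves Parts~(I) and~(II) independently, so the cut-vertex case is invoked twice; the paper proves (II) first and reuses it for (I), which is slightly more economical but logically equivalent.
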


\begin{proof}
Let us first assume that $|S_1|>1<|S_2|$.
 In $\TS_1$, there are $s_1\neq s'_1\in S_1$ and $a_1\in T_1$ with $(s_1,a_1,s'_1)\in\to_1$.
For any $a_2\in T_2$, since there is no useless label, we have $(s_2,a_2,s'_2)\in\to_2$ (here we allow $s_2=s'_2$).
In $\TS_1\otimes\TS_2$, we then have  that $(s_1,s_2)[a_1\rangle(s'_1,s_2)$, $(s_1,s'_2)[a_1\rangle(s'_1,s'_2)$
 $(s_1,s_2)[a_2\rangle(s_1,s'_2)$ and $(s'_1,s_2)[a_2\rangle(s'_1,s'_2)$, so that $|\adj(a_1)\cap\adj(a_2)|>1$ and,
 in any articulation decomposition of $\TS$, $a_1$ must belong to the same component as any $a_2\in T_2$.
 Symmetrically, some $a_2$ in $T_2$ must belong to the same component as any $a_1\in T_1$ in any articulation of $\TS$.
 As a consequence, any articulation of $\TS$ must be trivial (all the labels belong to the same component).

\medskip
 Let us now assume that $|S'_1|>1<|S'_2|$. From the previous point, we may not have $|S_1|>1<|S_2|$;
 without loss of generality, we shall assume $|S_1|=1<|S_2|=|S|$.
 Let $a_1\in T_1$ (there must be one, since we assumed $T_1\neq\emptyset$);
 then $a_1$ must occur as a loop around each state of $\TS\equiv\TS_1\otimes\TS_2$.
 Hence $a_1$ is adjacent to any state and we may only have $\TS\equiv\TS'_1\triangleleft s'\triangleright\TS'_2$
 if there is a single component in the articulation, contradicting $|S'_1|>1<|S'_2|$.
\drop{ In $\TS'_2$, we must have $\is'_2\neq s'_2$ with $\is'_2[a'_2\rangle s'_2$ for some $a'_2\in T'_2$.
 In $\TS'_1$, we must have $s'_1\neq s'$ with $s'_1[a'_1\rangle s'$ for some $a'_1\in T'_1$
 (or $s'[a'_1\rangle s'_1$, the situation will be similar, or both).
Moreover,  $s'_1\not\in\adj(a'_2)$ and $s'_2\not\adj(a'_1)$.
 Since, $|S_1|=1$, let us assume that,  in $\TS_1\otimes\TS_2$, $s'$ corresponds to $(\is_1,s_2)$.
 Then,  if $a'_1\in T_1$, $a'_1$ is adjacent to any state, which contradicts $s'_1\not\in\adj(a'_2)$.
If $a'_2\in T_2$}
\end{proof} %%\ENDBEW{ambig.prop}

\begin{corollary}\label{ambig.cor}{\bf (General ambiguous form)}\\
The only cases where we have a choice between a product and an articulation of transition systems have the form
(up to a permutation of the roles of the three components)
$$(\TS_1\otimes\is_{T_2})\triangleleft (s_1,\is)\triangleright\is_{T_3}\equiv
(\TS_1\triangleleft s_1\triangleright\is_{T_3})\otimes\is_{T_2}$$
where $T_2$, $T_3$ and the label set of $\TS_1$,  are pairwise disjoint and $s_1$ is a state of $\TS_1$. \\
It is Petri net solvable iff so is $\TS_1$ and a possible solution is a solution of $\TS_1$ adequate for $s_1$,
plus one transition for each label of $T_3$ connected by a side condition to the places marked by the
marking corresponding to $s_1$ (with weights given by this marking), plus one isolated transition for each label of~$T_2$.
\end{corollary}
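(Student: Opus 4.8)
The plan is to treat the statement as two claims. The displayed formula is first an \emph{identity} valid for any $\TS_1$, any state $s_1$ of it, and any label sets $T_2,T_3$ pairwise disjoint from each other and from the labels of $\TS_1$; I would prove it by a direct computation. Both $\is_{T_2}$ and $\is_{T_3}$ are one-state systems, so neither multiplying by $\is_{T_2}$ nor plugging $\is_{T_3}$ adds any state: both members have state set isomorphic to that of $\TS_1$. On each side the edges are exactly the edges inherited from $\TS_1$, plus (from $\is_{T_2}$, whether it is attached before or after the articulation) a $b$-loop at every state for each $b\in T_2$, plus (from $\is_{T_3}$ plugged at $s_1$, resp.\ at $(s_1,\is)$) a $c$-loop at the single state $s_1$ for each $c\in T_3$. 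Since both operations only ever glue one-state gadgets, their order is immaterial and both members coincide with ``$\TS_1$ with $T_2$-loops at every state and $T_3$-loops at $s_1$''; this yields the isomorphism.

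It then remains to show this is the \emph{only} shape. Assume $\TS\equiv\TS_1'\otimes\TS_2'\equiv\TS_1''\triangleleft s'\triangleright\TS_2''$ with both decompositions non-trivial. By Proposition~\ref{ambig.prop} one product factor and one articulation factor are singletons. Using commutativity of the product (Proposition~\ref{dec.prop}) I normalise the singleton product factor to $\is_{T_2}$, so by Definition~\ref{prod.def} every $b\in T_2$ is a loop at every state of $\TS$; if the singleton articulation factor is the first one I move it to the second position by the commutativity of articulations (Proposition~\ref{com.prop}), which applies since a one-state factor is articulated at its own initial state, so I may take $\TS_2''=\is_{T_3}$ and, by Definition~\ref{articul.def}, every $c\in T_3$ is a loop at $s'$ and nowhere else. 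I would then set $\TS_1$ to be the restriction of $\TS$ to $T\setminus(T_2\cup T_3)$ and $s_1:=s'$: since the $T_2$- and $T_3$-labels are all loops, erasing them from a path never changes its endpoint, so $\TS_1$ is totally reachable, $s_1$ is one of its states, and $\TS$ is exactly ``$\TS_1$ with $T_2$-loops everywhere and $T_3$-loops at $s_1$'', i.e.\ the form of the first stage. The main obstacle is precisely this label trichotomy: one must argue that the everywhere-loops forced by the singleton product factor and the single-state loops forced by the singleton articulation factor can neither coincide nor touch the core; this uses $|S|>1$ (the degenerate case $|S|=1$ being Proposition~\ref{single.prop}, which already fits the form with a one-state $\TS_1$).

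Finally, for the synthesis claim I would work on the member $(\TS_1\triangleleft s_1\triangleright\is_{T_3})\otimes\is_{T_2}$. By Proposition~\ref{sum.prop} it is solvable iff both factors are, and a solution is their disjoint sum; $\is_{T_2}$ is solved by one isolated (place-free) transition per label of $T_2$. For the other factor, Proposition~\ref{constr.prop} gives that $\TS_1\triangleleft s_1\triangleright\is_{T_3}$ is solvable iff $\TS_1$ and $\is_{T_3}$ are, with solution $\PN_1\triangleleft M_{s_1}\triangleright\PN_2$, where $\PN_1$ solves $\TS_1$ adequately at $s_1$ (such a solution exists as soon as $\TS_1$ is solvable, by Proposition~\ref{adeq.prop}) and $\PN_2$ solves $\is_{T_3}$ by one place-free transition per label of $T_3$. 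Since $\PN_2$ has no place, Definition~\ref{articulPN.def} produces side conditions of the first kind only: each transition $c\in T_3$, all initially enabled, is linked by a side condition of weight $M_{s_1}(p)$ to every place $p$ with $M_{s_1}(p)>0$. As $\is_{T_2}$ and $\is_{T_3}$ are always solvable, the whole system is solvable iff $\TS_1$ is, and the solution obtained is $\PN_1$ augmented by these $T_3$-transitions together with the isolated $T_2$-transitions, exactly as stated.
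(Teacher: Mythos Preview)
Your argument is correct and follows the line the paper intends: the paper states this corollary without proof, treating it as an immediate consequence of Propositions~\ref{single.prop} and~\ref{ambig.prop}, and your write-up fills in exactly the details one would expect (the direct computation of the identity, the normalisation of the singleton factors via commutativity, the restriction to $T\setminus(T_2\cup T_3)$ using that loop-only labels do not affect reachability, and the unpacking of Definition~\ref{articulPN.def} for a place-free $\PN_2$). The only point worth tightening is the ``label trichotomy'' remark: the disjointness of $T_2$ and $T_3$ from the core holds by construction, so the sole non-trivial step is $T_2\cap T_3=\emptyset$, which you correctly derive from $|S|>1$ (an everywhere-loop cannot also be a single-state-only loop).
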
% \ENDKOR{ambig.cor}

When we have a choice between a factorisation and an articulation,
the former should probably be preferred since a synthesis may then be put in the form of a sum of nets.
But anyway, since at least one of the components of the product and at least one of the components of the articulation
have only one state, the gain for the synthesis is low.
Let us recall however that another interest of these decompositions is to exhibit an internal structure
for the examined transition system.

For components with at least two states (hence not only composed of loops), there is no ambiguity in the decomposition,
but we may alternate products and articulations, using the procedures  detailed at the end of Sections~\ref{prod.sct} and \ref{art.sct}. At the end, we shall obtain non decomposable components, but also possibly a choice between a product and an articulation if we get singleton components as in the examples in Figures~\ref{single.fig} and \ref{ambig.fig}.

\section{Experiments}\label{experiment.sct}

Let us now examine more closely the gain in efficiency the decompositions explored in this paper afford for the Petri net synthesis.
Several remarks may be done before we start true experiments.

\medskip
First, it is better to distinguish positive cases (where there is a solution) from negative ones.
Indeed, in a negative case, we may stop when a necessary condition is not satisfied or
when a SSP or ESSP problem may not be solved. In each case, the performance highly relies on the order in which
sub- problems are considered.
And even if we pursue in order to find all the obstacles, it may happen that finding that a problem has no solution
has not the same complexity than finding a solution when there is one.
Hence, in the following, we shall only consider positive examples.
In the same spirit we shall not try to find an optimal solution
(for instance with a minimal number of places, or with small weights, etc.).

Next, there is $|S|\cdot(|S|-1)/2$ SSP problems and
$|S|\cdot|T|-|\to|$ ($=\sum_{s\in S}(|T|-\mbox{outdegree}(s)$) ESSP problems
(where the outdegree of a state is the number of arcs originated from it), possibly after some pre-synthesis checks.

The pre-synthesis checks allow to quickly reject inadequate transition systems,
without needing expensive solutions to SSP and ESSP problems. This includes the checks for total reachability and determinism,
but also other ones if we search for solutions in a specific subclass of net systems.
For instance, if we search for choice-free solutions (i.e., nets where each place has at most one output transition),
many necessary structural conditions have been developed for a pre-synthesis phase~\cite{BDS-acta17,BestDE20}.
For positive cases, the pre-synthesis phase is usually negligible
with respect to the time and memory requested by the solution of ESSP and SSP problems,
but may provide interesting informations and data-structures for solving those separation problems.

In large reachability graphs, the number of states is usually much larger than the number of transitions.
For instance, if we increase the number of initial tokens, the size of the graph may increase in a dramatic manner,
while the number of transitions does not change. This leads to consider complexities in terms of the number of states $|S|$
for the positive synthesis of large transition systems, and to neglect the impact of the number of labels $|T|$.

To solve a separation problem, we have to find an adequate region $(\rho,\bk,\fw)$. The number of unknowns is $|S|+2\cdot|T|$
and for each arc $s[t\rangle s'\in\to$ we have two linear constraints: $\rho(s)\geq\bk(t)$ and $\rho(s')=\rho(s)+\fw(a)-\bk(a)$.
For a $\text{SSP}(s,s')$, we have to add the constraint $\rho(s)\neq\rho(s')$,
and for an $\text{ESSP}(s,t)$ we have to add the constraint $\rho(s)<\bk(a)$.
Hence, we have to solve systems of homogeneous linear constraints of the same size, and
the complexity $CS$ of all the separation problems is about the same.
This leads to a positive synthesis complexity of the kind $A\cdot|S|^2\cdot CS+E\cdot|S|\cdot CS$
for some coefficients $A$ and $B$.

However, this is not correct in general.
First, for the synthesis of some subclasses of Petri nets, it is known that SSPs are irrelevant:
the regions (or places) built to solve the ESSPs also solve all the SSPs.
This is the case for the choice-free nets~\cite{BDS-acta17}, hence also for all their subclasses.
But even for general Petri net syntheses, this leads to the idea to first solve the ESSP problems and, for each SSP problem,
check first if there is no region built previously which already solves it,
which is very quick since there are generally few needed regions and the check is easy.
This has been used for instance in the tool APT~\cite{EB-US-15}, and the experience shows that most of the time,
no new system of linear constraints has to be solved for SSP problems, and otherwise only a very small number of them is needed.
The same is true for ESSP problems: instead of systematically building and solving the corresponding system of linear constraints,
we may first check if one of the regions built previously does not already solve the new separation problem
(the efficiency of the procedure then relies on the order in which ESSP problems are considered).
This may be interpreted in the following way: the global complexity is of the kind $(A\cdot|S|^2+E\cdot|S|)\cdot CS$;
in many cases $A=0$; if this is not true, it may happen that $A\cdot|S|^2$ does not grow faster than $E\cdot|S|$;
and if it does, most of the time $A$ is so small that the dominance of $A\cdot|S|^2$ only occurs for synthesis problems that are so huge that, anyway, the problem is out of practical reach, due to memory and/or execution time overflow problems.
Note that $E\cdot|S|$ also may grow much less than linearly.
This is difficult to evaluate theoretically beforehand and experiments may be useful for that.

For the synthesis of weighted Petri nets, each system of linear constraints to be solved is homogeneous\footnote{
i.e., there is no independent terms; in that case, finding a solution in the integer domain is equivalent to finding one
in the rational domain, since it is always possible to multiply a rational solution by an adequate factor to get an integer one.}
and we  may use the Karmarkar algorithm~\cite{karmarkar} for that,
which has a known worse case complexity  polynomial in the size of the system (here linear in $|S|$).
The exact polynomial exponent is difficult to determine, but it is usually expected to be around 5, hence the degree 7
 mentioned in the introduction section, to solve a quadratic number of SSP problems,
 but we know now that we should expect a much smaller number of separation problems
 to be solved in the form of a system of linear constraints of linear size.
 Moreover, it is known that, in practice, the Karmarkar algorithm is outperformed by the simplex algorithm~\cite{dantzig51},
 which however is exponential in principle~\cite{klee72}.
 The same arises with classical SMT solvers (APT uses SMTInterpol~\cite{smtinterpol,kroen10}).
 This seems to be due to the fact that the families of problems leading to an exponential growth are too artificial
 and are not met in true applications.
 So again, experiments would be welcome to clarify the complexity of tools like APT.

\medskip
 If we now add decomposition algorithms, like the two ones described in this paper, in the pre-synthesis phase,
 we may expect some more improvements.
The decomposition algorithms are very quick with respect to the proper synthesis (see for instance~\cite{fact18})
so that,  even if no non-trivial components are detected, the loss due to the addition of new work in the preliminary phase
is negligible. If non-trivial components are found, a post-processing phase must be added to recompose the various synthesised
Petri net systems, but this is again very quick and negligible with respect to the various intermediate syntheses.
The improvement will be more noticeable if the components are approximatively of the same size.

\begin{proposition}\label{impr.prop}{\bf (Improvement due to decomposition techniques)}\\
Let us assume that, for a family of transition systems with increasing size $|S|$,
the complexity of a positive synthesis is approximately proportional to $|S|^h$ for some $h\in\mathbb{R}_{>0}$.

\medskip
If the transition systems may be factorised in $k$ factors of approximately the same size, the gain is about
$$\frac{|S|^{h(1-1/k)}}{k}$$
If the transition systems may be articulated in $k$ components of approximately the same size, the gain is about
$$k^{h-1}$$
\end{proposition}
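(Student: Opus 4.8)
The plan is to read the \emph{gain} as the ratio between the cost of synthesising $\TS$ directly and the total cost of synthesising its components separately; the decomposition and recombination phases are negligible (as argued just above), so they may be dropped. Writing the positive-synthesis cost of an \lts{} whose state set has size $n$ as $A\cdot n^{h}$, with a constant $A$ and exponent $h$ common to the whole family (the components being synthesised by the same technique, for the same subclass of nets), the factor $A$ will cancel in every ratio. It therefore suffices, in each case, to (i)~express the component sizes in terms of $|S|$ from the relevant definition, (ii)~sum the component costs, and (iii)~divide.

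For the product case I would invoke Definition~\ref{prod.def}: the states of $\TS_1\otimes\TS_2$ are \emph{pairs} of states, and inductively the states of a product of $k$ factors are $k$-tuples, so the sizes \emph{multiply}, $|S|=\prod_{i=1}^{k}|S_i|$. The equal-size hypothesis then gives $|S_i|\approx|S|^{1/k}$ for each factor, so the decomposed cost is $\sum_{i=1}^{k}A\,|S_i|^{h}\approx A\,k\,|S|^{h/k}$, and dividing the direct cost $A\,|S|^{h}$ by it yields
$$\frac{A\,|S|^{h}}{A\,k\,|S|^{h/k}}=\frac{|S|^{h(1-1/k)}}{k}.$$
For the articulation case I would instead use Definition~\ref{articul.def} (and Proposition~\ref{articul.prop}): the state set of an articulation is the union of the components' state sets, glued only at the articulation points, so the sizes \emph{add}, $|S|\approx\sum_{i=1}^{k}|S_i|$, giving $|S_i|\approx|S|/k$. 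The decomposed cost is then $\sum_{i=1}^{k}A\,|S_i|^{h}\approx A\,k\,(|S|/k)^{h}=A\,k^{1-h}\,|S|^{h}$, and dividing the direct cost by it gives
$$\frac{A\,|S|^{h}}{A\,k^{1-h}\,|S|^{h}}=k^{h-1}.$$

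I expect the only delicate point to be the additive size estimate in the articulation case. Because the components share states at the articulation points, the exact relation is $|S|=\bigl(\sum_{i=1}^{k}|S_i|\bigr)-r$, where $r$ counts the shared states ($k-1$ gluings for a tree of articulations, as produced by the $\art$ procedure). Since $k$ is fixed while $|S|$ grows, this correction $r=O(k)$ is negligible against $|S|$ and does not affect the leading-order estimate $|S_i|\approx|S|/k$; the same asymptotic reading underlies treating each factor as of size $|S|^{1/k}$ in the product case. All of these are order-of-magnitude statements, consistent with the ``approximately'' hypotheses in the statement, and the sanity check $k=1$ returns gain $1$ in both formulas.
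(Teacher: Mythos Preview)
Your proof is correct and follows essentially the same approach as the paper's own proof: interpret the gain as the ratio of the direct synthesis cost $|S|^h$ to the total cost of synthesising the $k$ components, using $|S_i|\approx|S|^{1/k}$ in the product case and $|S_i|\approx|S|/k$ in the articulation case. Your version is simply more detailed, explicitly justifying the component sizes from Definitions~\ref{prod.def} and~\ref{articul.def} and discussing the $O(k)$ correction from shared articulation states, which the paper silently absorbs into ``approximately''.
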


\begin{proof}
In the first case, each factor has an approximate size of $\sqrt[k]|S|$, and there are $k$ of them, so that the gain is
$$\frac{|S|^h}{k\cdot|S|^{h/k}}$$
which leads to the first formula.

\medskip
In the second case, each factor has an approximate size of $|S|/k$, and there are $k$ of them, so that the gain is
$$\frac{|S|^h}{k\cdot(|S|/k)^{h}}$$
which leads to the second formula.
\end{proof}  %%\ENDBEW{impr.prop}

Hence, we shall consider families of examples with an increasing number of copies of the same transition system as components.
In a product of $n$ copies of systems with $|S|$ states, the state space grows (exponentially) as $|S|^n$
(the number of labels grows linearly: $|T|\cdot n$). In an articulation of $n$ copies of systems with $|S|$ states,
the state space grows (linearly) as $|S|\cdot n-n+1$ (the number of labels still grows linearly).
Contrary to the product cases where there is no variant possible in the way the components are combined,
this is not true for the articulations, where we can chose various ways of plugging a new component on the previous member
of the family, and it could happen that the performance of the chosen tool (here APT) behaves differently on the different families.
We shall here consider three ways of performing the plugging: the star-shape, the daisy-flower shape and the caterpillar shape,
schematised in Figure~\ref{families.fig} with 4 components each.
The first one articulates all the components around the initial state.
The second one articulates the components around different states of the first component.
The last one plugs each component but the first one on a non-initial state of the previous component.
In each case the time used by the synthesis of the $n$ components is linear and takes $n$ times the time to synthesise one of them.

\begin{figure}[hbt]
\vspace*{-6mm}
\hspace*{-1cm}\begin{tikzpicture}[scale=0.9]
\draw[] (6,0) .. controls (2,2) and (6,2) .. (6,0);
\node[](T1)at(5,1){$T_2$};
\draw[] (6,0) .. controls (6,2) and (10,2) .. (6,0);
\node[](T1)at(7,1){$T_1$};
\draw[] (6,0) .. controls (2,-2) and (6,-2) .. (6,0);
\node[](T1)at(5,-1){$T_3$};
\draw[] (6,0) .. controls (6,-2) and (10,-2) .. (6,0);
\node[](T1)at(7,-1){$T_4$};
\node[circle,fill=black!100,inner sep=0.07cm](i)at(6,0)[label=right:$\is$]{};
\node[](TS)at(6,-2){{star-shape}};
\end{tikzpicture}\hspace{-1cm}
\begin{tikzpicture}[scale=0.9]
\node[circle,fill=black!100,inner sep=0.07cm](i)at(6,0)[label=right:$\is$]{};
\draw[] (6,0) .. controls (4,1.5) and (8,1.5) .. (6,0);
\node[](T1)at(6,0.5){$T_1$};
\draw[] (6,0) .. controls (4,-2) and (8,-2) .. (6,0);
\node[](T1)at(6,-1){$T_2$};
\node[circle,fill=black!100,inner sep=0.07cm](i)at(5.5,0.5)[]{};
\draw[] (5.5,0.5) .. controls (3.5,2) and (3.5,-1) .. (5.5,0.5);
\node[](T1)at(4.5,0.5){$T_3$};
\draw[] (6.5,0.5) .. controls (8.5,2) and (8.5,-1) .. (6.5,0.5);
\node[](T1)at(7.5,0.5){$T_4$};
\node[circle,fill=black!100,inner sep=0.07cm](i)at(6.5,0.5)[]{};
\node[](TS)at(6,-2){{daisy-shape}};
\end{tikzpicture}\hspace*{-1cm}
\begin{tikzpicture}[scale=0.9]
\node[circle,fill=black!100,inner sep=0.07cm](i)at(6,0)[label=right:$\is$]{};
\draw[] (6,0) .. controls (4,1.5) and (8,1.5) .. (6,0);
\node[](T1)at(6,0.5){$T_1$};
\draw[] (6.5,0.5) .. controls (8.5,2.3) and (8.5,-1.3) .. (6.5,0.5);
\node[](T1)at(7.5,0.5){$T_2$};
\node[circle,fill=black!100,inner sep=0.07cm](i)at(6.5,0.5)[]{};
\draw[] (8,0.5) .. controls (10,2.3) and (10,-1.3) .. (8,0.5);
\node[](T1)at(9,0.5){$T_3$};
\node[circle,fill=black!100,inner sep=0.07cm](i)at(8,0.5)[]{};
\draw[] (9.5,0.5) .. controls (11.5,2.3) and (11.5,-1.3) .. (9.5,0.5);
\node[](T1)at(10.5,0.5){$T_4$};
\node[circle,fill=black!100,inner sep=0.07cm](i)at(9.5,0.5)[]{};
\node[](TS)at(8,-2){{caterpilar-shape}};
\end{tikzpicture}
\caption{Three families of articulations.} \label{families.fig}
\end{figure}
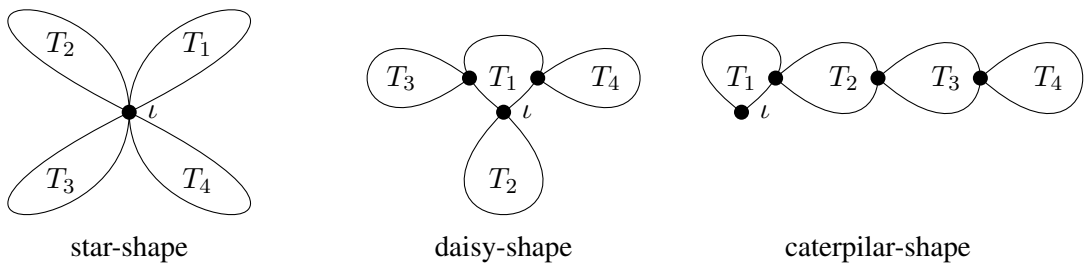

Note however that it may happen that the performances are better than so expected.
Indeed, when decomposing a complex transition system, it may happen that some (or all) components belong to a special case,
to which the original system does not belong and for which a special implementation may speed up the synthesis.
For instance, when decomposing a system into factors, if we get systems with a  PN-solution which is a connected marked graph
(i.e., each place has one input and one output transition, with weight 1),
we may use the very effective synthesis described in~\cite{besdev-lata}; the original system then also has a  marked graph solution
(an addition of marked graphs is a marked graph), but it is not connected and the speed up may not be applied.
When decomposing a system with articulations, it may happen that some (or all) components have a choice-free solution,
for which we know how to reduce the number and size \cite{BDS-acta17} of ESSP problems to be solved
(the SSP problems are then irrelevant). We shall not create such situations in our experiments however.

\smallskip
For the articulations, we shall use the system $\TS~\ref{ts21}$ in Figure~\ref{aug21-synet-aut.fig},
which has 23 states and no choice-free solution.
We compared the obtained CPU time\footnote{The machine we used is based on processors Intel Xeon Gold 2.6 GHz} for the three families mentioned above, up to 100 components.
The results are schematised in $\,$Figure~\ref{cpuarticul.fig}, where$\,$ round dots are used for the$\,$ star-shape

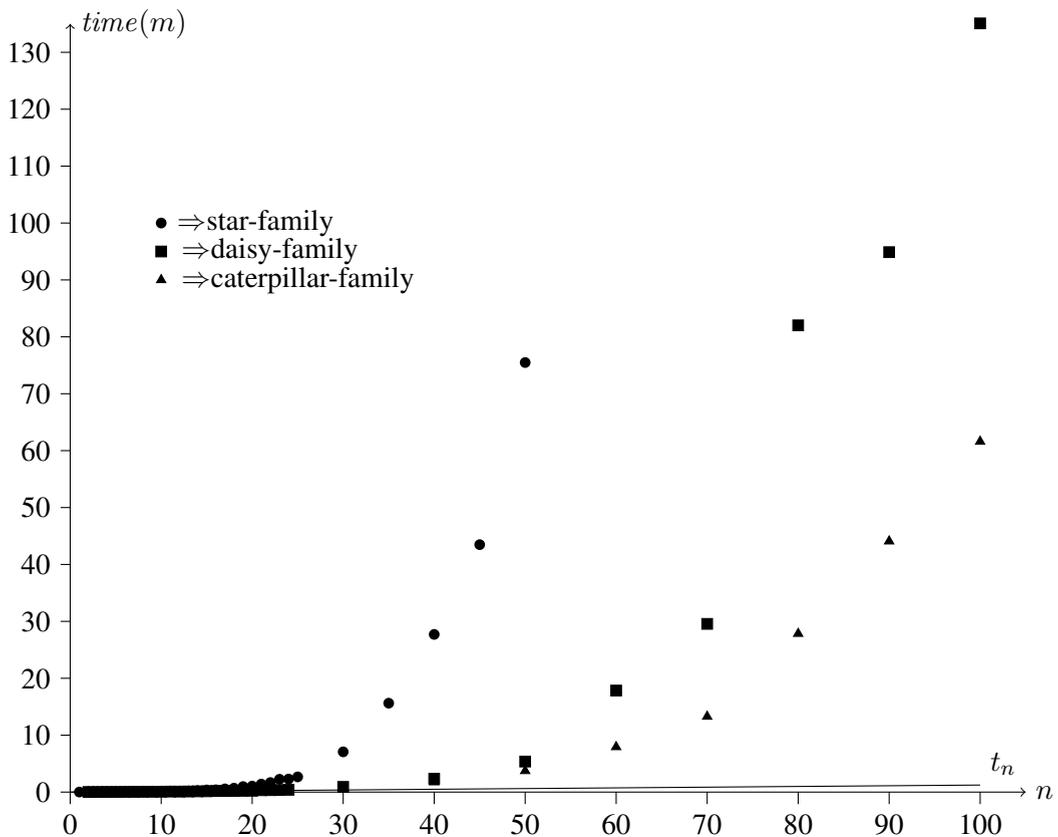
\begin{figure}[!h]
\centering
\begin{tikzpicture}[yscale=0.075, xscale=0.12]
        \draw[->] (0,0) -- (105,0) node[right] {$n$};
        \draw[->] (0,0) -- (0,135) node[right] {$\mathit{time}(m)$};
        \draw[-] (0,0) -- (100,74.3/60) node[above right] {$t_n$};
   	\foreach \x in {0,10,...,100}
     		\draw (\x,1pt) -- (\x,-60pt)
			node[anchor=north] {\x};
    	\foreach \y in {0,10,...,135}
     		\draw (1pt,\y) -- (-30pt,\y)
     			node[anchor=east] {\y};
\node[circle,fill=black!100,inner sep=0.05cm](sc) at (10,100)[label=right:$\impl$star-family]{};
\node(dc) at (10,95){\pgfuseplotmark{square*}};\node at (10,95)[label=right:$\impl$daisy-family]{};
\node(cc) at (10,90){\pgfuseplotmark{triangle*}};\node at (10,90)[label=right:$\impl$caterpillar-family]{};
\node[circle,fill=black!100,inner sep=0.05cm](sc1) at (1,0.743/60)[]{};
\node[circle,fill=black!100,inner sep=0.05cm](sc2) at (2,1.131/60)[]{};
\node[circle,fill=black!100,inner sep=0.05cm](sc3) at (3,1.378/60)[]{};
\node[circle,fill=black!100,inner sep=0.05cm](sc4) at (4,1.752/60)[]{};
\node[circle,fill=black!100,inner sep=0.05cm](sc5) at (5,2.141/60)[]{};
\node[circle,fill=black!100,inner sep=0.05cm](sc6) at (6,2.442/60)[]{};
\node[circle,fill=black!100,inner sep=0.05cm](sc7) at (7,2.973/60)[]{};
\node[circle,fill=black!100,inner sep=0.05cm](sc8) at (8,3.817/60)[]{};
\node[circle,fill=black!100,inner sep=0.05cm](sc9) at (9,5.167/60)[]{};
\node[circle,fill=black!100,inner sep=0.05cm](sc10) at (10,6.335/60)[]{};
\node[circle,fill=black!100,inner sep=0.05cm](sc11) at (11,8.171/60)[]{};
\node[circle,fill=black!100,inner sep=0.05cm](sc12) at (12,9.512/60)[]{};
\node[circle,fill=black!100,inner sep=0.05cm](sc13) at (13,11.235/60)[]{};
\node[circle,fill=black!100,inner sep=0.05cm](sc14) at (14,15.121/60)[]{};
\node[circle,fill=black!100,inner sep=0.05cm](sc15) at (15,22.525/60)[]{};
\node[circle,fill=black!100,inner sep=0.05cm](sc16) at (16,24.124/60)[]{};
\node[circle,fill=black!100,inner sep=0.05cm](sc17) at (17,33.639/60)[]{};
\node[circle,fill=black!100,inner sep=0.05cm](sc18) at (18,41.312/60)[]{};
\node[circle,fill=black!100,inner sep=0.05cm](sc19) at (19,58.068/60)[]{};
\node[circle,fill=black!100,inner sep=0.05cm](sc20) at (20,63.325/60)[]{};
\node[circle,fill=black!100,inner sep=0.05cm](sc21) at (21,84.225/60)[]{};
\node[circle,fill=black!100,inner sep=0.05cm](sc22) at (22,100.641/60)[]{};
\node[circle,fill=black!100,inner sep=0.05cm](sc23) at (23,135.198/60)[]{};
\node[circle,fill=black!100,inner sep=0.05cm](sc24) at (24,137.478/60)[]{};
\node[circle,fill=black!100,inner sep=0.05cm](sc25) at (25,160.194/60)[]{};
\node[circle,fill=black!100,inner sep=0.05cm](sc30) at (30,424.388/60)[]{};
\node[circle,fill=black!100,inner sep=0.05cm](sc35) at (35,937.379/60)[]{};
\node[circle,fill=black!100,inner sep=0.05cm](sc40) at (40,1662.57/60)[]{};
\node[circle,fill=black!100,inner sep=0.05cm](sc45) at (45,2608.97/60)[]{};
\node[circle,fill=black!100,inner sep=0.05cm](sc50) at (50,4529.12/60)[]{};
\node(dc2) at (2,1.185/60){\pgfuseplotmark{square*}};
\node(dc3) at (3,1.412/60){\pgfuseplotmark{square*}};
\node(dc4) at (4,1.699/60){\pgfuseplotmark{square*}};
\node(dc5) at (5,1.974/60){\pgfuseplotmark{square*}};
\node(dc6) at (6,2.327/60){\pgfuseplotmark{square*}};
\node(dc7) at (7,2.609/60){\pgfuseplotmark{square*}};
\node(dc8) at (8,3.004/60){\pgfuseplotmark{square*}};
\node(dc9) at (9,3.624/60){\pgfuseplotmark{square*}};
\node(dc10) at (10,3.94/60){\pgfuseplotmark{square*}};
\node(dc11) at (11,5.362/60){\pgfuseplotmark{square*}};
\node(dc12) at (12,5.193/60){\pgfuseplotmark{square*}};
\node(dc13) at (13,6.126/60){\pgfuseplotmark{square*}};
\node(dc14) at (14,7.82/60){\pgfuseplotmark{square*}};
\node(dc15) at (15,8.329/60){\pgfuseplotmark{square*}};
\node(dc16) at (16,11.611/60){\pgfuseplotmark{square*}};
\node(dc17) at (17,11.067/60){\pgfuseplotmark{square*}};
\node(dc18) at (18,12.242/60){\pgfuseplotmark{square*}};
\node(dc19) at (19,13.37/60){\pgfuseplotmark{square*}};
\node(dc20) at (20,14.161/60){\pgfuseplotmark{square*}};
\node(dc21) at (21,19.702/60){\pgfuseplotmark{square*}};
\node(dc22) at (22,20.651/60){\pgfuseplotmark{square*}};
\node(dc23) at (23,21.907/60){\pgfuseplotmark{square*}};
\node(dc24) at (24,24.139/60){\pgfuseplotmark{square*}};
\node(dc30) at (30,54.669/60){\pgfuseplotmark{square*}};
\node(dc40) at (40,140.542/60){\pgfuseplotmark{square*}};
\node(dc50) at (50,322.187/60){\pgfuseplotmark{square*}};
\node(dc60) at (60,1070.43/60){\pgfuseplotmark{square*}};
\node(dc70) at (70,1773.54/60){\pgfuseplotmark{square*}};
\node(dc80) at (80,4921.07/60){\pgfuseplotmark{square*}};
\node(dc90) at (90,5693.0/60){\pgfuseplotmark{square*}};
\node(dc100) at (100,8105.07/60){\pgfuseplotmark{square*}};
\node(cc2) at (2,1.23/60){\pgfuseplotmark{triangle*}};
\node(cc3) at (3,1.47/60){\pgfuseplotmark{triangle*}};
\node(cc4) at (4,1.738/60){\pgfuseplotmark{triangle*}};
\node(cc5) at (5,2.14/60){\pgfuseplotmark{triangle*}};
\node(cc6) at (6,2.346/60){\pgfuseplotmark{triangle*}};
\node(cc7) at (7,2.849/60){\pgfuseplotmark{triangle*}};
\node(cc8) at (8,3.141/60){\pgfuseplotmark{triangle*}};
\node(cc9) at (9,3.648/60){\pgfuseplotmark{triangle*}};
\node(cc10) at (10,4.07/60){\pgfuseplotmark{triangle*}};
\node(cc11) at (11,4.593/60){\pgfuseplotmark{triangle*}};
\node(cc12) at (12,5.587/60){\pgfuseplotmark{triangle*}};
\node(cc13) at (13,6.173/60){\pgfuseplotmark{triangle*}};
\node(cc14) at (14,7.112/60){\pgfuseplotmark{triangle*}};
\node(cc15) at (15,8.481/60){\pgfuseplotmark{triangle*}};
\node(cc16) at (16,8.425/60){\pgfuseplotmark{triangle*}};
\node(cc17) at (17,12.331/60){\pgfuseplotmark{triangle*}};
\node(cc18) at (18,13.172/60){\pgfuseplotmark{triangle*}};
\node(cc19) at (19,14.512/60){\pgfuseplotmark{triangle*}};
\node(cc20) at (20,14.915/60){\pgfuseplotmark{triangle*}};
\node(cc21) at (21,15.4/60){\pgfuseplotmark{triangle*}};
\node(cc22) at (22,17.847/60){\pgfuseplotmark{triangle*}};
\node(cc23) at (23,21.354/60){\pgfuseplotmark{triangle*}};
\node(cc24) at (24,23.712/60){\pgfuseplotmark{triangle*}};
\node(cc30) at (30,42.83/60){\pgfuseplotmark{triangle*}};
\node(cc40) at (40,104.94/60){\pgfuseplotmark{triangle*}};
\node(cc50) at (50,223.627/60){\pgfuseplotmark{triangle*}};
\node(cc60) at (60,475.838/60){\pgfuseplotmark{triangle*}};
\node(cc70) at (70,796.481/60){\pgfuseplotmark{triangle*}};
\node(cc80) at (80,1670.31/60){\pgfuseplotmark{triangle*}};
\node(cc90) at (90,2644.89/60){\pgfuseplotmark{triangle*}};
\node(cc100) at (100,3696.06/60){\pgfuseplotmark{triangle*}};
\end{tikzpicture}\vspace*{-2mm}
\caption{CPUtime for articulated families.}
\label{cpuarticul.fig}\vspace*{-2mm}
\end{figure}

\eject
\noindent family, square dots for the daisy-shape family, and triangle dots for the caterpillar-shape family.
It may be observed that the three families behave differently,
while the systems have the same size for the same number of components.
In particular, the star-shape family takes much more time, and beyond 50 components, APT crashed for memory exhaustion.
From 50 components, the daisy-shape family takes more time than the caterpillar-shape one.
Anyway, from $n=30$ components, the performance is worse than synthesising $n$ separate components
(and recombining them, indicated by the line $t_n$; note however that if we use a computer system with several
CPUs and/or cores, it is possible to launch several individual component syntheses in parallel, reducing seriously that indicator).
Since the plot is crushed for the small numbers of components, we present in Figure~\ref{zoom-cpu-articul.fig}
a zoom on the first results (here the time is given in seconds, instead of minutes).
Curiously, up to 10 components (221 states) it is more effective to synthesise the articulated system than to
desarticulate it and solve the components separately; and for the daisy and caterpillar families,
this is still true up to 20 components (441 states).

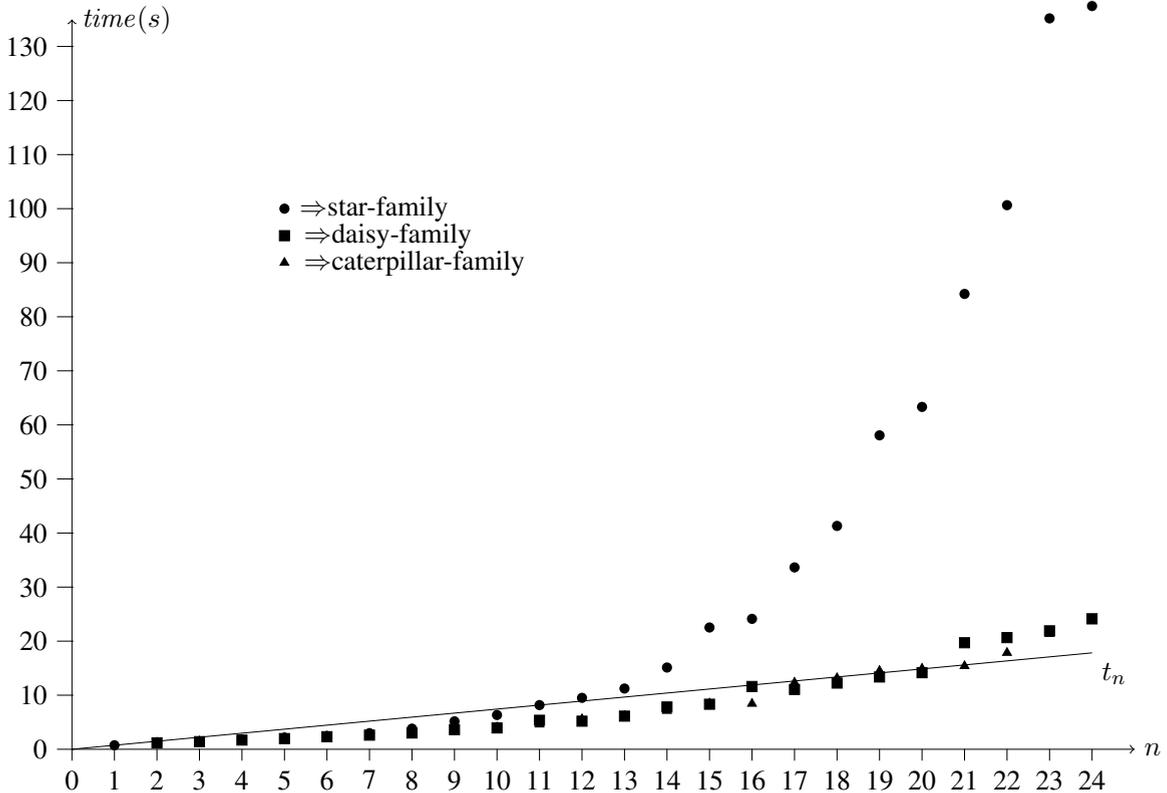
\begin{figure}[ht]
\centering
\scalebox{0.95}{
\begin{tikzpicture}[yscale=0.075, xscale=0.59]
        \draw[->] (0,0) -- (25,0) node[right] {$n$};
        \draw[->] (0,0) -- (0,135) node[right] {$\mathit{time}(s)$};
        \draw[-] (0,0) -- (24,24*0.743) node[below right] {$t_n$};
   	\foreach \x in {0,1,...,24}
     		\draw (\x,1pt) -- (\x,-60pt)
			node[anchor=north] {\x};
    	\foreach \y in {0,10,...,135}
     		\draw (1pt,\y) -- (-10pt,\y)
     			node[anchor=east] {\y};
\node[circle,fill=black!100,inner sep=0.05cm](sc) at (5,100)[label=right:$\impl$star-family]{};
\node(dc) at (5,95){\pgfuseplotmark{square*}};\node at (5,95)[label=right:$\impl$daisy-family]{};
\node(cc) at (5,90){\pgfuseplotmark{triangle*}};\node at (5,90)[label=right:$\impl$caterpillar-family]{};
\node[circle,fill=black!100,inner sep=0.05cm](sc1) at (1,0.743)[]{};
\node[circle,fill=black!100,inner sep=0.05cm](sc2) at (2,1.131)[]{};
\node[circle,fill=black!100,inner sep=0.05cm](sc3) at (3,1.378)[]{};
\node[circle,fill=black!100,inner sep=0.05cm](sc4) at (4,1.752)[]{};
\node[circle,fill=black!100,inner sep=0.05cm](sc5) at (5,2.141)[]{};
\node[circle,fill=black!100,inner sep=0.05cm](sc6) at (6,2.442)[]{};
\node[circle,fill=black!100,inner sep=0.05cm](sc7) at (7,2.973)[]{};
\node[circle,fill=black!100,inner sep=0.05cm](sc8) at (8,3.817)[]{};
\node[circle,fill=black!100,inner sep=0.05cm](sc9) at (9,5.167)[]{};
\node[circle,fill=black!100,inner sep=0.05cm](sc10) at (10,6.335)[]{};
\node[circle,fill=black!100,inner sep=0.05cm](sc11) at (11,8.171)[]{};
\node[circle,fill=black!100,inner sep=0.05cm](sc12) at (12,9.512)[]{};
\node[circle,fill=black!100,inner sep=0.05cm](sc13) at (13,11.235)[]{};
\node[circle,fill=black!100,inner sep=0.05cm](sc14) at (14,15.121)[]{};
\node[circle,fill=black!100,inner sep=0.05cm](sc15) at (15,22.525)[]{};
\node[circle,fill=black!100,inner sep=0.05cm](sc16) at (16,24.124)[]{};
\node[circle,fill=black!100,inner sep=0.05cm](sc17) at (17,33.639)[]{};
\node[circle,fill=black!100,inner sep=0.05cm](sc18) at (18,41.312)[]{};
\node[circle,fill=black!100,inner sep=0.05cm](sc19) at (19,58.068)[]{};
\node[circle,fill=black!100,inner sep=0.05cm](sc20) at (20,63.325)[]{};
\node[circle,fill=black!100,inner sep=0.05cm](sc21) at (21,84.225)[]{};
\node[circle,fill=black!100,inner sep=0.05cm](sc22) at (22,100.641)[]{};
\node[circle,fill=black!100,inner sep=0.05cm](sc23) at (23,135.198)[]{};
\node[circle,fill=black!100,inner sep=0.05cm](sc24) at (24,137.478)[]{};
\node(dc2) at (2,1.185){\pgfuseplotmark{square*}};
\node(dc3) at (3,1.412){\pgfuseplotmark{square*}};
\node(dc4) at (4,1.699){\pgfuseplotmark{square*}};
\node(dc5) at (5,1.974){\pgfuseplotmark{square*}};
\node(dc6) at (6,2.327){\pgfuseplotmark{square*}};
\node(dc7) at (7,2.609){\pgfuseplotmark{square*}};
\node(dc8) at (8,3.004){\pgfuseplotmark{square*}};
\node(dc9) at (9,3.624){\pgfuseplotmark{square*}};
\node(dc10) at (10,3.94){\pgfuseplotmark{square*}};
\node(dc11) at (11,5.362){\pgfuseplotmark{square*}};
\node(dc12) at (12,5.193){\pgfuseplotmark{square*}};
\node(dc13) at (13,6.126){\pgfuseplotmark{square*}};
\node(dc14) at (14,7.82){\pgfuseplotmark{square*}};
\node(dc15) at (15,8.329){\pgfuseplotmark{square*}};
\node(dc16) at (16,11.611){\pgfuseplotmark{square*}};
\node(dc17) at (17,11.067){\pgfuseplotmark{square*}};
\node(dc18) at (18,12.242){\pgfuseplotmark{square*}};
\node(dc19) at (19,13.37){\pgfuseplotmark{square*}};
\node(dc20) at (20,14.161){\pgfuseplotmark{square*}};
\node(dc21) at (21,19.702){\pgfuseplotmark{square*}};
\node(dc22) at (22,20.651){\pgfuseplotmark{square*}};
\node(dc23) at (23,21.907){\pgfuseplotmark{square*}};
\node(dc24) at (24,24.139){\pgfuseplotmark{square*}};
\node(cc2) at (2,1.23){\pgfuseplotmark{triangle*}};
\node(cc3) at (3,1.47){\pgfuseplotmark{triangle*}};
\node(cc4) at (4,1.738){\pgfuseplotmark{triangle*}};
\node(cc5) at (5,2.14){\pgfuseplotmark{triangle*}};
\node(cc6) at (6,2.346){\pgfuseplotmark{triangle*}};
\node(cc7) at (7,2.849){\pgfuseplotmark{triangle*}};
\node(cc8) at (8,3.141){\pgfuseplotmark{triangle*}};
\node(cc9) at (9,3.648){\pgfuseplotmark{triangle*}};
\node(cc10) at (10,4.07){\pgfuseplotmark{triangle*}};
\node(cc11) at (11,4.593){\pgfuseplotmark{triangle*}};
\node(cc12) at (12,5.587){\pgfuseplotmark{triangle*}};
\node(cc13) at (13,6.173){\pgfuseplotmark{triangle*}};
\node(cc14) at (14,7.112){\pgfuseplotmark{triangle*}};
\node(cc15) at (15,8.481){\pgfuseplotmark{triangle*}};
\node(cc16) at (16,8.425){\pgfuseplotmark{triangle*}};
\node(cc17) at (17,12.331){\pgfuseplotmark{triangle*}};
\node(cc18) at (18,13.172){\pgfuseplotmark{triangle*}};
\node(cc19) at (19,14.512){\pgfuseplotmark{triangle*}};
\node(cc20) at (20,14.915){\pgfuseplotmark{triangle*}};
\node(cc21) at (21,15.4){\pgfuseplotmark{triangle*}};
\node(cc22) at (22,17.847){\pgfuseplotmark{triangle*}};
\node(cc23) at (23,21.354){\pgfuseplotmark{triangle*}};
\node(cc24) at (24,23.712){\pgfuseplotmark{triangle*}};
\end{tikzpicture} }\vspace*{-1mm}
\caption{Zoom on CPUtime for articulated families.}
\label{zoom-cpu-articul.fig}\vspace*{-2mm}
\end{figure}

\medskip
We also performed a regression analysis to determine how the CPU time of the PN-synthesis grows
in term of the number of components in an articulation.
As already observed, the results are different for the different families of transition systems we considered,
and it seems difficult to find growth rules compatible with all the figures we obtained.
We then considered several subranges for the number of components.\\
First, up to 10 components, the three families behave about the same.
The best fit corresponds to an exponential growth of the kind $0.9\cdot 1.01^{|S|}$, thus with a base very close to $1$;
a power regression also gives a rather good fit with a formula  of the kind $0.07\cdot x^{0.77}$, lower than linear.\\
From $n=10$ up to $50$, the star-shape family becomes worse than the other two, which remain very similar.
For the first one,  the power regression gives rather good results, with  a degree of $4.25$,
while for the next two families we get a lower degree around $2.6$.\\
Beyond $n=50$, the star-family disappears (as explained before) and the next two behave differently
(with a strange, reproducible, hop for the daisy-flower family around 80 components,
maybe due to a change in the usage of the cache memory) .
For the daisy-family, we get a good  power regression with a degree of $4.6$;
and for the caterpillar-family the power is $4.1$.
We thus have a polynomial-like behaviour by chunks, with an increasing power of $|S|$.

For the factorisation decomposition, where the size of the products grows very rapidly
($|S|^n$ states for a product of $n$ components of the same size $|S|$), the performance is rapidly better
when applying synthesis to the various components, as detailed in~\cite{fact18}.

The previous analysis assumed that we consider synthesis problems where the target is the class of bounded weighted Petri nets.
Thus all the nets we consider are bounded, but we do not fix these bounds beforehand.
If we do, i.e., if we search for safe of $k$-safe solutions (with fixed $k$), then suddenly the problem becomes NP-complete (in the worst case, see for instance~\cite{BBD97,Tredup19}) instead of polynomial\footnote{
note that in this case we have to add constraints of the kind $\forall s\in S:M_s(p)\leq k$
to solve the various separation sub-problems, rendering the linear systems non-homogeneous;
 hence we may not rely to a rational solution to derive one in the integer domain.}.
 Corollaries~\ref{safep.cor} and \ref{safea.cor} show that factorisations and articulations may be used to solve
 $k$-safe synthesis problems.
 That means that the synthesis algorithms we know for generating safe or $k$-safe solutions are exponential in the worst cases.
 If it occurs that P=NP, that means that we will be able to derive polynomial algorithms, but probably with a high polynomial degree.
 Then the gain obtained by a divide and conquer strategy (when it works) may be much larger than the ones we mention above.

\section{Concluding remarks}
\label{concl.sct}

We have developed a theory, algorithms  and experiments around two (families of) pairs of operators acting on
labelled transition systems and Petri nets.
This allows both for structuring large transition systems and improving the efficiency of Petri net syntheses.

We may of course wonder if it will often be the case that transition systems needing a PN-synthesis present such structures.
In fact, it depends where these systems come from. For instance, if they come from the organiser of a tool competition,
needing a set of large positive examples: it may happen that they were obtained by combining smaller components
(especially if the organiser does not know the decomposition algorithms developed in the present paper).

Other possible issues are to examine how this may be specialised for some subclasses of Petri nets
and how these structures behave in the context of  approximate solutions, devised when an exact synthesis is not possible,
in the spirit of the notions and procedures developed in \cite{US-Lata18}.

Finally, other kinds of operator pairs could be searched for,  having interesting decomposition and recomposition procedures,
allowing again to speed up synthesis problems.

\subsection*{Acknowledgements}
The author thanks Eike Best, Uli Schlachter,
as well as  the anonymous referees of the conference papers from which the present work is issued,
for their useful remarks and suggestions.
The remarks of anonymous referees also allowed to improve the present paper.

\end{document}